% Class and preliminary settings
\documentclass[11pt,a4paper,english]{article}
\usepackage{geometry}
\geometry{left=2.5cm, right=2.5cm, top=3cm, bottom=3cm}

% Language, input and hyphenation packages
\usepackage[T1]{fontenc}
\usepackage[utf8]{inputenc}
\usepackage[english]{babel}

% Fictitious text, URL, hyperreference packages and setup
\usepackage{lipsum}
\usepackage{url}
\usepackage[hidelinks]{hyperref}

% List customization package
\usepackage[inline]{enumitem}

% Figures and drawings packages
\usepackage{graphicx}
\usepackage{booktabs}
\usepackage[labelfont=bf]{caption}

% Math packages and settings
\usepackage{amsmath}
\usepackage{mathrsfs}
\usepackage{amssymb}
\usepackage[nointegrals]{wasysym}
\usepackage{amsthm}
\usepackage{mathtools}
\usepackage{arydshln}

% Math new commands and styles
\newtheorem*{theorem*}{Theorem}
\newtheorem{proposition}{Proposition}
\newtheorem{lemma}{Lemma}
\theoremstyle{definition}
\newtheorem{remark}{Remark}
\newtheorem*{comment*}{Comment}

\newcommand{\norm}[1]{\left\lVert#1\right\rVert}

% Bibliography packages and settings
\usepackage[autostyle,italian=guillemets]{csquotes}
\usepackage[backend=biber,style=numeric,sorting=nyt,citestyle=numeric-comp]{biblatex}
\addbibresource{Article_outer_closed_form_R3BP-bib.bib}

\title{Relegation-free closed-form perturbation theory and the domain of secular 
motions in the Restricted 3-Body Problem}
\author{Mattia Rossi and Christos Efthymiopoulos\\
	\small{Università degli Studi di Padova}\\
	\small{Dipartimento di Matematica ``Tullio Levi-Civita''}\\
	\small{Via Trieste, 63 - 35121 Padova, Italy}\\
	\small{mrossi@math.unipd.it, cefthym@math.unipd.it}}
\date{\today}

\begin{document}

% Title
\maketitle

% Abstract
\begin{abstract}
We propose a closed-form (i.e. without expansion in the orbital eccentricities) scheme for computations in perturbation theory in the restricted three-body problem (R3BP) when the massless particle is in an orbit exterior to the one of the primary perturber.  Starting with a multipole expansion of the barycentric (Jacobi-reduced) Hamiltonian, we carry out a sequence of normalizations in Delaunay variables by Lie series, leading to a secular Hamiltonian model without use of relegation. To this end, we introduce a book-keeping analogous to the one proposed in \cite{cavallari2022closed} for test particle orbits interior to the one of the primary perturber, but here adapted, instead, to the case of exterior orbits.  We give numerical examples of the performance of the method in both the planar circular and the spatial elliptic restricted three-body problem, for parameters pertinent to the Sun-Jupiter system. In particular, we demonstrate the method's accuracy in terms of reproducibility of the orbital elements' variations far from mean-motion resonances. As a basic outcome of the method, we show how, using as criterion the size of the series' remainder, we reach to obtain an accurate semi-analytical estimate of the boundary (in the space of orbital elements) where the secular Hamiltonian model arrived at after eliminating the particle's fast degree of freedom provides a valid approximation of the true dynamics.\\

\noindent\textbf{Keywords:} Celestial mechanics -- Astrodynamics -- R3BP -- Closed-form -- No relegation -- Secular motion\\
\end{abstract}

% Section 1: Introduction
\section{Introduction}
\label{sec:intro}
As opposed to the usual (Laplace-Lagrange) theory, closed-form perturbation theory \cite{palacian2002normal} provides a framework for series calculations in perturbed Keplerian problems without expansions in powers of the bodies' orbital eccentricities. This is mainly motivated by the necessity to construct secular models for sufficiently eccentric orbits, like those of many asteroids, in our solar system, or the planets in extrasolar planetary systems. 

The efficiency of the usual series methods of expansion in the orbital eccentricities is limited by the fact that the inversion of Kepler's equation in powers of the eccentricity converges only up to the so-called Laplace limit $e_L\approx0.66274$ \cite{finch2003mathematical}. Generally, such convergence slows down way before this value (around $e\sim 0.3-0.4$ in many applications). In order to address this issue, closed form perturbation theory aims at solving in `closed-form' the homological equation by which the Lie generating function is computed at every perturbative step (see for example \cite{deprit1969canonical,efthymiopoulos2011canonical}). The process is far from being priceless: a major obstruction appears when the kernel of the homological equations contains addenda beyond the Keplerian terms. The most common such addendum (\cite{palacian2002normal}) is the centrifugal term $-\nu H$, where $\nu$ is the angular frequency in a frame co-rotating with the primary perturber, and $H$ is the Delaunay action equal to the particle's angular momentum in the direction of the axis of rotation. In the case of a planet's orbiter, $\nu$ is equal to the planet's rotation frequency, and the problem appears for all non-axisymmetric terms (tesseral harmonics) of the planet's multipole potential. In the R3BP, instead, $\nu$ represents the mean motion of the primary perturber (e.g. Jupiter in the Sun-Jupiter system), while the problem appears in a similar way after introducing a multipole expansion of the disturbing function in the particle's Hamiltonian. 

An algorithm to overcome the above issue, called the \textit{relegation algorithm}, has been proposed in works by Deprit, Palacia\'n and collaborators \cite{subiela1992teoria, deprit2001relegation,lara2013averaging,ceccaroni2014analytical,sansottera2017rigorous}. Briefly, given a quasi-integrable Hamiltonian $H=H_0+\varepsilon H_1$, where $\varepsilon$ is a small parameter, suppose that $H_0=H_0^{\prime}+H_0^{\prime\prime}$, where, in a domain in phase space we have that $H_0^{\prime}$ yields the dominant contribution to the Hamiltonian flow of $H_0$ versus the $H_0^{\prime\prime}$ term. In usual perturbation theory, we seek to partly normalize the perturbation $H_1$ via a sequence of canonical transformations defined by generating functions $\chi^{(r)}$, $r=1,2,\ldots$ satisfying a homological equation of the form $\{H_0,\chi^{(r)}\}+h_1^{(r)}=0$, where $\{\cdot,\cdot\}$ denotes the Poisson bracket between two functions of the canonical variables and $h_1^{(r)}$ is a term in the Hamiltonian to be normalized. In the relegation technique, we use instead the equation $\{H_0',\chi^{(r)}\}+h_1^{(r)}=0$, i.e., letting only the dominant function $H_0'$ in the kernel of the homological equation. Such a choice stems mostly from motives of algorithmic convenience. For example, identifying $H_0'$ with the Keplerian term (when $\nu$ is small) leads to a homological equation that can be solved in closed form (we set, instead, $H_0'=-\nu H$ when $\nu$ is large).  However, all Poisson brackets of $\chi^{(r)}$ with the part $H_0^{\prime\prime}$ left out of the kernel lead to terms which need to be `relegated', i.e., pushed to normalization in subsequent steps. For reasons explained in detail in \cite{segerman2000analytical}, only a 
finite number or relegation steps can be performed before reaching a point beyond which the scheme generates divergent sequences of terms (see also \cite{sansottera2017rigorous}). This implies that the process necessarily stops after some steps, leading to a finite, albeit possibly quite small remainder. 

Relegation is a technique particularly suitable to the limiting situation of a strongly hierarchical problem, when the integrable part $H_0$ depends on a frequency vector involving $n$ frequencies $\omega=(\omega_1,\ldots,\omega_n)$ out of which one, say $\omega_i$ for some $i$ with $1\leq i\leq n$ is significantly larger in absolute value than the rest. In particular, the harmonics $\cos(k\cdot\varphi)$ in the Hamiltonian whose normalization can be `relegated' should satisfy $|k_i\omega_i|\gg|k_j\omega_j|$, $j=1,\ldots,n$, $j\neq i$, for every integer $k_i,k_j\in\mathbb{Z}\setminus\{0\}$ (assuming also the non-resonant condition $k\cdot\omega\neq0$, $k=(k_1,\ldots,k_n)$). For example, as explained in \cite{segerman2000analytical} in the simple case with $n=2$ and $\omega_2\gg\omega_1$, the generating function $\chi^{(N)}$ produced after $N$ relegation steps contains terms with coefficients growing as a geometric sequence with ratio $k_1\omega_1/k_2\omega_2$. Thus, relagation is limited to those terms for which the above ratio is smaller than unity. This includes most harmonics of low Fourier order in the Hamiltonian perturbation when $\omega_2\gg\omega_1$, but only few when the two frequencies become comparable in size. Hence, by construction, relegation has limited applicability in this latter, non-hierarchical, case. 

Variants of the relegation technique have been discussed in literature to address perturbed Keplerian problems in which the gravitational potential is due to an extended body expanded in spherical harmonics (e.g. \cite{lara2013averaging, mahajan2018exact}). To address the non-hierarchical case, a techique similar to the one of the present paper is discussed in \cite{lara2013averaging}, referring to the averaging of the tesseral harmonics in the case of the Earth's artificial satellites. In the case of the R3BP, instead, Cavallari and Efthymiopoulos \cite{cavallari2022closed} discuss a relegation-free algorithm for the elimination of short-period terms in the particle's Hamiltonian, when the orbit of the particle (e.g. an asteroid) is totally interior to the orbit of the primary perturber (e.g. Jupiter). We are aware of no relegation-free algorithm proposed in literature which addresses, instead, the case when the particle's orbit is exterior to the orbit of the primary perturber. Providing such an algorithm, discussing some of its important differences with past-proposed algorithms, as well as checking its limits of applicability, constitutes the primary goal of our present paper.\\  

The R3BP is defined by the motion of a body $\mathcal{P}$ of negligible mass in the gravitational field of two massive bodies $\mathcal{P}_0$ (the primary or central body) and $\mathcal{P}_1$ (the secondary or primary perturber), which perform a motion $r_1(t)$ either elliptic in the more general version (ER3BP) or circular (CR3BP). The starting point for our analysis in the sequel is the Hamiltonian of the model, obtained after reduction via Jacobi coordinates $(R,P)$.\footnote{In the R3BP problem the Jacobi transformation is implemented when $\norm{R}>\norm{r_1}$. }. Expressing time through the secondary's mean anomaly $M_1=n_1t$, where $n_1$ is the mean motion of the secondary, and canonically conjugating $M_1$ with a dummy action variable $J_1$ allows to express the Hamiltonian as
\begin{equation}
\label{eqn:ham}
\mathcal{H}(R,M_1,P,J_1)=\frac{\norm{P}^2}{2}-\frac{\mathcal{G}m_0}{\norm{R+\mu r_1(M_1)}}-\frac{\mathcal{G}m_1}{\norm{R-(1-\mu)r_1(M_1)}}+n_1J_1\;,
\end{equation}
where $\mathcal{G}$ is the gravitational constant and
\begin{equation*}
\mu=\dfrac{m_1}{m_0+m_1}\in(0,1/2]\;
\end{equation*} 
is the mass parameter;
\begin{equation}
\label{eqn:r1}
r_1(M_1)=a_1\left(\cos E_1(M_1)-e_1,\sqrt{1-e_1^2}\sin E_1(M_1),0\right)
\end{equation}
is the elliptic revolution of $\mathcal{P}_0-\mathcal{P}_1$ around their barycenter with eccentricity $e_1$ and semi-major axis $a_1$, in which the dependence of the system's eccentric anomaly $E_1\in\mathbb{T}=\mathbb{R}/(2\pi\mathbb{Z})$ on the mean anomaly $M_1\in\mathbb{T}$ is given through Kepler's equation according to standard two-body problem setting; $(R=(X,Y,Z), P=(P_X,P_Y,P_Z))\in T^*(\mathbb{R}^3\setminus\{-\mu r_1,(1-\mu)r_1\})$ is the position-momentum couple of $\mathcal{P}$ and the phase space is endowed with standard symplectic form $dP_X\wedge dX+dP_Y\wedge dY+dP_Z\wedge dZ+dJ_1\wedge dM_1$.\\
We make use then of Delaunay elements $(\ell,g,h,L,G,H)$, defined by 
\begin{align}
\label{eqn:Del}
L&=\sqrt{\mathcal{G}m_0a}\;, & \ell&=M\;, \nonumber \\
G&=L\sqrt{1-e^2}\;, & g&=\omega\;,\\
H&=G\cos i\;, & h&=\Omega\;, \nonumber
\end{align}
where $a,e,i,M,\Omega,\omega$ stand for the semi-major axis, the eccentricity, the inclination, the mean anomaly, the longitude of the ascending node, the argument of pericenter of the particle. 

A key ingredient of the method proposed below is the following: similarly as in \cite{cavallari2022closed}, we introduce a book-keeping symbol $\sigma$ with numerical value equal to $1$, whose role is to organize the perturbative scheme so as to successively normalize terms of similar order of smallness, treating together all small quantities of the problem, i.e.,   
\begin{itemize}
	\item[--] the eccentricities $e$, $e_1$ (when $e_1\neq0$),
	\item[--] the mass ratio $\mu$,
	\item[--] the semi-major axis fluctuation $\delta L$ around the mean $L_*$ for a particular particle trajectory.
\end{itemize}
The book-keeping symbol acts by assigning powers $\sigma^1$ and $\sigma^{\nu_1}$, $\sigma^{\nu}$, $\sigma^\nu$ respectively, for non-zero natural numbers $\nu$, $\nu_1$ defined below, to all the terms in the original Hamiltonian as well as in the Hamiltonian produced after every normalization step. Given this baseline, we arrive (in Section \ref{sec:theory}) to the following result: we demonstrate that, for $k_{\mu},k_{\text{mp}}\in\mathbb{N}\setminus\{0\}$ with $k_{\mu}>1$, the combination of expansions of \eqref{eqn:ham} up to $\mu^{k_{\mu}}$ and $(\norm{r_1}/\norm{R})^{k_{\text{mp}}}$ is canonically conjugate by $\nu(k_{\mu}-1)$ near-identity transformations to a secular model, obtained as a normal form with respect to the fast angles $\ell,M_1$
\begin{equation}
\label{eqn:hamnf}
\mathscr{H}(\ell,g,h,M_1,\delta L,G,H,J_1)=\mathscr{H}_0(g,h,\delta L,G,H,J_1)+\mathscr{R}(\ell,g,h,M_1,\delta L,G,H)\;,
\end{equation}
with
\begin{equation}
\label{eqn:H0}
\mathscr{H}_0=n_*\delta L+n_1J_1+\sum_{l=\nu}^{\nu k_{\mu}-1}\sum_{p\in\mathbb{Z}^2}c_{l,p}(\delta L,e,i;\mu,L_*,a_1,e_1)\cos(p_1g+p_2h)\sigma^{l}\;,
\end{equation}
\begin{multline}
\label{eqn:R}
\mathscr{R}=\sum_{s\in\mathbb{Z}^4}d_{\nu k_{\mu},s}(E_1,\delta L,e,i;\mu,L_*,a_1,e_1)\cos(s_1f+s_2g+s_3h+s_4E_1)\sigma^{\nu k_{\mu}}\\
+\mathcal{O}\left(\sigma^{\nu k_{\mu}+1};\left(\frac{\norm{r_1}}{\norm{R}}\right)^{k_{\text{mp}}+1}\right)\;.
\end{multline}
The dependencies $f=f(\ell,\delta L,G)$ for the true anomaly, $e=e(\delta L,G)$ and $i=i(G,H)$ are implied in all the above expressions; $c_{l,p},d_{\nu k_{\mu},s}$ are real coefficients. A crucial point is the way by which the positive integers $\nu=\nu(e_*,\mu)\ge1$, $\nu_1=\nu_1(e_*,e_1)\ge1$ are chosen. As detailed below, these integers, which regulate the book-keeping scheme, are suitably tuned on the basis of a selected reference value $e_*\in(0,1)$: 
\begin{equation}
\label{eqn:nun1}
\nu=\left\lceil\frac{\log_{10}\mu}{\log_{10}e_*}\right\rceil\;,\quad\quad\nu_1=\left\lceil\frac{\log_{10}e_1}{\log_{10}e_*}\right\rceil\;,
\end{equation}
where $\lceil\cdot\rceil$ is the ceiling function. The normalizing scheme leading to \eqref{eqn:hamnf} is local: knowing that the semi-major axis is preserved under the flow of the (secular) normal form, we introduce the splitting $L=L_*+\delta L$, where $L_*=\sqrt{\mathcal{G}m_0a_*}\gg\delta L$, $n_*=\sqrt{\mathcal{G}m_0}a_*^{-3/2}$ is a targeted reference value for the semi-major axis $a_*$, and expand the Hamiltonian in powers of $\delta L$, rendering $\delta L$ the new action variable canonically conjugated to the particle's mean anomaly. 

Given the above, the normalization algorithm provides a sequence of Lie generating functions $\chi^{(j)}_{\nu+j-1}=\mathcal{O}(\sigma^{\nu+j-1})$, $j=1,\ldots,\nu(k_{\mu}-1)$, which yields the Lie canonical transformation allowing to recursively normalize all terms depending on the angles $f$ and $E_1$ in the Hamiltonian. The normalizing trasformations are possible to define for values of the frequencies $n_*$ (mean motion of the particle at the semi-major axis $a_*$) and $n_1$ far from mean-motion resonances (see Remark \ref{rem:Dioph}). Furthermore, the generating functions are computed as solutions of a homological equation of the form 
\begin{equation}
\label{eqn:homeq}
\{\mathscr{Z}_0,\chi_{\nu+j-1}^{(j)}\}+\mathscr{R}_{\nu+j-1,\nu+j-1}^{(j-1)}=\mathcal{O}(\sigma^{\nu+j-1})\;,
\end{equation}
where $\mathscr{Z}_0=n_*\delta L+n_1J_1$ and $\mathscr{R}_{\nu+j-1,\nu+j-1}^{(j-1)}\sim\sigma^{\nu+j-1}$ collects the trigonometric monomials of $\mathcal{O}(\sigma^{\nu+j-1})$ depending on at least one of the two anomalies. The key to obtaining a closed-form solution for \eqref{eqn:homeq} is, precisely, the appropriate choice of a $\mathcal{O}(\sigma^{\nu+j-1})$ remainder left in the second hand of the equation. In words, we do not seek for an exact cancellation of the terms $\mathscr{R}_{\nu+j-1,\nu+j-1}^{(j-1)}$, but only for an approximate cancellation, leading to a remainder, which, however, is of higher order in book-keeping, and, hence, possible to reduce at subsequent steps. 

As discussed in Section \ref{sec:app}, a relevant outcome of the analysis of the behavior of the remainder obtained by the above method stems from an estimation of the optimal number of normalization steps $j_{opt}$, where the remainder becomes of order $\nu+j_{opt}-1$ in the book-keeping parameter, with $j_{opt}\le\nu(k_{\mu}-1)$. 
The value of $j_{opt}$ is defined as the one where the error bound $\mathscr{E}^{(j)}(a_*,e_*)=\sum_{\nu+j\le l\le\nu k_{\mu},s}|d_{l,s}^{(j)}|\ge\|\mathscr{R}^{(j)}_{\nu+j}\|_{\infty}=\sup|\mathscr{R}^{(j)}_{\nu+j}|$ becomes minimum, with $\mathscr{R}^{(j)}_{\nu+j}=\mathcal{O}(\sigma^{\nu+j})$ and $d_{l,s}^{(j)}$ as in $\eqref{eqn:R}$ after $j$ normalization steps. As typical in perturbation theory, the value of $j_{opt}$ depends on the chosen reference values $(a_*,e_*)$. With the present method one can then obtain a map of the size of the optimal remainder as a function of $(a_*,e_*)$ in the semi-plane $a>a_1$. Using this information, we compute the limiting locus uniting all points in $(a_*,e_*)$ such that the normal form computation yields no improvement with increasing number of normalization steps, i.e., where $j_{opt}=1$. Comparing with numerical stability maps obtained with the Fast Lyapunov Indicator (FLI) \cite{lega2016theory}, one sees that, the limiting locus found semi-analytically essentially coincides with the numerical (FLI map) limit where no harmonic in the Hamiltonian associated with one of the exterior mean-motion resonances affects the dynamics. As a consequence, all motions in the sub-domain of the plane $(a_*,e_*)$ below the limiting locus are stable in the \textit{secular} sense, i.e., protected against instabilities caused by short-period resonant effects. For this reason, we identify this locus as the border of the \textit{domain of secular motions}, and substantiate the fact that its semi-analytical computation (through the normal forms) yields results in precise agreement with those found by the heuristic definition of the same border via the fully numerical (FLI) computation of stability maps.\\   

The paper is structured as follows. Section \ref{sec:theory} presents step-by-step the algorithm that gives rise to \eqref{eqn:H0} and \eqref{eqn:R}, supplemented with the formulas for the Poisson algebra in Keplerian elements used in all closed-form computations. Section \ref{sec:app} is devoted to a numerical investigation of the method's accuracy for an asteroid in the Sun-Jupiter system, first in the spatial ER3BP, and then in the planar CR3BP; in the latter case, the computations are short enough to allow for a specification of the optimal normalization order in a grid of values in the $(a_*,e_*)$ plane, leading to the semi-analytical determination of the border of the domain of secular motions. Section \ref{sec:conc} summarizes the basic conclusions of the present study and gives some relevant comments for future work.\\

% Section 2: Theory
%%%%%%%%%%%%%%%%%%%%%%%%%%%%%%%%%%%%%%%%%%%%%%%%%%%%%%%%%%%%%%%%%%%%%%%%%%%%%%%%%%%%%%%%%%%%%%%%%%
\section{The closed-form method for the outermost R3BP}
\label{sec:theory}
%%%%%%%%%%%%%%%%%%%%%%%%%%%%%%%%%%%%%%%%%%%%%%%%%%%%%%%%%%%%%%%%%%%%%%%%%%%%%%%%%%%%%%%%%%%%%%%%%%

%============================================================
\subsection{Multipole expansion of the perturbation}
\label{subsec:mpdist}
%============================================================

Referring to section \ref{sec:intro}, let $\mathcal{H}$ be given in barycentric Cartesian coordinates as in \eqref{eqn:ham}:
\begin{equation}
\label{eqn:hamdistf}
\mathcal{H}=\frac{\norm{P}^2}{2}+n_1J_1-\mathcal{G}m_0\mathcal{R}\;,
\end{equation}
Assuming $\norm{r_1}/\norm{R}<1$, we carry out a multipole expansion of the function $\mathcal{R}(R,M_1)$ in powers of the ratio $\norm{r_1}/\norm{R}<1$:
\begin{align}
\label{eqn:distf}
\begin{split}
\mathcal{R}&=\frac{1}{\norm{R+\mu r_1}}+\frac{\mu}{1-\mu}\frac{1}{\norm{R+(1-\mu)r_1}}\\
&=\frac{1}{\norm{R}}\large\Bigg(\sum_{l=0}^{\infty}\binom{-1/2}{l}\left(\frac{2\mu r_1\cdot R}{\norm{R}^2}+\mu^2\left(\frac{\norm{r_1}}{\norm{R}}\right)^2\right)^l\\
&\quad+\frac{\mu}{1-\mu}\sum_{l=0}^{\infty}\binom{-1/2}{l}\left(-\frac{2(1-\mu)r_1\cdot R}{\norm{R}^2}+(1-\mu)^2\left(\frac{\norm{r_1}}{\norm{R}}\right)^2\right)^l\large\Bigg)\\
&=\frac{1}{1-\mu}\frac{1}{\norm{R}}+\mathcal{O}\left(\left(\frac{\norm{r_1}}{\norm{R}}\right)^2\right)\;.
\end{split}
\end{align}
where, for $\beta\in\mathbb{R}$
\begin{equation*}
\binom{\beta}{l}=\frac{\beta(\beta-1)\cdots(\beta-l+1)}{l!}
\end{equation*}
indicates the generalized binomial coefficient (equal to $1$ for $l=0$).\\
%--------------------------------------------------------------------------------
\begin{figure}
	\centering
	\includegraphics[scale=1.05]{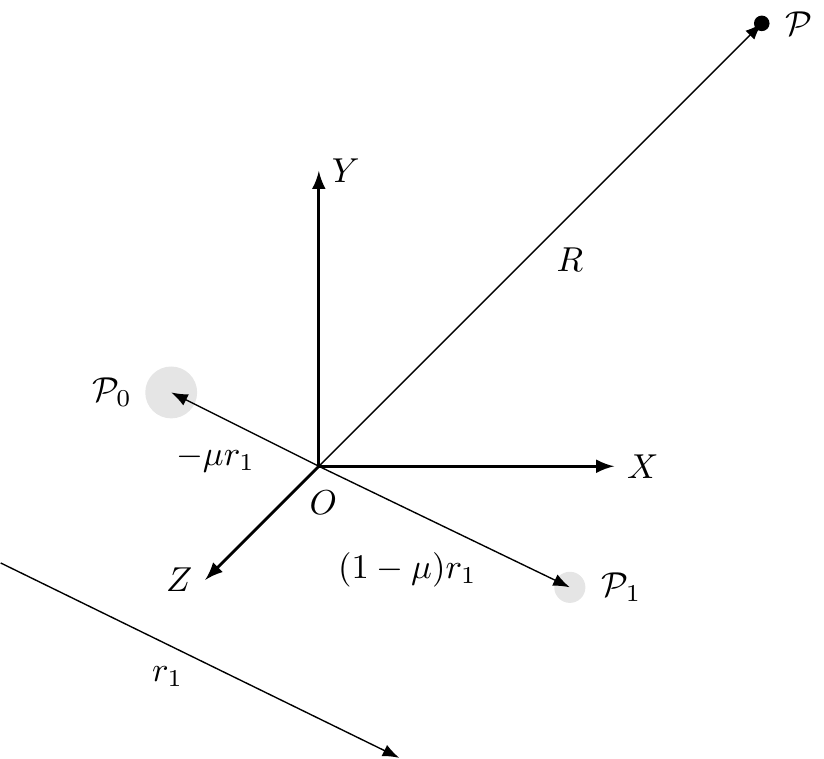}
	\caption{Representation of the R3BP in the barycentric frame (or equivalently in Jacobi variables) with $\norm{R}>\norm{r_1}$.}
	\label{fig:masses}
\end{figure}
%----------------------------------------------------------------------------------------------------------

\begin{remark}
	\label{rem:noindirect}
	For $l=1$ in Eq.(\ref{eqn:distf}) the coefficients of the dipole term $(r_1\cdot R)/\norm{R}^3$ in the two sums in the r.h.s. of the equation cancel each other exactly. Thus, no dipole term appears in the disturbing function. This is a consequence of the choice of Jacobi coordinates.\\
\end{remark}

%==========================================================
\subsection{Canonical form of the Hamiltonian}
\label{subsec:Del}
%==========================================================
Performing an extra series expansion in powers of $\mu<1$ yields the standard nearly-integrable form
\begin{equation}
\label{eqn:hammump}
\mathcal{H}=\mathcal{H}_0+\mu\mathcal{H}_1\;,
\end{equation}
where the Keplerian part reads
\begin{equation}
\label{eqn:H0Cart}
\mathcal{H}_0=\frac{\norm{P}^2}{2}-\frac{\mathcal{G}m_0}{\norm{R}}+n_1J_1\;
\end{equation}
and the disturbing function becomes
\begin{multline}
\label{eqn:H1Cart}
\mathcal{H}_1=-\frac{\mathcal{G}m_0}{\norm{R}}\large\Bigg(\sum_{l=0}^{\infty}\mu^l+\sum_{l=1}^{\infty}\mu^{l-1}\binom{-1/2}{l}\left(\frac{2r_1\cdot R}{\norm{R}^2}+\mu\left(\frac{\norm{r_1}}{\norm{R}}\right)^2\right)^l\\
+\sum_{l=1}^{\infty}(1-\mu)^{l-1}\binom{-1/2}{l}\left(-\frac{2 r_1\cdot R}{\norm{R}^2}+(1-\mu)\left(\frac{\norm{r_1}}{\norm{R}}\right)^2\right)^l\large\Bigg)\;.
\end{multline}\\
We now move to Delaunay action-angle variables \eqref{eqn:Del} by replacing into \eqref{eqn:hammump} the relationships
\begin{equation}
\label{eqn:canH0}
\mathcal{H}_0=-\frac{\mathcal{G}m_0}{2a}+n_1J_1\;,
\end{equation}
\begin{equation}
\label{eqn:normR}
\norm{R}=\frac{a(1-e^2)}{1+e\cos f}\;,
\end{equation}
\begin{multline}
\label{eqn:Rdotr1}
r_1\cdot R=a_1\norm{R}\Big((\cos E_1-e_1)\left(\cos h\cos(g+f)-\sin h\sin(g+f)\cos i\right)\\
+\sqrt{1-e_1^2}\sin E_1\left(\sin h\cos(g+f)+\cos h\sin(g+f)\cos i\right)\Big)\;
\end{multline}
as well as \eqref{eqn:r1} for the vector $r_1$. We get
\begin{equation}
\label{eqn:Helem}
\mathcal{H}=-\frac{\mathcal{G}m_0}{2a}+n_1J_1+\mu\mathcal{H}_1(f,g,h,E_1,a,e,i;\mu,a_1,e_1)\;.
\end{equation}

\begin{remark}
	\label{rem:cosfnum}
	Only the square of the norm $\norm{r_1}^2=r_1\cdot r_1$ is required in Eq.(\ref{eqn:H1Cart}), while the norm $\norm{R}$ appears only in the denominator of the above equation, in powers equal to or higher than quadratic. Then equations \eqref{eqn:normR} and \eqref{eqn:r1}, respectively dependent on $f$ and $E_1$, lead to a representation of the disturbing function as a sum of trigonometric polynomials depending on harmonics of the form $\cos(s_1f + s_2g+s_3h+s_4 E_1)$. This is a key ingredient of the closed-form method, i.e., working with the angles $f$ and $E_1$, instead of the mean anomalies $M, M_1$, no series reversion of Kepler's equation is used throughout the whole perturbative scheme.\\
\end{remark}

In order to avoid relegation, our method discussed below works locally, by constructing a model for the secular Hamiltonian valid for a particle's semi-major axis varying as $a=a_*+\delta a(t)$, i.e., by a small quantity $\delta L$ around some reference value $a_*$. By standard secular theory, we have the estimate $\delta a=\mathcal{O}(\mu)$ far from mean-motion resonances. Formally, introducing the new canonical variable $\delta L$ as
\begin{equation}
\label{eqn:splitL}
L=L_*+\delta L=\sqrt{\mathcal{G}m_0a_*}+\frac12\sqrt{\frac{\mathcal{G}m_0}{a_*}}\delta a+\mathcal{O}(\delta a^2)\;.
\end{equation}
and expanding the Hamiltonian in powers of the quantity $\delta L$ around $L_*$, we obtain
\begin{align}
\begin{split}
\label{eqn:hamDelexp}
\mathcal{H}&=-\frac{\mathcal{G}^2m_0^2}{2L_*^2}\sum_{l=0}^{\infty}\binom{-2}{l}\left(\frac{\delta L}{L_*}\right)^{l}+n_1J_1+\mu\sum_{l=0}^{\infty}\frac{1}{l!}\left.\frac{\partial^l\mathcal{H}_1}{\partial L^l}\right\vert_{L=L_*}\delta L^l\\
&=n_*\delta L+n_1J_1+\mu\left(\left.\mathcal{H}_1\right\vert_{\delta L=0,\text{ }\mu=0}+\left.\frac{\partial\mathcal{H}_1}{\partial \delta L}\right\vert_{\delta L=0,\text{ }\mu=0}\delta L\right)+\mathcal{O}(\mu^2,\delta L^2)\;,
\end{split}
\end{align}
where a constant term $-\mathcal{G}^2m_0^2/(2L_*^2)$ was dropped from the expansion. The constant $n_*=\mathcal{G}^2m_0^2/L_*^3$ is equal to the particle's mean motion under Keplerian orbit at the semi-major axis $a_*$.

\begin{remark}
	\label{rem:Dioph}
	The choice of the reference value $a_*$ determines the kind of divisors appearing in the normalization procedure. In the present paper, we deal only with the `non-resonant' case, in which the frequencies $n_*$ and $n_1$ satisfy no-commensurability condition. For example, to be far from any resonance we may require that $n_*$ and $n_1$ satisfy a diophantine condition 
	\begin{equation}
	\label{eqn:Dioph}
	|k_*n_*+k_1n_1|>\frac{\gamma}{|k|^{\tau}}\;,\quad\forall k=(k_*,k_1)\in\mathbb{Z}^2\setminus\{0\}
	\end{equation}
	with $|k|=|k_*|+|k_1|$ and some suitable $\gamma>0$, $\tau>1$. \\
	However, the algorithm presented below can be readily extended to cases of mean-motion resonance. We leave the details for a future work, noting only that in resonant cases we have the estimate $\delta L=\mathcal{O}(\mu^{1/2})$, instead of $\mathcal{O}(\mu)$. The effect of approaching close to a mean-motion resonance with the present series is seen, instead, as a rise in the value of the series' remainder, caused by (non-zero) small divisors in the series (as visible, for example, in Fig. \ref{fig:err2d} discussed in section \ref{sec:app} below).\\
\end{remark}

%===================================================================
\subsection{Poisson structure and book-keeping}
\label{subsec:Poiss}
%===================================================================

%----------------------------------------------
\subsubsection{Poisson bracket formulas}
\label{subsubsec:poissform}
%----------------------------------------------
All steps of closed-form perturbation theory involve Poisson brackets between differentiable functions of the form $F(\ell,g,h,M_1,\delta L,G,H,J_1)\in\mathcal{C}^{\infty}(\mathbb{T}^4\times D)$, $D\subset\mathbb{R}^4$ being an open set, whose dependence on the variables $\ell$, $M_1$, $G$ and $H$ is given in implicit form through the functions $f(\ell,\delta L,G)$, $E_1(M_1,e(\delta L,G))$, $e(\delta L,G)$, $\iota_c(G,H)=\cos i(G,H)$, $\iota_s(G,H)=\sin i(G,H)$, $\eta(\delta L,G)=\sqrt{1-e(\delta L,G)^2}$, $\norm{r_1}(M_1)=a_1(1-e_1\cos E_1(M_1))$, and $\phi_1(M_1)=E_1(M_1,e(\delta L,G))-M_1$. The Poisson bracket between two functions $F_1,F_2$ of the above form is computed by the formulas
\begin{align}
	\begin{split}
	\label{eqn:poisson}
\left\{F_1,F_2\right\}&=
{\text{d}F_1\over\text{d}\ell}{\text{d}F_2\over\text{d}\delta L}  
+
{\text{d}F_1\over\text{d}g}{\text{d}F_2\over\text{d}G}  
+
{\text{d}F_1\over\text{d}h}{\text{d}F_2\over\text{d}H}  
+
{\text{d}F_1\over\text{d}M_1}{\text{d}F_2\over\text{d}J_1}  \\
&-
{\text{d}F_1\over\text{d}\delta L}{\text{d}F_2\over\text{d}\ell} 
-
{\text{d}F_1\over\text{d}G}{\text{d}F_2\over\text{d}g}  
-
{\text{d}F_1\over\text{d}h}{\text{d}F_2\over\text{d}H}  
-
{\text{d}F_1\over\text{d}J_1}{\text{d}F_2\over\text{d}M_1} 
\end{split}
\end{align}
implemented to the closed-form version of the functions $F_1,F_2$. The closed-form version of a function $F$ is defined as:
\begin{equation}\label{eqn:funcclf}
F=F(f,g,h,E_1,\delta L,e,\eta,\iota_c,\iota_s,J_1)\;. 
\end{equation}
The derivatives in the canonical variables of a function $F$ as in Eq.(\ref{eqn:poisson}) are computed by the chain rule formulas
\begin{gather}
\label{eqn:dFdl}
\frac{\text{d}F}{\text{d}\ell}=\frac{\partial F}{\partial f}\frac{\partial f}{\partial\ell}\;,\\
\label{eqn:dFdg}
\frac{\text{d}F}{\text{d}g}=\frac{\partial F}{\partial g}\;,\\
\label{eqn:dFdh}
\frac{\text{d}F}{\text{d}h}=\frac{\partial F}{\partial h}\;,\\
\label{eqn:dFdM1}
\frac{\text{d}F}{\text{d}M_1}=\left(\frac{\partial F}{\partial E_1}+\frac{\partial F}{\partial\norm{r_1}}\frac{\text{d}\norm{r_1}}{\text{d}E_1}+\frac{\partial F}{\partial\phi_1}\right)\frac{\text{d}E_1}{\text{d}M_1}-\frac{\partial F}{\partial\phi_1}\;,
\end{gather}
\begin{gather}
\label{eqn:dFddeltaL}
\frac{\text{d}F}{\text{d}\delta L}=\frac{\partial F}{\partial f}\frac{\partial f}{\partial\delta L}+\frac{\partial F}{\partial\delta L}+\frac{\partial F}{\partial e}\frac{\partial e}{\partial\delta L}+\frac{\partial F}{\partial\eta}\frac{\partial\eta}{\partial\delta L}\;,\\
\label{eqn:dFdG}
\frac{\text{d}F}{\text{d}G}=\frac{\partial F}{\partial f}\frac{\partial f}{\partial G}+\frac{\partial F}{\partial e}\frac{\partial e}{\partial G}+\frac{\partial F}{\partial\eta}\frac{\partial\eta}{\partial G}+\frac{\partial F}{\partial\iota_c}\frac{\partial\iota_c}{\partial G}+\frac{\partial F}{\partial\iota_s}\frac{\partial\iota_s}{\partial G}\;,\\
\label{eqn:dFdH}
\frac{\text{d}F}{\text{d}H}=\frac{\partial F}{\partial \iota_c}\frac{\partial \iota_c}{\partial H}+\frac{\partial F}{\partial\iota_s}\frac{\partial\iota_s}{\partial H}\;,\\
\label{eqn:dFdJ1}
\frac{\text{d}F}{\text{d}J_1}=\frac{\partial F}{\partial J_1}\;,
\end{gather}
where 
\begin{equation}
\label{eqn:dfdl}
\frac{\partial f}{\partial\ell}=\frac{(1+e\cos f)^2}{\eta^3}\;,
\end{equation}
\begin{gather}
\label{eqn:dr1dE1}
\frac{\text{d}\norm{r_1}}{\text{d}E_1}=a_1e_1\sin E_1\;,\\
\label{eqn:dE1dM1}
\frac{\text{d}E_1}{\text{d}M_1}=\frac{a_1}{\norm{r_1}}\;,
\end{gather}
\begin{gather}
\label{eqn:dfddeltaL}
\frac{\partial f}{\partial\delta L}=\frac1L\left(\frac{2\sin f}{e}+\frac{\sin(2f)}{2}\right) =\frac{1}{L_*}\left(\frac{2\sin f}{e}+\frac{\sin(2f)}{2}\right)\left(1-\frac{\delta L}{L_*}\right)+\mathcal{O}(\delta L^2)\;,\\
\label{eqn:deddeltaL}
\frac{\partial e}{\partial\delta L}=\frac{\eta^2}{eL}=\frac{\eta^2}{eL_*}\left(1-\frac{\delta L}{L_*}\right)+\mathcal{O}(\delta L^2)\;,\\
\label{eqn:detaddeltaL}
\frac{\partial \eta}{\partial\delta L}=-\frac{\eta}{L}=-\frac{\eta}{L_*}\left(1-\frac{\delta L}{L_*}\right)+\mathcal{O}(\delta L^2)\;,
\end{gather}
\begin{gather}
\label{eqn:dfdG}
\begin{align}
\frac{\partial f}{\partial G}&=-\frac{1}{\eta L}\left(\frac{2\sin f }{e}+\frac{\sin(2f)}{2}\right)\nonumber\\
&=-\frac{1}{\eta L_*}\left(\frac{2\sin f}{e}+\frac{\sin(2f)}{2}\right)\left(1-\frac{\delta L}{L_*}\right)+\mathcal{O}(\delta L^2)\;,
\end{align}\\
\label{eqn:dedG}
\frac{\partial e}{\partial G}=-\frac{\eta}{eL}=-\frac{\eta}{eL_*}\left(1-\frac{\delta L}{L_*}\right)+\mathcal{O}(\delta L^2)\;,\\
\label{eqn:detadG}
\frac{\partial\eta}{\partial G}=\frac1L=\frac{1}{L_*}\left(1-\frac{\delta L}{L_*}\right)+\mathcal{O}(\delta L^2)\;,\\
\label{eqn:diotacdG}
\frac{\partial\iota_c}{\partial G}=-\frac{\iota_c}{\eta L}=-\frac{\iota_c}{\eta L_*}\left(1-\frac{\delta L}{L_*}\right)+\mathcal{O}(\delta L^2)\;,\\
\label{eqn:diotasdG}
\frac{\partial\iota_s}{\partial G}=-\frac{1-\iota_s^2}{\eta L\iota_s}=-\frac{1-\iota_s^2}{\eta L_*\iota_s}\left(1-\frac{\delta L}{L_*}\right)+\mathcal{O}(\delta L^2)\;,
\end{gather}
\begin{gather}
\label{eqn:diotacdH}
\frac{\partial\iota_c}{\partial H}=\frac{1}{\eta L}=\frac{1}{\eta L_*}\left(1-\frac{\delta L}{L_*}\right)+\mathcal{O}(\delta L^2)\;,\\
\label{eqn:diotasdH}
\frac{\partial\iota_s}{\partial H}=-\frac{\iota_c}{\eta L\iota_s}=-\frac{\iota_c}{\eta L_*\iota_s}\left(1-\frac{\delta L}{L_*}\right)+\mathcal{O}(\delta L^2)\;.
\end{gather}
A sketch of the derivation of the above formulas can be found in Appendix \ref{sec:appendixder}. They are strictly valid with $e\in(0,1)$, $i\in(0,\pi)$. However, several cancellations lead to no singular behavior of the Poisson bracket formulas arising throughout the various perturbative steps also when $e=0$ or $i=0$.\\

%----------------------------------------------
\subsubsection{Book-keeping: Hamiltonian} 
\label{subsubsec:bkrules}
%----------------------------------------------
We introduce in the series a book-keeping symbol $\sigma$ (see \cite{efthymiopoulos2011canonical} for an introduction to the book-keeping technique), with numerical value $\sigma=1$, whose role is to provide a grouping of all the various terms in the series according to their `order of smallness'. Hence, a group of terms with common factor $\sigma^l$, $l\in\mathbb{Z}$, indicates a term considered as of the `$l$-th order of smallness'.

Since in our series there are several small quantities, we introduce a book-keeping scheme allowing to simultaneously deal with all small quantities while maintaining the closed-form character of the series. To this end, we make the following substitutions, called `book-keeping rules', within the initial Hamiltonian:
\begin{itemize}
	\item BK-Rule 1: $e\leadsto \sigma^1 e=\sigma e$ (not applicable to the quantity $e^2$ within $\eta=\sqrt{1-e^2}$),
    \item BK-Rule 2: $\eta\leadsto\sigma^0\eta=\eta$,
	\item BK-Rule 3: $\mu\leadsto \sigma^{\nu}\mu$, with $\nu$ as in Eq.(\ref{eqn:nun1}),
	\item BK-Rule 4: $e_1\leadsto \sigma^{\nu_1}e_1$, with $\nu_1$ as in Eq.(\ref{eqn:nun1}) (not applicable to the quantity $e_1^2$ within $\eta_1\coloneqq\sqrt{1-e_1^2}$),
    \item BK-Rule 5: $\frac{1}{\eta^2}\leadsto\left(\frac{1}{\eta^2}-1\right)\sigma^2+1$,
    \item BK-Rule 6: $\eta_1\leadsto (\eta_1-1)\sigma^{2\nu_1}+1$,
	\item BK-Rule 7: $\delta L^{\lambda}\leadsto \sigma^{l\nu}\delta L^{\lambda}$ with $l=\begin{cases}
	\lambda\;,&\text{if }\delta L^{\lambda}\text{ comes from }\mathcal{H}_1\;,\\
	\lambda-1\;,&\text{if }\delta L^{\lambda}\text{ comes from }\mathcal{H}_0\;,
	\end{cases}\hspace{2mm}\lambda\in\mathbb{N}\setminus\{0\}$.
\end{itemize}
Since $\sigma=1$, the above substitutions affect the structure of the series only at the formal level, and can be substituted directly into the original Hamiltomian, whereby they propagate at subsequent normalization steps once these steps are organized in successive powers $\sigma,\sigma^2$, etc., of the book-keeping symbol. The BK-Rules 1 to 7 above are justified on physical ground as well as on motives of algorithmic convenience. In particular: \\
\\
\noindent
- BK-Rule 1 implies that, despite the use of closed-form formulas, the basic small quantity in powers of which the series are organized is the eccentricity of the test particle. \\
\\
\noindent
- BK-Rule 3 implies that a factor $\mu$ in front of a series term should be treated as of comparable order of smallness as a term of order $e^\nu$, with $\nu$ given by Eq.(\ref{eqn:nun1}). Similarly, BK-Rule 4 implies that a term containing a factor $e_1$ raised to some power should be treated as of comparable order of smallness with a term $e^\nu_1$ raised to the same power. Note that the eccentricity $e$ is a quantity variable in time, so that to compute the exponents $\nu,\nu_1$ we need to use, for any examined trajectory, a reference value $e_*$ yielding an estimate of the overall level of eccentricity all along the orbital evolution for that trajectory. Note that, by standard secular theory we have $e(t)=e_*+\mathcal{O}(\mu)$ if $e_*$ is close to the mean eccentricity (see also discussion at the introduction). Note finally that we obtain exponents $\nu,\nu_1\ge 1$ in the typical case in which $e>\mu$ and $e\ge e_1$. These inequalities arise naturally in the case of small bodies in highly eccentric orbits perturbed by some planet of, say, our solar system, which are the cases of main interest in applying the present method (see, nevertheless, Remark \ref{rem:smalle} on the treatment of cases where the above conditions are not met).\\
\\
\noindent
- BK-Rule 7 stems from the estimate $\delta L=\mathcal{O}(\delta a)=\mathcal{O}(\mu)$ holding for the oscillations in semi-major axis of trajectories far from mean-motion resonances (as already pointed outin the latter case, instead, we have in general $\delta L=\mathcal{O}(\delta a)=\mathcal{O}(\mu^{1/2})$ and the corresponding rule has to be adapted accordingly). The lowering of the book-keeping power by one for within $H_0$ is introduced for reasons of algorithmic convenience, i.e., in order to maintain $n_*\delta L$ in the kernel of the homological equation. \\
\\
\noindent
- BK-Rules 5 and 6 imply just a partition of the unity aiming at keeping the perturbative scheme in closed-form while splitting the corresponding expressions (involving $\eta$ and $\eta_1$ respectively) in two parts, of orders $\mathcal{O}(1)$ and $\mathcal{O}(e^2)$, or $\mathcal{O}(e_1^2)$.\\

%----------------------------------------------
\subsubsection{Book-keeping: Poisson structure} 
\label{subsubsec:bkPoissstr}
%----------------------------------------------
Some of the formulas in Subsection \ref{subsubsec:poissform} imply differentiation with respect to $e$ through the corresponding partial derivatives in \eqref{eqn:dFddeltaL}, \eqref{eqn:dFdG}, thus yielding a lowering of the power of the eccentricity in some terms arising through Poisson brackets at consecutive steps of perturbation theory. To account for this fact, similarly as in \cite{cavallari2022closed} we introduce the use of the book-keeping symbol $\sigma$ in the formulas of the Poisson algebra as follows: first, we re-write the derivatives with respect to the angles $\ell,g,h,M_1$ as
\begin{gather}
\label{eqn:sigmanu1dFdl}
\frac{\text{d}F}{\text{d}\ell}=\frac{\partial F}{\partial f}\frac{\partial f}{\partial\ell}\frac{a_1(1-e_1\sigma^{\nu_1}\cos E_1)}{\norm{r_1}}\;,\\
\label{eqn:sigmanu1dFdg}
\frac{\text{d}F}{\text{d}g}=\frac{\partial F}{\partial g}\frac{a_1(1-e_1\sigma^{\nu_1}\cos E_1)}{\norm{r_1}}\;,\\
\label{eqn:sigmanu1dFdh}
\frac{\text{d}F}{\text{d}h}=\frac{\partial F}{\partial h}\frac{a_1(1-e_1\sigma^{\nu_1}\cos E_1)}{\norm{r_1}}\;,\\
\label{eqn:sigmanu1dFdM1}
\frac{\text{d}F}{\text{d}M_1}=\left(\frac{\partial F}{\partial E_1}+\frac{\partial F}{\partial\norm{r_1}}\frac{\text{d}\norm{r_1}}{\text{d}E_1}+\frac{\partial F}{\partial\phi_1}\sigma^{-\nu_1}\right)\frac{\text{d}E_1}{\text{d}M_1}-\frac{\partial F}{\partial\phi_1}\sigma^{-\nu_1}\;,
\end{gather}
and with respect to the actions $\delta L,G$ as
\begin{gather}
\label{eqn:sigmadFddeltaL}
\frac{\text{d}F}{\text{d}\delta L}=\frac{\partial F}{\partial f}\frac{\partial f}{\partial\delta L}+\frac{\partial F}{\partial\delta L}+\frac{\partial F}{\partial e}\frac{\partial e}{\partial\delta L}\sigma^{-1}+\frac{\partial F}{\partial\eta}\frac{\partial\eta}{\partial\delta L}\;,\\
\label{eqn:sigmadFdG}
\frac{\text{d}F}{\text{d}G}=\frac{\partial F}{\partial f}\frac{\partial f}{\partial G}+\frac{\partial F}{\partial e}\frac{\partial e}{\partial G}\sigma^{-1}+\frac{\partial F}{\partial\eta}\frac{\partial\eta}{\partial G}+\frac{\partial F}{\partial\iota_c}\frac{\partial\iota_c}{\partial G}+\frac{\partial F}{\partial\iota_s}\frac{\partial\iota_s}{\partial G}\;.
\end{gather}
Note that in \eqref{eqn:sigmanu1dFdM1} use was made of the identity $\phi_1=e_1\sin E_1$ (Kepler's equation). 
Finally, we revise formulas \eqref{eqn:dfdl}, \eqref{eqn:dr1dE1}, \eqref{eqn:dfddeltaL}--\eqref{eqn:diotasdH}, attributing a book-keeping to all factors involving the eccentricity function $\eta$ as 
\begin{equation}
\label{eqn:sigmadfdl}
\frac{\partial f}{\partial\ell}=1+\frac{2e\cos f}{\eta^3}\sigma+\left(\frac{1}{\eta^3}-1+\frac{e^2\cos^2 f}{\eta^3}\right)\sigma^2\;,
\end{equation}
\begin{equation}
\label{eqn:sigmadr1dE1}
\frac{\text{d}\norm{r_1}}{\text{d}E_1}=a_1e_1\sigma^{\nu_1}\sin E_1
\end{equation}
\begin{equation}
\label{eqn:sigmadfddeltaL}
\frac{\partial f}{\partial\delta L}=\frac{1}{L_*}\left(\frac{2\sin f}{e}\sigma^{-1}+\frac{\sin(2f)}{2}\right)+\mathcal{O}(\delta L\sigma^{\nu})\;,
\end{equation}
\begin{equation}
\label{eqn:sigmadeddeltaL}
\frac{\partial e}{\partial\delta L}=\frac{1}{L_*}\left(\frac1e\sigma^{-1}+\frac{\eta^2-1}{e}\sigma\right)+\mathcal{O}(\delta L\sigma^{\nu})\;,
\end{equation}
\begin{equation}
\label{eqn:sigmadetaddeltaL}
\frac{\partial\eta}{\partial\delta L}=-\frac{1}{L_*}\left(1+(\eta-1)\sigma^2\right)+\mathcal{O}(\delta L\sigma^{\nu})\;,
\end{equation}
\begin{multline}
\label{eqn:sigmadfdG}
\hspace{2cm}\frac{\partial f}{\partial G}=-\frac{1}{L_*}\Bigg(\frac{2\sin f}{e}\sigma^{-1}+\frac{\sin(2f)}{2}\\
+\frac{2\sin f}{e}\left(\frac{1}{\eta}-1\right)\sigma+\frac{\sin 2f}{2}\left(\frac{1}{\eta}-1\right)\sigma^2\Bigg)+\mathcal{O}(\delta L\sigma^{\nu})\;,\hspace{2cm}
\end{multline}
\begin{equation}
\label{eqn:sigmadedG}
\frac{\partial e}{\partial G}=-\frac{1}{L_*}\left(\frac1e\sigma^{-1}+\frac{\eta-1}{e}\sigma\right)+\mathcal{O}(\delta L\sigma^{\nu})\;,
\end{equation}
\begin{equation}
\label{eqn:sigmadetadG}
\frac{\partial\eta}{\partial G}=\frac{1}{L_*}+\mathcal{O}(\delta L\sigma^{\nu})\;,
\end{equation}
\begin{equation}
\label{eqn:sigmadiotacdG}
\frac{\partial\iota_c}{\partial G}=-\frac{\iota_c}{L_*}\left(1+\left(\frac{1}{\eta}-1\right)\sigma^2\right)+\mathcal{O}(\delta L\sigma^{\nu})\;,
\end{equation}
\begin{equation}
\label{eqn:sigmaiotasdG}
\frac{\partial\iota_s}{\partial G}=-\frac{1-\iota_s^2}{L_*\iota_s}\left(1+\left(\frac{1}{\eta}-1\right)\sigma^2\right)+\mathcal{O}(\delta L\sigma^{\nu})\;,
\end{equation}
\begin{equation}
\label{eqn:sigmadiotacdH}
\frac{\partial\iota_c}{\partial H}=\frac{1}{L_*}\left(1+\left(\frac{1}{\eta}-1\right)\sigma^2\right)+\mathcal{O}(\delta L\sigma^{\nu})\;,
\end{equation}
\begin{equation}
\label{eqn:sigmadiotasdH}
\frac{\partial\iota_s}{\partial H}=-\frac{\iota_c}{L_*\iota_s}\left(1+\left(\frac{1}{\eta}-1\right)\sigma^2\right)+\mathcal{O}(\delta L\sigma^{\nu})\;.
\end{equation}\\

\begin{remark}
\label{rem:smalle}
The \textit{small eccentricity problem} consists of the fact that the above-proposed book-keeping rules are not applicable in the case $0<e_*\lesssim\mu<e_1$, since, by \eqref{eqn:nun1},  the exponents $\nu$, $\nu_1$ would be smaller than unity. The simple solution of rounding these exponents to $1$, while maintaining the same book-keeping rules as above, fails, since, at any given normalization order $r$, the presence of $\sigma^{-1}$, $\sigma^{-\nu_1}$ terms in the formulas of the Poisson algebra leads to the generation of terms of order \textit{lower than $r$} in the normal form's remainder.  Notwithstanding our focus on a method dealing with large eccentricity orbits (for which the problem does not appear), we discuss below a variant of the main algorithm that deals with trajectories in the case $\nu=1$, i.e., when $e_*\lesssim\mu$. \\
\end{remark}

%=============================================
\subsection{Iterative normalization algorithm}
\label{subsec:normaliz}
%=============================================
%--------------------------------------------------------
\subsubsection{Preliminary step: Hamiltonian preparation}
\label{subsubsec:prelim}
%--------------------------------------------------------
After implementing BK-Rules 1 to 7 the Hamiltonian \eqref{eqn:hamDelexp} resumes the form:
\begin{equation}
\label{eqn:hamfourier}
\mathcal{H}=n_*\delta L+n_1J_1+\sum_{s\in\mathbb{Z}^4}q_s(\delta L,e,\eta,\iota_c,\iota_s;\mu,L_*,a_1,e_1,\eta_1)\cos(s_1f+s_2g+s_3h+s_4E_1)\sigma_s
\end{equation}
where $\sigma_s\in\{\sigma^{\nu},\sigma^{\nu+1},\ldots\}$ and, by D'Alembert rules, only cosines and real coefficients $q_s$ appear (invariance under simultaneous change of sign of all angles). Setting $\mathscr{Z}_0=n_*\delta L+n_1J_1$, for obtaining a closed-form normalization algorithm it turns convenient to re-express the Hamiltonian according to 
\begin{equation}
\label{eqn:trick}
\mathcal{H}=\mathscr{Z}_0+(\mathcal{H}-\mathscr{Z}_0)\frac{a_1(1-e_1\sigma^{\nu_1}\cos E_1)}{\norm{r_1}}\;.
\end{equation}
The Hamiltonian \eqref{eqn:trick} resumes the form:
\begin{equation}
\label{eqn:ham0}
\mathcal{H}=\mathscr{H}^{(0)}=\mathscr{Z}_0+\mathscr{R}^{(0)}_{\nu}\;,
\end{equation}
where
\begin{multline}
\label{eqn:fE1terms}
\mathscr{R}_{\nu}^{(0)}=\sum_{l\ge\nu}\mathscr{R}_{\nu,l}^{(0)}\\
=\sum_{l\ge\nu}\frac{a_1}{\norm{r_1}}\left(\sum_{p\in\mathbb{Z}^2}q^{\prime}_{l,p}\cos(p_1g+p_2h)+\sum_{\substack{s\in\mathbb{Z}^4\\(s_1,s_4)\neq(0,0)}}q^{\prime\prime}_{l,s}\cos(s_1f+s_2g+s_3h+s_4E_1)\right)\sigma^l\;;
\end{multline}
We call $\mathscr{R}_{\nu}^{(0)}$ the \textit{remainder} at the zero-th normalization step (i.e. in the original Hamiltonian). The terms $\mathscr{R}_{\nu,l}^{(0)}$ contain terms of book-keeping order $\sigma^{l}$, with $l\geq\nu$.\\

%--------------------------------------------------------------------------------
\subsubsection{Step 1: normalization of the $\boldsymbol{\sigma^{\nu}}$-terms}
\label{subsubsec:sigmanu}
%--------------------------------------------------------------------------------
For a suitable generating function $\chi_{\nu}^{(1)}$ to be determined in a while, we introduce the Lie series operator as
\begin{equation}
	\label{eqn:Lieseries}
	\begin{array}{c}
\displaystyle\vspace{1mm}\exp\left(\mathcal{L}_{\chi_{\nu}^{(1)}}\right)\colon\;\mathcal{C}^{\omega}(\mathbb{T}^4\times D)\longrightarrow\mathcal{C}^{\omega}(\mathbb{T}^4\times D)\\
\displaystyle\exp\left(\mathcal{L}_{\chi_{\nu}^{(1)}}\right)=\sum_{n\ge0}\frac{1}{n!}\mathcal{L}_{\chi_{\nu}^{(1)}}^n=\mathbb{I}+\mathcal{L}_{\chi_{\nu}^{(1)}}+\frac12\mathcal{L}_{\chi_{\nu}^{(1)}}\circ\mathcal{L}_{\chi_{\nu}^{(1)}}+\ldots
\end{array}\;,
\end{equation}
where $\mathcal{C}^{\omega}(\mathbb{T}^4\times D)$ denotes the set of real analytic functions in the phase space and
\begin{equation}
\label{eqn:Liederiv}
\mathcal{L}_{\chi_{\nu}^{(1)}}\cdot=\{\cdot,\chi_{\nu}^{(1)}\}
\end{equation}
is the time derivative along the Hamiltonian vector field generated by $\chi_{\nu}^{(1)}$ (Lie derivative).\\
Applying \eqref{eqn:Lieseries} to \eqref{eqn:trick} we get the transformed Hamiltonian
\begin{equation}
\label{eqn:H1}
\mathscr{H}^{(1)}=\mathscr{Z}_0+\mathscr{R}^{(0)}_{\nu}+\{\mathscr{Z}_0,\chi_{\nu}^{(1)}\}+\{\mathscr{R}_{\nu}^{(0)},\chi_{\nu}^{(1)}\}+\frac12\{\{\mathcal{H},\chi_{\nu}^{(1)}\},\chi_{\nu}^{(1)}\}+\ldots\;,
\end{equation}
in which, with the usual abuse of notation, we still indicate with $\ell,g,h,M_1,\delta L,G,H,J_1$ the new canonical variables given by the inverse transformation 
\begin{equation}
\label{eqn:Lieseriesinv}
\exp\left(\mathcal{L}_{\chi_{\nu}^{(1)}}\right)^{-1}=\exp\left(\mathcal{L}_{-\chi_{\nu}^{(1)}}\right)\;.
\end{equation}
\indent Our scope will be to define the Lie generating function $\chi_{\nu}^{(1)}$ in such a way that, after implementing the transformation \eqref{eqn:H1}, $\mathscr{H}^{(1)}$ contains no terms depending on the angles $f$ and $E_1$ at order $\sigma^{\nu}$. The required generating function $\chi_{\nu}^{(1)}$ is computed as an outcome of the following:
\begin{proposition}
	\label{prop:hotstep1}
	Define $\chi_{\nu}^{(1)}$ as
	\begin{multline}
	\label{eqn:chi1}
	\chi_{\nu}^{(1)}=\frac{\phi_1}{n_1}\sigma^{\nu+\nu_1}\sum_{p\in\mathbb{Z}^2}q^{\prime}_{\nu,p}\cos(p_1g+p_2h)\\
	+\sigma^{\nu}\sum_{\substack{s\in\mathbb{Z}^4 \\ (s_1,s_4)\neq(0,0)}}\frac{q^{\prime\prime}_{\nu,s}}{s_1n_*+s_4n_1}\sin(s_1f+s_2g+s_3h+s_4E_1)\;.
	\end{multline}
	Then, it holds that
	\begin{equation}
	\label{eqn:firsthom}
	\{\mathscr{Z}_0,\chi_{\nu}^{(1)}\}+\mathscr{R}^{(0)}_{\nu,\nu}=
	\mathscr{Z}_{\nu}^{(1)}+\mathcal{O}\left(\sigma^{\nu+1}\right)\;,
	\end{equation}
    where
	\begin{equation}
		\label{eqn:Z1nu}
	\mathscr{Z}_{\nu}^{(1)}=\sigma^{\nu}\sum_pq^{\prime}_{\nu,p}\cos(p_1g+p_2h)\;.
	\end{equation}
    Furthermore, the function $\mathscr{H}^{(1)}$ as computed by Eq.(\ref{eqn:H1}) takes the form
    \begin{equation}
    \label{eqn:nf1}
    \mathscr{H}^{(1)}=\mathscr{Z}_0+\mathscr{Z}_{\nu}^{(1)}+\mathscr{R}^{(1)}\;,
    \end{equation}
    where the remainder $\mathscr{R}^{(1)}$ is $\mathcal{O}(\sigma^{\nu+1})$ $\forall\nu\ge 1$ independently of the value of $\nu_1$. 	
\end{proposition}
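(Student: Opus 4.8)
The statement contains two claims, which I would establish in turn: the identity \eqref{eqn:firsthom}, and the assertion that \eqref{eqn:H1} reduces to \eqref{eqn:nf1} with $\mathscr{R}^{(1)}=\mathcal{O}(\sigma^{\nu+1})$. Since $\mathscr{Z}_0=n_*\delta L+n_1J_1$ depends only on the actions $\delta L,J_1$, the Poisson bracket \eqref{eqn:poisson} collapses to $\{\mathscr{Z}_0,\chi_\nu^{(1)}\}=-n_*\,\mathrm{d}\chi_\nu^{(1)}/\mathrm{d}\ell-n_1\,\mathrm{d}\chi_\nu^{(1)}/\mathrm{d}M_1$, and I would split $\chi_\nu^{(1)}=\chi_A+\chi_B$ along the two lines of \eqref{eqn:chi1}. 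For the oscillating part $\chi_B$, using \eqref{eqn:sigmanu1dFdl} and \eqref{eqn:sigmanu1dFdM1} at leading book-keeping order gives $\mathrm{d}\chi_B/\mathrm{d}\ell=(a_1/\norm{r_1})\,\partial\chi_B/\partial f+\mathcal{O}(\sigma^{\nu+1})$ and $\mathrm{d}\chi_B/\mathrm{d}M_1=(a_1/\norm{r_1})\,\partial\chi_B/\partial E_1$, because $\partial f/\partial\ell=1+\mathcal{O}(\sigma)$, $\mathrm{d}E_1/\mathrm{d}M_1=a_1/\norm{r_1}$, and the weight $a_1(1-e_1\sigma^{\nu_1}\cos E_1)/\norm{r_1}$ differs from $a_1/\norm{r_1}$ only at $\mathcal{O}(\sigma^{\nu_1})$. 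Differentiating the sine harmonics brings down the factors $s_1n_*$ and $s_4n_1$, which recombine into the divisor $s_1n_*+s_4n_1$ of $\chi_B$; this is legitimate precisely under the non-resonance hypothesis (Remark \ref{rem:Dioph}), $s_1n_*+s_4n_1\neq0$. After the divisor cancels, $\{\mathscr{Z}_0,\chi_B\}$ equals minus the $(s_1,s_4)\neq(0,0)$ part of $\mathscr{R}_{\nu,\nu}^{(0)}$, up to $\mathcal{O}(\sigma^{\nu+1})$.

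For $\chi_A$, which is linear in $\phi_1$ and carries the prefactor $\sigma^{\nu+\nu_1}$, I would note that $\mathrm{d}\chi_A/\mathrm{d}\ell=0$ and, by \eqref{eqn:sigmanu1dFdM1} together with $\phi_1=e_1\sin E_1$, that $\mathrm{d}\chi_A/\mathrm{d}M_1=(\partial\chi_A/\partial\phi_1)\,\sigma^{-\nu_1}\,(\mathrm{d}E_1/\mathrm{d}M_1-1)$. The single $\phi_1$-derivative converts the $\sigma^{\nu+\nu_1}$ prefactor into $\sigma^\nu$, the factor $1/n_1$ in $\chi_A$ cancels the $n_1$ in front, and $\mathrm{d}E_1/\mathrm{d}M_1-1=\mathrm{d}\phi_1/\mathrm{d}M_1=a_1/\norm{r_1}-1$. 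Hence $\{\mathscr{Z}_0,\chi_A\}=-(a_1/\norm{r_1}-1)\,\sigma^\nu\sum_pq'_{\nu,p}\cos(p_1g+p_2h)$. Adding this to the $(s_1,s_4)=(0,0)$ (kernel) part of $\mathscr{R}_{\nu,\nu}^{(0)}$, which carries the weight $a_1/\norm{r_1}$, the weight telescopes to unity and leaves exactly $\mathscr{Z}_\nu^{(1)}$ of \eqref{eqn:Z1nu}; combining the two contributions proves \eqref{eqn:firsthom}. The one genuinely structural point in this otherwise bookkeeping-careful computation is the exact identity $(a_1/\norm{r_1})(1-e_1\cos E_1)=1$, i.e. $\norm{r_1}=a_1(1-e_1\cos E_1)$, which lets $\chi_A$ absorb the weight introduced by the trick \eqref{eqn:trick} without any closed-form expansion of $a_1/\norm{r_1}$.

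\textbf{The order of the remainder.} For the second claim I would insert the homological identity into \eqref{eqn:H1}: the group $\mathscr{Z}_0+\mathscr{R}_{\nu,\nu}^{(0)}+\{\mathscr{Z}_0,\chi_\nu^{(1)}\}$ collapses to $\mathscr{Z}_0+\mathscr{Z}_\nu^{(1)}+\mathcal{O}(\sigma^{\nu+1})$, while the untouched part $\mathscr{R}_\nu^{(0)}-\mathscr{R}_{\nu,\nu}^{(0)}=\sum_{l\ge\nu+1}\mathscr{R}_{\nu,l}^{(0)}$ is $\mathcal{O}(\sigma^{\nu+1})$ by construction. It then remains to bound the surviving Lie-series terms $\{\mathscr{R}_\nu^{(0)},\chi_\nu^{(1)}\}$, $\tfrac12\{\{\mathscr{H}^{(0)},\chi_\nu^{(1)}\},\chi_\nu^{(1)}\}$ and all higher nested brackets. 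The independence of $\nu_1$ is the easy half: $\phi_1$ enters only through $\chi_A$ and only linearly, so in any nested bracket at most one $\phi_1$-derivative can act, producing a single $\sigma^{-\nu_1}$ that is always compensated by the $\sigma^{\nu+\nu_1}$ prefactor of $\chi_A$; no uncompensated negative power of $\sigma^{\nu_1}$ can survive, and the bound cannot deteriorate as $\nu_1$ grows.

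The principal obstacle is the eccentricity-induced lowering of book-keeping order. Through $\partial f/\partial\delta L$, $\partial e/\partial\delta L$ and $\partial e/\partial G$ in \eqref{eqn:sigmadFddeltaL}--\eqref{eqn:sigmadedG}, each action-derivative in a Poisson bracket formally carries negative powers of $\sigma$ (the $1/e$, $\sigma^{-1}$ terms), so a naive count would place some contributions of $\{\mathscr{R}_\nu^{(0)},\chi_\nu^{(1)}\}$ below $\sigma^{\nu+1}$ --- critically so for $\nu=1$. The observation I would exploit is that the lowest ($\sigma^\nu$) terms of both $\mathscr{R}_\nu^{(0)}$ and $\chi_\nu^{(1)}$ carry the factor $\mu\sim\sigma^\nu$ with eccentricity-independent leading coefficients, and that for such functions the singular ($1/e$) pieces produced by the several summands of \eqref{eqn:sigmadFddeltaL} and \eqref{eqn:sigmadFdG} cancel pairwise --- this is the ``no singular behaviour'' property noted after \eqref{eqn:sigmadiotasdH}. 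After these cancellations the leading bracket is genuinely $\mathcal{O}(\sigma^{2\nu})\subseteq\mathcal{O}(\sigma^{\nu+1})$ for every $\nu\ge1$, and the same mechanism, applied recursively, controls the higher nested brackets; collecting the remaining contributions of order $\ge\sigma^{\nu+1}$ into $\mathscr{R}^{(1)}$ yields \eqref{eqn:nf1}. I expect the verification that the closed-form Poisson algebra never generates genuinely $f,E_1$-dependent terms below $\sigma^{\nu+1}$ at this step to be the technical heart of the argument.
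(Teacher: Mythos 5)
Your treatment of the homological equation is correct and matches the paper's: the split into the $\phi_1$-part and the oscillating part, the observation that the weight $a_1(1-e_1\sigma^{\nu_1}\cos E_1)/\norm{r_1}$ introduced by \eqref{eqn:trick} telescopes against $\mathrm{d}E_1/\mathrm{d}M_1-1=a_1/\norm{r_1}-1$, and the compensation of $\sigma^{-\nu_1}$ by the $\sigma^{\nu+\nu_1}$ prefactor of the $\phi_1$-term are exactly the mechanisms used in the paper, and your derivation of \eqref{eqn:firsthom} and of the $\nu_1$-independence is sound.

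The gap is in the second half, precisely at the point you yourself flag as ``the technical heart.'' You assert that for functions with eccentricity-independent leading coefficients ``the singular ($1/e$) pieces produced by the several summands of \eqref{eqn:sigmadFddeltaL} and \eqref{eqn:sigmadFdG} cancel pairwise,'' attributing this to the regularity-at-$e=0$ remark following \eqref{eqn:diotasdH}. That remark concerns a different issue (absence of poles at $e=0$), and the cancellation is \emph{not} generic: the combination \eqref{eqn:ederivPoiss} reads $-\partial_eF\,\sigma^{-1}\bigl(\partial_\ell f\,\partial_{\delta L}e\,\partial_f\chi_\nu^{(1)}+\partial_Ge\,\partial_g\chi_\nu^{(1)}\bigr)$, whose $\sigma^{-1}/(L_*e)$ parts cancel \emph{only if} $\partial_f\chi_\nu^{(1)}=\partial_g\chi_\nu^{(1)}$, and likewise the combination \eqref{eqn:fGPoiss} involving $\partial f/\partial G$ vanishes only if both entries satisfy $\partial_f=\partial_g$. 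The paper secures this through Lemma \ref{lemma:dalem}: by the D'Alembert rules, every harmonic $\cos(s_1f+s_2g+s_3h+s_4E_1)$ whose coefficient is explicitly independent of $e$ must have $s_1=s_2$; since $\mathscr{R}^{(0)}_{\nu,\nu}$ carries the full prefactor $\mu\sim\sigma^\nu$ and hence no explicit power of $e$, this applies to it and to $\chi_\nu^{(1)}$. Without this lemma your order count is unjustified, and the failure is not cosmetic: a residual $\sigma^{-2}$ lowering per bracket would give $n$-fold nested brackets of order $\sigma^{n\nu-2(n-1)}$, which for $\nu=1$ drops below $\sigma^{\nu+1}$ already at $n=2$. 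The same lemma is also needed to make your one-line claim that ``the same mechanism, applied recursively, controls the higher nested brackets'' into the actual estimate $\mathcal{O}(\sigma^{n(\nu-1)+2})$, which is exactly borderline at $\nu=1$. You have correctly located the obstruction and the right conclusion, but the key structural ingredient that resolves it is missing.
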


\begin{proof}
	Setting
	\begin{multline*}
	\label{eqn:chi1form}
	\chi_{\nu}^{(1)}(f,g,h,E_1,\phi_1,\delta L,e,\eta,\iota_c,\iota_s)=\sigma^{\nu}\left(\phi_1\sigma^{\nu_1}\sum_{p\in\mathbb{Z}^2}\hat{q}^{\prime}_{\nu,p}(\delta L,e,\eta,\iota_c,\iota_s)\cos(p_1g+p_2h)\vphantom{\sum_{\substack{s\in\mathbb{Z}^4 \\ (s_1,s_4)\neq(0,0)}}}\right.\\
	\left.+\sum_{\substack{s\in\mathbb{Z}^4 \\ (s_1,s_4)\neq(0,0)}}\hat{q}^{\prime\prime}_{\nu,s}(\delta L,e,\eta,\iota_c,\iota_s)\sin(s_1f+s_2g+s_3h+s_4E_1)\right)\;,
	\end{multline*}
	and recalling the chain rules \eqref{eqn:sigmanu1dFdl}, \eqref{eqn:sigmanu1dFdM1} and \eqref{eqn:sigmadfdl}, \eqref{eqn:sigmadr1dE1}, \eqref{eqn:dE1dM1}, we find
	\begin{multline*}
	\{\mathscr{Z}_0,\chi_{\nu}^{(1)}\}+\mathscr{R}^{(0)}_{\nu,\nu}=-n_*\Bigg(1+\frac{2e\cos f}{\eta^3}\sigma+\left(\frac{1}{\eta^3}-1+\frac{e^2\cos^2 f}{\eta^3}\right)\sigma^2\Bigg)\\
	\frac{a_1(1-e_1\sigma^{\nu_1}\cos E_1)}{\norm{r_1}}\sigma^{\nu}\sum_{(s_1,s_4)\neq(0,0)}s_1\hat{q}^{\prime\prime}_{\nu,s}\cos(s_1f+s_2g+s_3h+s_4E_1)\\
	-n_1\frac{a_1}{\norm{r_1}}\sigma^{\nu}\left(\sum_{(s_1,s_4)\neq(0,0)}s_4\hat{q}^{\prime\prime}_{\nu,s}\cos(s_1f+s_2g+s_3h+s_4E_1)+\sum_p\hat{q}^{\prime}_{\nu,p}\cos(p_1g+p_2h)\right)\\
	+n_1\sigma^{\nu}\sum_p\hat{q}^{\prime}_{\nu,p}\cos(p_1g+p_2h)+\sigma^{\nu}\frac{a_1}{\norm{r_1}}\left(\sum_pq^{\prime}_{\nu,p}\cos(p_1g+p_2h)\vphantom{sum_{(s_1,s_4)\neq(0,0)}}\right.\\
	+\sum_{(s_1,s_4)\neq(0,0)}q^{\prime\prime}_{\nu,s}\cos(s_1f+s_2g+s_3h+s_4E_1)\bigg)\;.
	\end{multline*}
	Requiring that no trigonometric terms depending on $f,E_1$ be present at order $\sigma^{\nu}$ then leads to
	\begin{align*}
	\hat{q}^{\prime\prime}_{\nu,s}&=
	\frac{q^{\prime\prime}_{\nu,s}}{s_1n_*+s_4n_1}\;,\quad s\in\mathbb{Z}^4\colon(s_1,s_4)\neq(0,0)\;,\\
	\hat{q}^{\prime}_{\nu,p}&=\frac{q^{\prime}_{\nu,p}}{n_1}\;,\quad p\in\mathbb{Z}^2\;,  
	\end{align*}
    which implies Eq.(\ref{eqn:chi1}). At order $\sigma^\nu$ we then obtain immediately the formula
	\begin{equation*}
	\mathscr{Z}_{\nu}^{(1)}=\sigma^{\nu}\sum_pq^{\prime}_{\nu,p}\cos(p_1g+p_2h)\;.
	\end{equation*}

	\noindent 
	We now consider the function $\mathscr{H}^{(1)}$ computed by replacing \eqref{eqn:chi1} into \eqref{eqn:H1}. The function $\mathscr{H}^{(1)}$ can be decomposed as in Eq.\eqref{eqn:nf1}. We shall demonstrate that the remainder $\mathscr{R}^{(1)}$ contains no terms of order lower than $\sigma^{\nu+1}$. To this end, it suffices to show that
    \begin{equation}
    \label{eqn:ordgreatnu}
    \{\mathscr{R}_{\nu}^{(0)},\chi_{\nu}^{(1)}\}=\mathcal{O}(\sigma^{2\nu})\;,\quad\quad\frac{1}{n!}\{\ldots\{\{\mathcal{H},\underbrace{\chi_{\nu}^{(1)}\},\chi_{\nu}^{(1)}\},\ldots,\chi_{\nu}^{(1)}}_{n\ge 2}\}=\mathcal{O}(\sigma^{n(\nu-1)+2})\;,
    \end{equation}
    since $n(\nu-1)+2>\nu$, for all $n\ge 2$, $\nu\ge 1$. 

The term $\mathscr{R}_{\nu}^{(0)}$ contains terms of order equal to or larger than $\sigma^\nu$, while $\chi_\nu^{(1)}$ contains only terms of order $\sigma^\nu$. Thus, except for the Poisson bracket $\{\mathscr{Z}_0,\chi_{\nu}^{(1)}\}$, which only contributes to the secular terms $\mathscr{Z}_{\nu}^{(1)}$ due to Eq.(\ref{eqn:firsthom}), the first Poisson bracket in \eqref{eqn:ordgreatnu} contains prefactors of order $\sigma^{2\nu}$ or higher, while the second contains prefactors $\sigma^{n\nu}$ or higher. However, the exponent of $\sigma$ in these brackets can be \textit{lowered} due to the negative powers introduced in the book-keeping formulas in the following three classes of factors: 
\begin{enumerate}[label=(\roman*)]
	\item\label{item:ederiv} partial derivatives with respect to the eccentricity in \eqref{eqn:sigmadFddeltaL}, \eqref{eqn:sigmadFdG} (carrying $\sigma^{-1}$) multiplied by corresponding formulae \eqref{eqn:sigmadeddeltaL}, \eqref{eqn:sigmadedG} (another $\sigma^{-1}$), hence a total of $\sigma^{-2}$;
	\item\label{item:fderiv} differentiations \eqref{eqn:sigmadfddeltaL}, \eqref{eqn:sigmadfdG} involving $f$ (weighting $\sigma^{-1}$) again in \eqref{eqn:sigmadFddeltaL}, \eqref{eqn:sigmadFdG}, thus a pre-factor $\sigma^{-1}$;
	\item\label{item:phi1deriv} partial derivatives with respect to $\phi_1$ in \eqref{eqn:sigmanu1dFdM1} ($\sigma^{-\nu_1}$, $\nu_1\ge1$), thus a prefactor at least $\sigma^{-1}$.
\end{enumerate}

As regards (iii) $\phi_1$ shows up in the numerator of $\chi_{\nu}^{(1)}$ accompanied by a prefactor $\sigma^{\nu+\nu_1}$ (Eq.(\ref{eqn:chi1})), thus the negative powers $\sigma^{-\nu_1}$ are cancelled by the positive powers $\sigma^{\nu_1}$, implying no dependence of the minimum order of the remainder on $\nu_1$. 

As regards \ref{item:ederiv}, we first note that $\chi_{\nu}^{(1)}$ has no explicit dependence on $e$, but only an implicit dependence through $\eta$, which in the closed-form context is treated as an independent symbol. This follows from the fact that $\chi^{(1)}$ stems from balancing 
the coefficients of $\mathscr{R}_{\nu,\nu}^{(0)}$. The latter term contains a pre-factor $\mu$, which is already $\mathcal{O}(\sigma^{\nu})$, thus it cannot contain any further factors produced by any explicit power of $e$. In view of the above, setting $\partial\chi^{(1)}/\partial e=0$, we find that for any $F\in\mathcal{C}^{\infty}(\mathbb{T}^4\times D)$ the expression in $\{F,\chi_{\nu}^{(1)}\}$ pertaining \ref{item:ederiv} can be factored out as
\begin{equation}
\label{eqn:ederivPoiss}
\{F,\chi_{\nu}^{(1)}\}_{\text{\ref{item:ederiv}}}=-\frac{\partial F}{\partial e}\sigma^{-1}\left(\frac{\partial f}{\partial\ell}\frac{\partial e}{\partial\delta L}\frac{\partial\chi_{\nu}^{(1)}}{\partial f}+\frac{\partial e}{\partial G}\frac{\partial\chi_{\nu}^{(1)}}{\partial g}\right)\;.
\end{equation}
We now have the following lemma:
\begin{lemma}
	\label{lemma:dalem}
	For every term in the Hamiltonian \eqref{eqn:trick} of the form
	\begin{equation}
	\label{eqn:indipe}
	q_s(\norm{r_1},\delta L,\eta,\iota_c,\iota_s;\mu,L_*,a_1,e_1,\eta_1)\cos(s_1f+s_2g+s_3h+s_4E_1)\sigma_s\;,
	\end{equation}
	i.e., explicitly independent on $e$, we have $s_1=s_2$.
\end{lemma}
	\begin{proof}
	This is a consequence of D'Alembert rules. Using modified Delaunay angular elements
	\begin{align}
	\label{eqn:modDelang}
	\tilde{\lambda}&=\ell+g+h\;, \nonumber \\
	\tilde{p}&=-g-h\;,\\
	\tilde{q}&=-h\;, \nonumber
	\end{align}
	as well as the formulas $f=\ell+2e\sin\ell+\mathcal{O}(e^2)$, $e\eta(e)^{-2\lambda}=e+\lambda e^3+\mathcal{O}(e^5)$, $\lambda\in\mathbb{N}$, we find that, after expanding in the eccentricity $e$, \eqref{eqn:indipe} should give the terms
	\begin{equation}
	\label{eqn:indipemodDel}
	q_s\cos(s_1(\tilde{\lambda}+\tilde{p})+s_2(\tilde{q}-\tilde{p})-s_3\tilde{q}+s_4E_1)\sigma_s+\mathcal{O}(e)\;.
	\end{equation}
    However, according to the D'Alembert rules, in a generic trigonometric monomial of the form  
	\begin{equation}
	\label{eqn:termmodDel}
	b_w(\norm{r_1},\delta L,\eta,\iota_c,\iota_s;\mu,L_*,a_1,e_1,\eta_1)e^l\sigma^l\cos(w_1\tilde{\lambda}+w_2\tilde{p}+w_3\tilde{q}+w_4E_1)\sigma_w\;,\quad l\in\mathbb{N}\;,
	\end{equation}
	appearing after expanding $\mathcal{H}$ in the eccentricities $e,e_1$, we necessarily have that $l-|w_2|$ must be non-negative and even. Since for any closed-form term in the Hamiltonian, explicitly independent of $e$, the lowermost term in $e$ produced after the expansion satisfies $l=0$, we necessarily have $w_2=0$, that is $s_1=s_2$.
\end{proof}
In view, now, of \eqref{eqn:chi1}, the relation $s_1=s_2$ implies $\partial\chi_{\nu}^{(1)}/\partial f=\partial\chi_{\nu}^{(1)}/\partial g$. Therefore, making use of \eqref{eqn:sigmadfdl}, \eqref{eqn:sigmadeddeltaL} and \eqref{eqn:sigmadedG}, Eq.\eqref{eqn:ederivPoiss} translates into
\begin{equation*}
\{F,\chi_{\nu}^{(1)}\}_{\text{\ref{item:ederiv}}}=-\frac{\partial F}{\partial e}\sigma^{-1}\frac{\partial\chi_{\nu}^{(1)}}{\partial f}\left(\frac{\sigma^{-1}}{L_*e}-\frac{\sigma^{-1}}{L_*e}+\mathcal{O}(\sigma^0)\right)
=-\frac{\partial F}{\partial e}\sigma^{-1}\frac{\partial\chi_{\nu}^{(1)}}{\partial f}\mathcal{O}(\sigma^0)\;.
\end{equation*}
It follows that for any of the functions $F=\mathscr{R}_{\nu}^{(0)},\{\mathcal{H},\chi_{\nu}^{(1)}\},\{\{\mathcal{H},\chi_{\nu}^{(1)}\},\chi^{(1)}_{\nu}\},\ldots$, terms produced by derivatives of the type \ref{item:ederiv} in \eqref{eqn:H1} are subject to a lowering of the exponent of $\sigma$ per Poisson bracket only by a factor $\sigma^{-1}$, instead of $\sigma^{-2}$. In particular, in the case $F=\mathscr{R}_{\nu,\nu}^{(0)}$ (as well as for any other closed-form function explicitly independent on the eccentricity) we have that \eqref{eqn:ederivPoiss} is identically vanishing.

As regards \ref{item:fderiv}, we find that for any $F_1,F_2\in\mathcal{C}^{\infty}(\mathbb{T}^4\times D)$, the derivative $\partial f/\partial\delta L$ (Eq.\eqref{eqn:sigmadfddeltaL}) participates in the Poisson bracket $\{F_1,F_2\}$ only through the combination 
\begin{equation}
\label{eqn:fdeltaLPoiss}
\frac{\partial f}{\partial\ell}\frac{\partial f}{\partial\delta L}\left(\frac{\partial F_1}{\partial f}\frac{\partial F_2}{\partial f}-\frac{\partial F_1}{\partial f}\frac{\partial F_2}{\partial f}\right)=0\;.
\end{equation}
On the other hand, the derivative $\partial f/\partial G$ (Eq.\eqref{eqn:sigmadfdG}) participates in the same Poisson bracket through the combination
\begin{equation}
\label{eqn:fGPoiss}
\frac{\partial f}{\partial G}\left(\frac{\partial F_1}{\partial g}\frac{\partial F_2}{\partial f}-\frac{\partial F_1}{\partial f}\frac{\partial F_2}{\partial g}\right) 
\end{equation}
which, by Lemma \ref{lemma:dalem}, is also equal to zero for $F_1=\mathscr{R}_{\nu,\nu}^{(0)}$ (or any other term $\mathcal{O}(\sigma^{\nu+1})$ in $\mathcal{H}$ not depending explicitly on $e$), and $F_2=\chi_{\nu}^{(1)}$. 

In conclusion, returning to \eqref{eqn:ordgreatnu}, and taking all the above deductions into account, we arrive at the expressions
\begin{equation*}
\{\mathscr{R}_{\nu}^{(0)},\chi_{\nu}^{(1)}\}=\{\mathscr{R}_{\nu,\nu}^{(0)},\chi_{\nu}^{(1)}\}+\left\{\sum_{l\ge\nu+1}\mathscr{R}^{(0)}_{\nu,l},\chi_{\nu}^{(1)}\right\}=\mathcal{O}(\sigma^{\nu+\nu})+\mathcal{O}(\sigma^{\nu+1+\nu-1})=\mathcal{O}(\sigma^{2\nu})
\end{equation*}
and similarly,
\begin{multline*}
\frac{1}{2}\{\{\mathcal{H},\chi_{\nu}^{(1)}\},\chi_{\nu}^{(1)}\}=\frac{1}{2}\{\{\mathscr{Z}_0,\chi_{\nu}^{(1)}\},\chi_{\nu}^{(1)}\}+\frac{1}{2}\{\{\mathscr{R}_{\nu}^{(0)},\chi_{\nu}^{(1)}\},\chi_{\nu}^{(1)}\}\\
=\mathcal{O}(\sigma^{2\nu})+\mathcal{O}(\sigma^{3\nu-1})=\mathcal{O}(\sigma^{2\nu})\;,
\end{multline*}
since $\{\mathscr{Z}_0,\chi_{\nu}^{(1)}\}$ satisfies Lemma \ref{lemma:dalem}. We then have $\{\mathscr{Z}_0,\chi_{\nu}^{(1)}\}=\mathscr{Z}_{\nu}^{(1)}-\mathscr{R}_{\nu,\nu}^{(0)}+\mathcal{O}(\sigma^{\nu+1})$, with $\mathscr{Z}_{\nu}^{(1)}$ independent on $f,g,e$. Proceeding by induction
\begin{equation*}
\frac{1}{n!}\{\ldots\{\{\mathscr{Z}_0+\mathscr{R}_{\nu}^{(0)},\underbrace{\chi_{\nu}^{(1)}\},\chi_{\nu}^{(1)}\},\ldots,\chi_{\nu}^{(1)}}_{n\ge 3}\}=\mathcal{O}(\sigma^{\min\{n\nu-(n-2),\,(n+1)\nu-(n-1)\}})=\mathcal{O}(\sigma^{n(\nu-1)+2})
\end{equation*}
which concludes the proof of the proposition. 
\end{proof}

\noindent
By Proposition \ref{prop:hotstep1}, computing all Poisson brackets in \eqref{eqn:H1}, substituting $\phi_1=e_1\sin E_1$ where appropriate, and multiplying all terms missing a factor $1/\norm{r_1}$ with the factor $a_1(1-\sigma^{\nu_1} e_1\allowbreak\cos(E_1))/\norm{r_1}$ (equal to 1), the remainder $\mathscr{R}_{\nu+1}^{(1)}$ resumes the standard form
\begin{equation}
\label{eqn:R1}
\mathscr{R}_{\nu+1}^{(1)}=\sum_{l\ge\nu+1}\mathscr{R}^{(1)}_{\nu+1,l}\\
=\sum_{l\ge\nu+1}\sum_{\lambda\ge 1}\frac{a_1}{\norm{r_1}^{\lambda}}\sum_{s\in\mathbb{Z}^4}d_{l,\lambda,s}^{(1)}\cos(s_1f+s_2g+s_3h+s_4E_1)\sigma^{l}\;,
\end{equation}
where the coefficients $d_{l,\lambda,s}^{(1)}$ satisfy the relations
\begin{equation*}
d_{l,\lambda,s}^{(1)}=d_{l,\lambda,s}^{(1)}(\delta L,e,\eta,\iota_c,\iota_s,;\mu,L_*,a_1,e_1,\eta_1)=
\begin{cases}
d^{\prime(1)}_{l,\lambda,p}\;,& s_1=s_4=0,\,(s_2,s_3)=p\;,\vspace{1mm}\\
d^{\prime\prime(1)}_{l,\lambda,s}\;,& (s_1,s_4)\neq (0,0)\;,
\end{cases}
\in\mathbb{R}\;.
\end{equation*}
These last algebraic operations conclude the first normalization step. \\

%------------------------------------------------------------------------------------
\subsubsection{Loop: normalization of the $\boldsymbol{\sigma^{\nu+j-1}}$-terms}
\label{subsubsec:loop}
%------------------------------------------------------------------------------------
The procedure followed in the first step can be repeated iteratively in order to normalize consecutively terms of order $\sigma^{\nu+j-1}$, with each time an $\mathcal{O}(\sigma^{\nu+j})$ remainder, for $\nu,j>1$. As anticipated in Remark \ref{rem:smalle}, the iterative procedure described below fails in the case $\nu=1$ at step $j=2$, so an adjustment (involving one more iteration) is required, as discussed in Subsection \ref{subsubsec:nueq1} below. 

The $j$-th normalization step is carried out as follows from the next proposition. 
\begin{proposition}
\label{prop:normj}
Assume $\nu\ge 2$, $\nu_1\ge 1$. Assume that the Hamiltonian before the $j$-th normalization step has the form:
\begin{equation}
\label{hamjm1}
\mathscr{H}^{(j-1)}=\mathscr{Z}_0+\sum_{l=1}^{j-1}\mathscr{Z}^{(l)}_{\nu+l-1}+\mathscr{R}^{(j-1)}_{\nu+j-1}\;	
\end{equation}
where
\begin{equation}
	\label{eqn:Znulm1jm1}
	\mathscr{Z}^{(l)}_{\nu+l-1}=\sigma^{\nu+l-1}\sum_{\lambda\ge1}\sum_{p\in\mathbb{Z}^2}\zeta^{(l)}_{\nu+l-1,\lambda,p}\cos(p_1g+p_2h)\;.
\end{equation}
\begin{multline}
\label{eqn:Rjm1}
	\mathscr{R}^{(j-1)}_{\nu+j-1}=\sum_{l\ge\nu+j-1}\mathscr{R}^{(j-1)}_{\nu+j-1,l}=\sum_{l\ge\nu+j-1}\sum_{\lambda\ge 1}\frac{a_1}{\norm{r_1}^{\lambda}}\left(\sum_{p\in\mathbb{Z}^2}d_{l,\lambda,p}^{\prime(j-1)}\cos(p_1g+p_2h)\phantom{\sum_{\substack{s\in\mathbb{Z}^4\\(s_1,s_4)\neq(0,0)}}}\right.\\
	\left.\phantom{\sum_{\substack{s\in\mathbb{Z}^4\\(s_1,s_4)\neq(0,0)}}}+\sum_{\substack{s\in\mathbb{Z}^4\\(s_1,s_4)\neq(0,0)}}d_{l,\lambda,s}^{\prime\prime(j-1)}\cos(s_1f+s_2g+s_3h+s_4E_1\right)\sigma^l\;,
\end{multline}
for some real coefficients $\zeta^{(l)}_{\nu+l-1,\lambda,p}$, $d_{l,\lambda,p}^{\prime(j-1)}$, $d_{l,\lambda,s}^{\prime\prime(j-1)}$ specified at previous steps, where 
\begin{equation*}
\zeta_{\nu,\lambda,p}^{(1)}=
\begin{cases}
	q_{\nu,p}^{\prime}\;,&\lambda=1\\
	0\;,&\lambda>1
\end{cases}
\end{equation*}
by \eqref{eqn:Z1nu}. \\

\noindent Define the $j$-th step Lie generating function $\chi_{\nu+j-1}^{(j)}$ as
\begin{multline}
\label{eqn:chij}
	\chi_{\nu+j-1}^{(j)}=\frac{\phi_1}{n_1}\sigma^{\nu+j-1+\nu_1}\sum_{\lambda\ge 1}\sum_{\psi=1}^{\lambda}\frac{1}{a_1^{\psi-1}\norm{r_1}^{\lambda-\psi}}\sum_{p\in\mathbb{Z}^2}d_{\nu+j-1,\lambda,p}^{\prime(j-1)}\cos(p_1g+p_2h)\\
	+\sigma^{\nu+j-1}\sum_{\lambda\ge 1}\frac{1}{\norm{r_1}^{\lambda-1}}\sum_{\substack{s\in\mathbb{Z}^4\\(s_1,s_4)\neq(0,0)}}\frac{d^{\prime\prime(j-1)}_{\nu+j-1,\lambda,s}}{s_1 n_*+s_4n_1}\sin(s_1f+s_2g+s_3h+s_4E_1)\;.
\end{multline}

\noindent
Then, the Hamiltonian $\mathscr{H}^{(j)}$ produced by the Lie operation $\mathscr{H}^{(j)}=\exp\left(\mathcal{L}_{\chi_{\nu+j-1}^{(j)}}\right)\mathscr{H}^{(j-1)}$ has the form
\begin{equation}
	\label{eqn:Lieseriesj}
	\mathscr{H}^{(j)}=\exp\left(\mathcal{L}_{\chi_{\nu+j-1}^{(j)}}\right)\mathscr{H}^{(j-1)}=\mathscr{Z}_0+\sum_{l=1}^{j}\mathscr{Z}^{(l)}_{\nu+l-1}+\mathscr{R}^{(j)}_{\nu+j}\;,	
\end{equation}
where
\begin{equation}
	\label{eqn:Znulm1j}
	\mathscr{Z}^{(j)}_{\nu+j-1}=\sigma^{\nu+j-1}\sum_{\lambda\ge1}\sum_{p\in\mathbb{Z}^2}\zeta^{(j)}_{\nu+j-1,\lambda,p}\cos(p_1g+p_2h)
\end{equation}
with
\begin{equation}
\label{eqn:Znulm1jcoef}
	\zeta^{(j)}_{\nu+j-1,\lambda,p}=\frac{1}{a_1^{\lambda-1}}d^{\prime(j-1)}_{\nu+j-1,\lambda,p}\;,
\end{equation}
and
\begin{multline}
\label{eqn:Rj}
	\mathscr{R}^{(j)}_{\nu+j}=\sum_{l\ge\nu+j}\mathscr{R}^{(j)}_{\nu+j,l}=\sum_{l\ge\nu+j}\sum_{\lambda\ge 1}\frac{a_1}{\norm{r_1}^{\lambda}}\left(\sum_{p\in\mathbb{Z}^2}d_{l,\lambda,p}^{\prime(j)}\cos(p_1g+p_2h)\phantom{\sum_{\substack{s\in\mathbb{Z}^4\\(s_1,s_4)\neq(0,0)}}}\right.\\
	\left.\phantom{\sum_{\substack{s\in\mathbb{Z}^4\\(s_1,s_4)\neq(0,0)}}}+\sum_{\substack{s\in\mathbb{Z}^4\\(s_1,s_4)\neq(0,0)}}d_{l,\lambda,s}^{\prime\prime(j)}\cos(s_1f+s_2g+s_3h+s_4E_1\right)\sigma^l\;,
\end{multline}
with real coefficients $d_{l,\lambda,p}^{\prime(j)}$, $d_{l,\lambda,s}^{\prime\prime(j)}$ computed from the known coefficients $\zeta^{(l)}_{\nu+l-1,\lambda,p}$ ($l=1,\ldots,j-1$), $d_{l,\lambda,p}^{\prime(j-1)}$, $d_{l,\lambda,s}^{\prime\prime(j-1)}$.

\end{proposition}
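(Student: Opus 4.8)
The plan is to treat this as the generic inductive step of the normalization loop, repeating \emph{mutatis mutandis} the argument of Proposition \ref{prop:hotstep1}, the only genuinely new features being the powers $1/\norm{r_1}^{\lambda}$ carried by the remainder \eqref{eqn:Rjm1} and the possible explicit dependence of its coefficients on the eccentricity. First I would expand
\[
\mathscr{H}^{(j)}=\exp\left(\mathcal{L}_{\chi_{\nu+j-1}^{(j)}}\right)\mathscr{H}^{(j-1)}=\mathscr{H}^{(j-1)}+\{\mathscr{H}^{(j-1)},\chi_{\nu+j-1}^{(j)}\}+\tfrac12\{\{\mathscr{H}^{(j-1)},\chi_{\nu+j-1}^{(j)}\},\chi_{\nu+j-1}^{(j)}\}+\ldots
\]
and collect the book-keeping order $\sigma^{\nu+j-1}$. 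By \eqref{hamjm1} the only objects contributing at this order are $\mathscr{R}^{(j-1)}_{\nu+j-1,\nu+j-1}$, sitting already in $\mathscr{H}^{(j-1)}$, and the part of $\{\mathscr{Z}_0,\chi_{\nu+j-1}^{(j)}\}$ that stays at the order of $\chi_{\nu+j-1}^{(j)}$. The generating function \eqref{eqn:chij} is engineered so that these combine into the homological balance \eqref{eqn:homeq}, namely $\{\mathscr{Z}_0,\chi_{\nu+j-1}^{(j)}\}+\mathscr{R}^{(j-1)}_{\nu+j-1,\nu+j-1}=\mathscr{Z}^{(j)}_{\nu+j-1}+\mathcal{O}(\sigma^{\nu+j})$, with $\mathscr{Z}^{(j)}_{\nu+j-1}$ depending on $g,h$ only.

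Verifying this balance splits into two independent computations. For the $f,E_1$-dependent harmonics I would use $\{\mathscr{Z}_0,\cdot\}=-n_*\,\mathrm{d}\chi/\mathrm{d}\ell-n_1\,\mathrm{d}\chi/\mathrm{d}M_1$ together with the chain rules \eqref{eqn:sigmanu1dFdl}, \eqref{eqn:sigmanu1dFdM1}: at leading order each monomial $\sin(s_1f+s_2g+s_3h+s_4E_1)$ picks up the factor $-(s_1n_*+s_4n_1)\,a_1/\norm{r_1}$, so the divisors $s_1n_*+s_4n_1$ appearing in \eqref{eqn:chij} reproduce exactly the $(s_1,s_4)\neq(0,0)$ harmonics of $\mathscr{R}^{(j-1)}_{\nu+j-1,\nu+j-1}$ with opposite sign, and these divisors never vanish by the non-resonance hypothesis of Remark \ref{rem:Dioph}. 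For the purely secular harmonics, which enter weighted by $a_1/\norm{r_1}^{\lambda}$, the $\phi_1/n_1$ term of \eqref{eqn:chij} is the key: since $\mathrm{d}\phi_1/\mathrm{d}M_1=a_1/\norm{r_1}-1$ by \eqref{eqn:dE1dM1}, acting with $\{\mathscr{Z}_0,\cdot\}$ on the $\phi_1$-part turns the inner sum $\sum_{\psi=1}^{\lambda}a_1^{1-\psi}\norm{r_1}^{\psi-\lambda}$ into a telescoping sum whose value is $a_1/\norm{r_1}^{\lambda}-a_1^{1-\lambda}$; adding this to the secular part of the remainder cancels the $\norm{r_1}$-dependent piece and leaves precisely the residue $a_1^{1-\lambda}d^{\prime(j-1)}_{\nu+j-1,\lambda,p}$, which is $\eqref{eqn:Znulm1jcoef}$ and is manifestly independent of $\norm{r_1}$ (hence of $E_1$), as a normal-form term must be.

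The main obstacle is the remainder estimate: I must show that \emph{every} surviving contribution to $\mathscr{H}^{(j)}$ — the higher-order part of $\{\mathscr{Z}_0,\chi\}$, the brackets $\{\mathscr{Z}^{(l)}_{\nu+l-1},\chi\}$ for $l=1,\dots,j-1$, the bracket $\{\mathscr{R}^{(j-1)}_{\nu+j-1},\chi\}$, and all iterated brackets with $n\geq2$ — is $\mathcal{O}(\sigma^{\nu+j})$. Naively each Poisson bracket against $\chi=\mathcal{O}(\sigma^{\nu+j-1})$ raises the order, but the negative $\sigma$-powers hidden in the Poisson structure may lower it, exactly through the three channels catalogued in the proof of Proposition \ref{prop:hotstep1}: the eccentricity channel (i) (up to $\sigma^{-2}$), the $f$-channel (ii) ($\sigma^{-1}$), and the $\phi_1$-channel (iii) ($\sigma^{-\nu_1}$). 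As before, channel (iii) is neutralized by the compensating prefactor $\sigma^{\nu+j-1+\nu_1}$ attached to $\phi_1$ in \eqref{eqn:chij}, and the $\partial f/\partial\delta L$ part of channel (ii) cancels identically by the antisymmetry \eqref{eqn:fdeltaLPoiss}. The delicate point, and the reason for the hypothesis $\nu\geq2$, is the eccentricity channel: a crude count allowing the full $\sigma^{-2}$ lowering per bracket (for instance in $\{\mathscr{Z}^{(1)}_{\nu},\chi\}$, the lowest-order factor) only yields $\mathcal{O}(\sigma^{2\nu+j-3})$ and would require $\nu\geq3$, so one must invoke the finer cancellation of the leading $\sigma^{-2}$ terms seen in the passage following \eqref{eqn:ederivPoiss} — which rests on Lemma \ref{lemma:dalem} and the combination \eqref{eqn:fGPoiss} — reducing the effective lowering to $\sigma^{-1}$ and bringing the requirement down to $\nu\geq2$. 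I expect the careful bookkeeping of \emph{which} factors retain explicit $e$-dependence (where Lemma \ref{lemma:dalem} no longer forces $s_1=s_2$) to be the genuinely laborious part; this is precisely the threshold at which the scheme collapses for $\nu=1$, as anticipated in Remark \ref{rem:smalle}.

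Once the bound $\mathscr{R}^{(j)}_{\nu+j}=\mathcal{O}(\sigma^{\nu+j})$ is secured, I would close the argument with the same cosmetic pass as in Step 1: substitute $\phi_1=e_1\sin E_1$ wherever it occurs and multiply every term lacking a $1/\norm{r_1}$ factor by $a_1(1-e_1\sigma^{\nu_1}\cos E_1)/\norm{r_1}=1$. This restores the Fourier-times-multipole shape \eqref{eqn:Rj}, with real coefficients $d^{\prime(j)}_{l,\lambda,p}$, $d^{\prime\prime(j)}_{l,\lambda,s}$ obtained algebraically from the data of the previous step, and completes the induction.
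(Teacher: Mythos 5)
Your proposal is correct and follows essentially the same route as the paper: you solve the homological equation by pairing the $(s_1,s_4)\neq(0,0)$ harmonics with the divisors $s_1n_*+s_4n_1$ and the secular, $\norm{r_1}^{-\lambda}$-weighted harmonics with the $\phi_1$-term whose inner sum telescopes under $\mathrm{d}/\mathrm{d}M_1$ to $a_1/\norm{r_1}^{\lambda}-a_1^{1-\lambda}$ (this is exactly the paper's geometric-sum solution for $\hat{d}^{\prime(j-1)}_{\nu+j-1,\lambda,p}$, yielding \eqref{eqn:Znulm1jcoef}), and you then control the remainder by tracking the three $\sigma$-lowering channels, correctly identifying that the eccentricity channel acting on the lowest-order normal-form factor is what forces $\nu\ge 2$. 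The only cosmetic difference is that at step $j\ge 2$ the relevant cancellation is the vanishing of the full antisymmetric combination \eqref{eqn:ederivexpr} for the explicitly $e$-independent factor $\mathscr{Z}^{(1)}_{\nu}$, rather than the one-sided argument following \eqref{eqn:ederivPoiss}, a nuance you implicitly acknowledge when flagging the bookkeeping of explicit $e$-dependence as the laborious part.
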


\begin{proof}
	We repeat the strategy of Proposition \ref{prop:hotstep1} and look for a generating Hamiltonian this time dependent on $\norm{r_1}$:
	\begin{multline*}
	\chi_{\nu+j-1}^{(j)}(f,g,h,E_1,\phi_1,\norm{r_1},\delta L,e,\eta,\iota_c,\iota_s)\\
	=\sigma^{\nu+j-1}\left(\phi_1\sigma^{\nu_1}\sum_{\lambda\ge 1}\sum_{p\in\mathbb{Z}^2}\hat{d}^{\prime(j-1)}_{\nu+j-1,\lambda,p}(\norm{r_1},\delta L,e,\eta,\iota_c,\iota_s)\cos(p_1g+p_2h)\phantom{\sum_{\substack{s\in\mathbb{Z}^4\\(s_1,s_4)\neq(0,0)}}}\right.\\
	+\left.\sum_{\lambda\ge1}\sum_{\substack{s\in\mathbb{Z}^4\\(s_1,s_4)\neq(0,0)}}\hat{d}^{\prime\prime(j-1)}_{\nu+j-1,\lambda,s}\sin(s_1+s_2g+s_3h+s_4E_1)\right)\;.
	\end{multline*}
	Requiring $\{\mathscr{Z}_0,\chi_{\nu+j-1}^{(j)}\}+\mathscr{R}^{(j-1)}_{\nu+j-1,\nu+j-1}$ to be $\mathcal{O}(\sigma^{\nu+j})$ in fast angles we come up with
	\begin{gather*}
-n_*\hat{d}^{\prime\prime(j-1)}_{\nu+j-1,\lambda,s}s_1-n_1\hat{d}^{\prime\prime(j-1)}_{\nu+j-1,\lambda,s}s_4+\frac{1}{\norm{r_1}^{\lambda-1}}d^{\prime\prime(j-1)}_{\nu+j-1,\lambda,s}=0\;,\\
-n_1\frac{a_1}{\norm{r_1}}\hat{d}^{\prime(j-1)}_{\nu+j-1,\lambda,p}+n_1\hat{d}^{\prime(j-1)}_{\nu+j-1,\lambda,p}+\frac{a_1}{\norm{r_1}^{\lambda}}d^{\prime(j-1)}_{\nu+j-1,\lambda,p}=\frac{1}{a_1^{\lambda-1}}d^{\prime(j-1)}_{\nu+j-1,\lambda,p}\;,
\end{gather*}
that is, for $\lambda\ge1$,
\begin{align*}
\hat{d}_{\nu+j-1,\lambda,s}^{\prime\prime(j-1)}&=\frac{1}{\norm{r_1}^{\lambda-1}}\frac{d_{\nu+j-1,\lambda,s}^{\prime\prime(j-1)}}{s_1n_*+s_4n_1}\;,\quad s\in\mathbb{Z}^4\colon(s_1,s_4)\neq(0,0)\;,\\
\hat{d}_{\nu+j-1,\lambda,p}^{\prime(j-1)}&=\frac{1}{a_1^{\lambda-1}}\frac{d_{\nu+j-1,\lambda,p}^{\prime(j-1)}}{n_1}\sum_{\psi=0}^{\lambda-1}\left(\frac{a_1}{\norm{r_1}}\right)^{\psi}=\frac{d_{\nu+j-1,\lambda,p}^{\prime(j-1)}}{n_1}\sum_{\psi=1}^{\lambda}\frac{1}{a_1^{\psi-1}\norm{r_1}^{\lambda-\psi}}\;,\quad p\in\mathbb{Z}^2\;,
\end{align*}
which proves Eq.(\ref{eqn:chij}), and new accumulated addenda in normal form
\begin{equation*}
\mathscr{Z}^{(j)}_{\nu+j-1}=\sigma^{\nu+j-1}\sum_{\lambda\ge 1}\frac{1}{a_1^{\lambda-1}}\sum_{p}d_{\nu+j-1,\lambda,p}^{\prime(j-1)}\cos(p_1g+p_2h)\;.
\end{equation*}
which proves Eq.(\ref{eqn:Znulm1jcoef}). It remains to demonstrate that the expression \eqref{eqn:Rj} is $\mathcal{O}(\sigma^{\nu+j})$. The proof is done by induction: for $j=2$ we get
\begin{multline}
\label{eqn:H2}
\mathscr{H}^{(2)}=\mathscr{Z}_0+\mathscr{Z}_{\nu}^{(1)}+\mathscr{Z}^{(2)}_{\nu+1}+\mathcal{O}(\sigma^{\nu+2})+\sum_{l\ge\nu+2}\mathscr{R}^{(1)}_{\nu+1,l}+\{\mathscr{Z}_{\nu}^{(1)},\chi_{\nu+1}^{(2)}\}\\
+\{\mathscr{R}_{\nu+1}^{(1)},\chi_{\nu+1}^{(2)}\}+\ldots+\sum_{n\ge 2}\frac{1}{n!}\{\ldots\{\{\mathscr{H}^{(1)},\underbrace{\chi_{\nu+1}^{(2)}\},\chi_{\nu+1}^{(2)}\},\ldots,\chi_{\nu+1}^{(2)}}_{n}\}\;.
\end{multline}
Similarly as in Proposition \ref{prop:hotstep1}, a lowering of the book-keeping exponents in a Poisson bracket of the form $\{F,\chi_{\nu+1}^{(2)}\}$ can occur through derivatives of the form \ref{item:ederiv}. However, this time the latter can only appear in a Poisson bracket via the combination
\begin{equation}
\label{eqn:ederivexpr}
\sigma^{-1}\left(\frac{\partial f}{\partial\ell}\frac{\partial e}{\partial\delta L}\left(\frac{\partial F}{\partial f}\frac{\partial\chi_{\nu+1}^{(2)}}{\partial e}-\frac{\partial F}{\partial e}\frac{\partial\chi_{\nu+1}^{(2)}}{\partial f}\right)+\right.
\left.\frac{\partial e}{\partial G}\left(\frac{\partial F}{\partial g}\frac{\partial\chi_{\nu+1}^{(2)}}{\partial e}-\frac{\partial F}{\partial e}\frac{\partial\chi_{\nu+1}^{(j)}}{\partial g}\right)\right)\;;
\end{equation}
so we can infer that
\begin{equation*}
\{\mathscr{Z}_{\nu}^{(1)},\chi_{\nu+1}^{(2)}\}=\mathcal{O}(\sigma^{2\nu+1})\;,\quad\{\mathscr{R}_{\nu+1}^{(1)},\chi_{\nu+1}^{(2)}\}=\mathcal{O}(\sigma^{2\nu})\;,
\end{equation*}
\begin{multline*}
\frac{1}{n!}\{\ldots\{\{\mathscr{H}^{(1)},\underbrace{\chi_{\nu+1}^{(2)}\},\chi_{\nu+1}^{(2)}\},\ldots,\chi_{\nu+1}^{(2)}}_{n\ge 2}\}\\
=\mathcal{O}(\sigma^{\min\{n(\nu+1)-2(n-1),\,n(\nu+1)+\nu-2(n-1),\,(n+1)(\nu+1)-2n\}})=\mathcal{O}(\sigma^{n(\nu-1)+2})
\end{multline*}
because \eqref{eqn:fdeltaLPoiss}, \eqref{eqn:fGPoiss}, \eqref{eqn:ederivexpr} vanish when $F=F_1=\mathscr{Z}_{\nu}^{(1)}$. Now, for all $\nu\ge2$, $n(\nu-1)+2>\nu+1$, $n\ge2$, hence, the proposition is valid for $j=2$. For $j\ge 3$, we have
\begin{multline}
\label{eqn:Hj}
\mathscr{H}^{(j)}=\mathscr{Z}_0+\mathscr{Z}_{\nu}^{(1)}+\ldots+\mathscr{Z}^{(j-1)}_{\nu+j-2}+\mathscr{Z}^{(j)}_{\nu+j-1}+\mathcal{O}(\sigma^{\nu+j})+\sum_{l\ge\nu+j}\mathscr{R}^{(j-1)}_{\nu+j-1,l}\\
+\{\mathscr{Z}_{\nu}^{(1)}+\ldots+\mathscr{Z}^{(j-1)}_{\nu+j-2},\chi_{\nu+j-1}^{(j)}\}+\{\mathscr{R}_{\nu+j-1}^{(j-1)},\chi_{\nu+j-1}^{(j)}\}+\ldots\\
+\sum_{n\ge 2}\frac{1}{n!}\{\ldots\{\{\mathscr{H}^{(j-1)},\underbrace{\chi_{\nu+j-1}^{(j)}\},\chi_{\nu+j-1}^{(j)}\},\ldots,\chi_{\nu+j-1}^{(j)}}_{n}\}\;,
\end{multline}
and analogously
\begin{gather*}
\{\mathscr{Z}_{\nu}^{(1)},\chi_{\nu+j-1}^{(j)}\}=\mathcal{O}(\sigma^{2\nu+j-1})\;,\quad\{\mathscr{Z}_{\nu+j-2}^{(j-1)},\chi_{\nu+j-1}^{(j)}\}=\mathcal{O}(\sigma^{2\nu+2j-5})\;,\\
\{\mathscr{R}_{\nu+j-1}^{(j-1)},\chi_{\nu+j-1}^{(j)}\}=\mathcal{O}(\sigma^{2\nu+2j-4})\;,
\end{gather*}
\begin{multline*}
\frac{1}{n!}\{\ldots\{\{\mathscr{H}^{(j-1)},\underbrace{\chi_{\nu+j-1}^{(j)}\},\chi_{\nu+j-1}^{(j)}\},\ldots,\chi_{\nu+j-1}^{(j)}}_{n\ge 2}\}\\
=\mathcal{O}(\sigma^{\min\{n(\nu+j-1)-2(n-1),\,n(\nu+j-1)+\nu-2(n-1),\,n(\nu+j-1)+\nu+j-2-2n,\,(n+1)(\nu+j-1)-2n\}})\\
=\mathcal{O}(\sigma^{n(\nu+j-3)+2})\;.
\end{multline*}
However, since $\nu>1$, $n\ge 2$, we readily find $n(\nu+j-3)+2>\nu+j-1$, which concludes the proof.\\
\end{proof}

%-----------------------------------------------------------
\subsubsection{The case $\nu=1$}
\label{subsubsec:nueq1}
%-----------------------------------------------------------
Coming to $\nu=1$, one realizes that \eqref{eqn:H2} produces same order $\sigma^2$ non-normalized terms via $\{\mathscr{R}^{(1)}_{2},\chi_{2}^{(2)}\}$ and $\{\ldots\{\{\mathscr{Z}_0+\mathscr{R}_{2}^{(1)},\chi_{2}^{(2)}\},\chi_{2}^{(2)}\},\ldots,\chi_{2}^{(2)}\}$, namely the resulting remainder is $\mathscr{R}^{(2)}_2$, so the scheme in Proposition \ref{prop:normj} is not directly applicable beyond $j=1$. Despite this, it is worth noticing that if we manage to get rid of these spurious terms, by performing, for instance, an extra normalization $\text{II}$, such that the new outcome returns $\mathscr{R}^{(\text{II})}=\mathscr{R}^{(\text{II})}_3$, then the algorithm \eqref{eqn:Lieseriesj} will work for $j\ge3$ upon restarting the recursion from iteration $\text{II}$ in place of $2$. This is precisely the claim we are about to show to complete the treatment.\\
Let us write \eqref{eqn:H2} as $\mathscr{H}^{(2)}=\mathscr{Z}_0+\mathscr{Z}_1^{(1}+\mathscr{Z}_2^{(2)}+\mathscr{R}^{(2)}_2$. Introduce the extra second normalization $\text{II}$ based on Proposition \ref{prop:normj} targeted to $\mathscr{R}_{2,2}^{(2)}$ with generating function $\chi_{2}^{(\text{II})}$. Then we have the following.
\begin{proposition}
\label{prop:adjust}
For $\nu=1$ and any $\nu_1\ge1$, 
\begin{equation}
\label{eqn:normII}
\mathscr{H}^{(\text{II})}=\exp\left(\mathcal{L}_{\chi_{2}^{(\text{II})}}\right)\mathscr{H}^{(2)}=\mathscr{Z}_0+\mathscr{Z}_1^{(1)}+\mathscr{Z}_2^{(2)}+\mathscr{Z}_2^{(\text{II})}+\mathscr{R}^{(\text{II})}_3\;.
\end{equation}
Moreover the loop composed by \eqref{eqn:Lieseriesj}--\eqref{eqn:Rj} in Proposition \ref{prop:normj} holds true for any $j\ge 4$ under the modifications
\begin{equation}
\label{eqn:H3}
\mathscr{H}^{(3)}=\exp\left(\mathcal{L}_{\chi_3^{(3)}}\right)\mathscr{H}^{(\text{II})}=\mathscr{Z}_0+\mathscr{Z}_1^{(1)}+\mathscr{Z}_2^{(2)}+\mathscr{Z}_2^{(\text{II})}+\mathscr{Z}_3^{(3)}+\mathscr{R}^{(3)}_{4}\;,
\end{equation} 
\begin{equation}
\label{eqn:Hjmod}
\mathscr{H}^{(j)}=\exp\left(\mathcal{L}_{\chi_j^{(j)}}\right)\mathscr{H}^{(j-1)}=\mathscr{Z}_0+\sum_{l=1}^j\mathscr{Z}^{(l)}_l+\mathscr{Z}_2^{(\text{II})}+\mathscr{R}^{(j)}_{j+1}\;.
\end{equation}
\end{proposition}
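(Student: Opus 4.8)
The plan is to regard the extra normalization $\text{II}$ as a single additional instance of the step of Proposition \ref{prop:normj}, applied to the offending block $\mathscr{R}_{2,2}^{(2)}$, and then to restart the loop \eqref{eqn:Lieseriesj}--\eqref{eqn:Rj} from $j=3$, checking that with $\nu=1$ the book-keeping estimates of Proposition \ref{prop:normj} still close. First I would define $\chi_2^{(\text{II})}$ exactly as in \eqref{eqn:chij}, with the role of $\mathscr{R}^{(j-1)}_{\nu+j-1,\nu+j-1}$ now played by $\mathscr{R}_{2,2}^{(2)}$, i.e.\ by solving the homological equation $\{\mathscr{Z}_0,\chi_2^{(\text{II})}\}+\mathscr{R}_{2,2}^{(2)}=\mathscr{Z}_2^{(\text{II})}+\mathcal{O}(\sigma^3)$, with $\mathscr{Z}_2^{(\text{II})}$ the part independent of $f,E_1$. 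Substituting into $\mathscr{H}^{(\text{II})}=\exp(\mathcal{L}_{\chi_2^{(\text{II})}})\mathscr{H}^{(2)}$ and using $\mathscr{H}^{(2)}=\mathscr{Z}_0+\mathscr{Z}_1^{(1)}+\mathscr{Z}_2^{(2)}+\mathscr{R}^{(2)}_2$ reproduces the splitting \eqref{eqn:normII} at the formal level, so the entire content of this first part reduces to proving that the transformed remainder genuinely starts at order $\sigma^3$, that is $\mathscr{R}^{(\text{II})}=\mathscr{R}^{(\text{II})}_3$.

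This is the delicate point and the main obstacle. Since $\chi_2^{(\text{II})}=\mathcal{O}(\sigma^2)$, the brackets $\{\mathscr{R}_2^{(2)},\chi_2^{(\text{II})}\}$ and the nested ones threaten, through the maximal $\sigma^{-2}$ lowering of class \ref{item:ederiv}, to regenerate order-$\sigma^2$ terms, which is exactly the pathology that broke the naive step $j=2$ of \eqref{eqn:H2}. I would resolve this by establishing that the spurious block $\mathscr{R}_{2,2}^{(2)}$, and hence $\chi_2^{(\text{II})}$, carries no residual explicit dependence on $e$ (but only an implicit one through $\eta$): at $\nu=1$ the order-$\sigma^2$ monomials of $\mathscr{R}^{(1)}$ that bear an explicit eccentricity are of type $\mu\,e$, hence of explicit degree one in $e$, and the $\sigma^{-2}$ lowering that produced $\mathscr{R}_{2,2}^{(2)}$ consumes that degree by acting once with $\partial/\partial e$ against a $1/e$ factor coming from $\partial e/\partial\delta L$, $\partial e/\partial G$. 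With $\partial\chi_2^{(\text{II})}/\partial e=0$ the combination \eqref{eqn:ederivPoiss}--\eqref{eqn:ederivexpr} collapses precisely as in Proposition \ref{prop:hotstep1}, so class \ref{item:ederiv} can lower the book-keeping exponent by at most $\sigma^{-1}$; together with the identity $s_1=s_2$ of Lemma \ref{lemma:dalem} for the surviving ($e$-independent) harmonics, which annihilates \eqref{eqn:fGPoiss}, this forces $\{\mathscr{R}_2^{(2)},\chi_2^{(\text{II})}\}=\mathcal{O}(\sigma^3)$ and all higher nested brackets to be of still higher order, yielding \eqref{eqn:normII}. The crux is thus to show that a \emph{single} extra normalization suffices, i.e.\ that the second pass does not open an infinite cascade at order $\sigma^2$.

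Once $\mathscr{R}^{(\text{II})}_3=\mathcal{O}(\sigma^3)$ is secured, I would obtain \eqref{eqn:H3} and \eqref{eqn:Hjmod} by re-running Proposition \ref{prop:normj} from $j=3$ onward, treating $\mathscr{Z}_2^{(\text{II})}$ as an inert secular summand of order $\sigma^2$ that is simply carried along. The point to verify is that the order estimates of Proposition \ref{prop:normj}, specialised to $\nu=1$, close for all $j\ge 3$: the worst genuine bracket $\{\mathscr{R}^{(j-1)}_{\nu+j-1},\chi_{\nu+j-1}^{(j)}\}$ now lands at $\sigma^{2j-2}$, and $2j-2\ge j+1$ precisely for $j\ge 3$, while the nested brackets give $\mathcal{O}(\sigma^{n(j-2)+2})$ with $n(j-2)+2>j$ for $n\ge2$, $j\ge3$. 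The only borderline term is $\{\mathscr{Z}_2^{(2)},\chi_3^{(3)}\}$ (and, for $j\ge4$, the analogous $\{\mathscr{Z}^{(l)}_l,\chi_j^{(j)}\}$ and $\{\mathscr{Z}_2^{(\text{II})},\chi_j^{(j)}\}$): here I would again invoke the explicit $e$-independence of the accumulated secular blocks $\mathscr{Z}^{(l)}_l$, $\mathscr{Z}_2^{(\text{II})}$ (secular terms depend on the eccentricity only through $\eta$), so that the same suppression of class \ref{item:ederiv} lifts these brackets above $\sigma^{j+1}$ and the generic estimate $n(\nu-1)+2>\nu+j-1$ of Proposition \ref{prop:normj} is recovered with $\nu=1$. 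An induction on $j$ then propagates \eqref{eqn:Hjmod} for every $j\ge 4$, completing the argument.
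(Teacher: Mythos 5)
Your overall strategy coincides with the paper's: perform one extra normalization targeting $\mathscr{R}_{2,2}^{(2)}$, reduce everything to showing that $\mathscr{Z}_2^{(2)}$, $\mathscr{R}_{2,2}^{(2)}$ and hence $\chi_2^{(\text{II})}$ carry no \emph{explicit} dependence on $e$, and then restart the loop of Proposition \ref{prop:normj} from $j=3$; your order counts in the final paragraph also agree with the paper's. The gap is that the two $e$-independence claims on which the whole argument rests are asserted rather than proved, and the blanket justification you offer for one of them is false. ``Secular terms depend on the eccentricity only through $\eta$'' does not hold in general: octupole-type secular harmonics proportional to $e\,e_1\cos(\cdot)$ are explicitly linear in $e$. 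What is true --- and is precisely the content of the paper's Lemma \ref{lemma:e+fastang} --- is that at the specific order $\sigma^2$ with $\nu=1$ no such term can occur: a monomial carrying an explicit $e$ already exhausts the book-keeping budget ($e\sigma\cdot\mu\sigma^{\nu}=\sigma^2$ leaves no room for $e_1$), and the D'Alembert constraints ($s_2=s_1\pm1$, $s_3=-s_4$, regularity at $i_1=0$, rotational invariance) then force $(s_1,s_4)\neq(0,0)$. Establishing this requires the explicit computation of $\mathscr{R}_{2,2}^{(1)}$ carried out in the paper's proof. Without it, an explicitly $e$-dependent secular monomial in $\mathscr{Z}_2^{(2)}$ would make $\{\mathscr{Z}_2^{(2)},\chi_2^{(\text{II})}\}$ land at order $\sigma^2$ through the class-\ref{item:ederiv} lowering, regenerating non-normalized terms and defeating step $\text{II}$.

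The second unproved point is your claim that the $\sigma^{-2}$ lowering ``consumes'' the explicit degree in $e$, so that $\mathscr{R}_{2,2}^{(2)}$ comes out $e$-free. In \eqref{eqn:ederivexpr} the cross terms of the form $\frac{\partial F}{\partial f}\frac{\partial\chi}{\partial e}$ pair the $1/e$ factor of \eqref{eqn:sigmadeddeltaL}, \eqref{eqn:sigmadedG} with the $f$- (or $g$-) derivative of the possibly $e$-\emph{free} part of the other entry, so the naive order-$\sigma^2$ output contains $1/e$-singular pieces that are not removed by a single $\partial/\partial e$. Their cancellation requires that the $e$-free parts of $\mathscr{R}_{2,2}^{(1)}$ and $\chi_2^{(2)}$ still satisfy $s_1=s_2$, i.e., that the D'Alembert property of Lemma \ref{lemma:dalem} is preserved under the first Lie transform; this is the paper's Lemma \ref{lemma:dalemgen} together with Remark \ref{rem:corollary}, neither of which you establish. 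Once these two ingredients (the order-specific Lemma \ref{lemma:e+fastang} and the propagation Lemma \ref{lemma:dalemgen}) are supplied, your bracket-by-bracket estimates close exactly as in the paper.
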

\begin{proof}
	We begin with a necessary generalization of Lemma \ref{lemma:dalem}.
	\begin{lemma}
		\label{lemma:dalemgen}
		Given $F_1,F_2\in\mathcal{C}^{\omega}(\mathbb{T}\times D)$ trigonometric monomials of the form \eqref{eqn:indipe}, or equivalently in terms of the sine, fulfilling the property of Lemma \ref{lemma:dalem}, addenda of the same type in the Lie series transformation applied to $F_1$ with respect to $F_2$ preserve such property.
	\end{lemma}
	\begin{proof}
		Since $\exp\left(\mathcal{L}_{F_2}\right)F_1$ involves the computation of Poisson brackets of functions explicitly independent on $e$, we have that \eqref{eqn:ederivexpr}, with $F_1,F_2$ in place of $F,\chi_{\nu+1}^{(2)}$, is identically null, as well as \eqref{eqn:fGPoiss} because $\partial F_1/\partial f=\partial F_1/\partial g$, $\partial F_2/\partial f=\partial F_2/\partial g$ by assumption. Thus, the bracket $\{F_1,F_2\}$ in the Lie series either does not introduce any eccentricity dependence at all, or only at numerator through \eqref{eqn:sigmadfdl} multiplied by $\cos f$ or $\cos^2f$; therefore its derivatives contain products of cosines (sines) whose coefficients are independent on $e$ like
		\begin{equation*}
		\mathscr{G}_1(s_1f+s_2g+s_3h+s_4E_1)\mathscr{G}_2(u_1f+u_2g+u_3h+u_4E_1)\;,\quad\mathscr{G}_1,\mathscr{G}_2=\cos,\sin\;.
		\end{equation*}
		The arguments are now either summed or subtracted, hence they clearly satisfy the property concerned. By cascade reasoning for further nested brackets we conclude.\\
	\end{proof}
	\begin{remark}
	\label{rem:corollary} 
	A straightforward use of the lemma in conjunction with formulae \eqref{eqn:fdeltaLPoiss}, \eqref{eqn:fGPoiss}, \eqref{eqn:ederivexpr} ($\chi_{\nu+1}^{(2)}$ replaced by generic differentiable function) reveal that any transformed Hamiltonian $\mathscr{H}^{(j)}$ and corresponding generating function $\chi_{\nu+j-1}^{(j)}$ encountered are regular at $e=0$ in agreement with D'Alembert rules, i.e. they never depend on negative powers of $e$. Furthermore, every time one of the two entries of $\{\cdot,\cdot\}$ does not depend on $e$, the upshot due to item \ref{item:ederiv} in the proof of Proposition \ref{prop:hotstep1}, as soon as non-zero, is diminished by $\sigma^{-1}$ instead of $\sigma^{-2}$.\\
	\end{remark}
	\noindent We consider step $\text{II}$:
	\begin{multline}
	\label{eqn:HII}
	\mathscr{H}^{(\text{II})}=\mathscr{Z}_0+\mathscr{Z}_1^{(1)}+\mathscr{Z}_2^{(2)}+\mathscr{Z}_2^{(\text{II})}+\mathcal{O}(\sigma^3)+\sum_{l\ge 3}\mathscr{R}_{2,l}^{(2)}+\{\mathscr{Z}_1^{(1)},\chi_{2}^{(\text{II})}\}+\{\mathscr{Z}_2^{(2)},\chi_{2}^{(\text{II})}\}\\
	+\{\mathscr{R}_2^{(2)},\chi_{2}^{(\text{II})}\}+\ldots+\sum_{n\ge 2}\frac{1}{n!}\{\ldots\{\{\mathscr{H}^{(2)},\underbrace{\chi_{2}^{(\text{II})}\},\chi_{2}^{(\text{II})}\},\ldots,\chi_{2}^{(\text{II})}}_{n}\}\;.
	\end{multline}
	The analysis of the contributions reports these deductions, by which \eqref{eqn:normII} follows.
	\begin{itemize}
		\item $\{\mathscr{Z}_1^{(1)},\chi_{2}^{(\text{II})}\}=\mathcal{O}(\sigma^3)$ because $\mathscr{Z}_1^{(1)}$ is independent on $f,g,e$.
		\item $\{\mathscr{Z}_2^{(2)},\chi_{2}^{(\text{II})}\}=\mathcal{O}(\sigma^4)$
		because $\mathscr{Z}^{(2)}$ and $\chi_2^{(\text{II})}$ fulfil Lemma \ref{lemma:dalemgen}. Indeed, $\mathscr{R}_{2,2}^{(1)}$ depends on $e$ at most linearly by book-keeping rules, so it does $\chi_2^{(2)}$ by construction. At this point we show that for eccentricity dependent terms stemming from $\mathscr{R}_{2,2}^{(1)}$ (or equivalently $\chi_2^{(2)}$) $d^{\prime(1)}_{2,\lambda,p}=0$.
		\begin{lemma}
			\label{lemma:e+fastang}
			Every trigonometric monomial in $\mathscr{R}_{2,2}^{(1)}$ explicitly dependent on $e$ carries the dependence on at least one of the two fast anomalies $f,E_1$ as well, namely corresponding coefficients in \eqref{eqn:R1} are $d_{2,\lambda,s}^{(1)}=d^{\prime\prime(1)}_{2,\lambda,s}$, $(s_1,s_4)\neq(0,0)$.
		\end{lemma}
		\begin{proof}
			By Proposition \ref{prop:hotstep1}, Lemma \ref{lemma:dalem} and \ref{lemma:dalemgen}, the substitution $\phi_1=e_1\sin E_1$ and the formulas listed in Subsection \ref{subsubsec:poissform}, Subsection \ref{subsubsec:bkPoissstr}, we take out of \eqref{eqn:H1} the order $\sigma^2$ remainder and it is not restrictive to assume $\nu_1=1$ in order to include also the $e_1\cos E_1$ dependent term in \eqref{eqn:firsthom}:
			\begin{multline*}
			\mathscr{R}_{2,2}^{(1)}=\mathscr{R}_{1,2}^{(0)}+\frac{a_1}{\norm{r_1}}\Bigg( n_*\left(e_1\cos E_1-\frac{2e\cos f}{\eta^3}\right)\sigma\frac{\partial\chi_1^{(1)}}{\partial f}+\frac{\partial\mathscr{R}_{1,1}^{(0)}}{\partial f}\frac{\partial\chi_1^{(1)}}{\partial\delta L}\\
			-\frac{\partial\mathscr{R}_{1,1}^{(0)}}{\partial \delta L}\frac{\partial\chi_1^{(1)}}{\partial f}-\frac{1}{L_*}\frac{\partial\chi_1^{(1)}}{\partial\iota_c}\left(\iota_c\frac{\partial\mathscr{R}_{1,1}^{(0)}}{\partial f}-\frac{\partial\mathscr{R}_{1,1}^{(0)}}{\partial h}\right)\\
			+\frac{1}{L_*}\frac{\partial\mathscr{R}_{1,1}^{(0)}}{\partial\iota_c}\left(\iota_c\frac{\partial\chi_1^{(1)}}{\partial f}-\frac{\partial\chi_1^{(1)}}{\partial h}\right)-\frac{2\sin f}{L_*e}\sigma^{-1}\frac{\partial\chi_1^{(1)}}{\partial f}
			\left(\frac{\partial\mathscr{R}_{1,2}^{(0)}}{\partial g}-\frac{\partial\mathscr{R}_{1,2}^{(0)}}{\partial f}\right)\Bigg)\\
			-\frac{a_1}{2}\left\{\frac{1}{\norm{r_1}}\left(n_*\left(1+\frac{2e\cos f}{\eta^3}\sigma\right)\frac{\partial\chi_1^{(1)}}{\partial f}+n_1\frac{\partial\chi_1^{(1)}}{\partial E_1}\right),\chi_1^{(1)}\right\}_{2}\;,
			\end{multline*}
			where $\{\cdot,\cdot\}_2$ indicates that we retain only $\sigma^2$ quantities after the operation (in virtue of Lemma \ref{lemma:dalemgen} and Remark \ref{rem:corollary}, inductions derived to demonstrate Proposition \ref{prop:hotstep1} are a coarser bound and no other parts of order $\sigma^2$ come out). Plugging in \eqref{eqn:chi1} and \eqref{eqn:fE1terms} for $l=1,2$ and taking into account Lemma \ref{lemma:dalem}, upon simplifications the contributions involving $e$ result
			\begin{multline}
			\label{eqn:R122}
			\mathscr{R}_{1,2_e}^{(0)}-\frac{a_1en_*}{\eta^3\norm{r_1}}\sigma ^2\sum_{(s_1,s_4)\neq(0,0)}\frac{s_1q_{1,s}^{\prime\prime}}{s_1n_*+s_4n_1}(\cos((1-s_1)f-s_1g-s_3h-s_4E_1)\\
			+\cos((1+s_1)f+s_1g+s_3h+s_4E_1))\;,
			\end{multline}
			where 
			\begin{equation}
			\label{eqn:R012e}
			\mathscr{R}_{1,2_e}^{(0)}=\frac{a_1}{\norm{r_1}}\sigma^2\sum_{s\in\mathbb{Z}^4}q_{2,s}\cos(s_1f+s_2g+s_3h+s_4E_1)\;,\quad q_{2,s}=e\bar{q}_{2,s}\;.
			\end{equation}
			We employ now all D'Alembert rules to show that only the harmonics of interest can exist.\\
			Following the same argument as in Lemma \ref{lemma:dalem}, let us write the cosine input of \eqref{eqn:R012e} using modified Delaunay angles \eqref{eqn:modDelang} also for $\mathcal{P}_1$ in relation to corresponding orbital elements \eqref{eqn:Del} (subscript `$1$'):
			\begin{equation*}
			s_1\tilde{\lambda}+(s_1-s_2)\tilde{p}+(s_2-s_3)\tilde{q}+s_4\tilde{\lambda}_1+(s_4-s_5)\tilde{p}_1+(s_5-s_6)\tilde{q}_1\;,\quad s_l\in\mathbb{Z}\;,
			\end{equation*}
			in which $\tilde{p}_1=\tilde{q}_1=0$. For the elimination of the apparent singularity at $e=0$, we must have $1-|s_1+s_2|\ge0$ and even, hence $s_2=s_1\pm1$. Then, since $\mathscr{R}_{1,2_e}^{(0)}$ is independent on $e_1$ by book-keeping setting, analogously we must end up with $s_4=s_5$. Regarding instead the regularity at $i_1=0$, because of the absence of $i_1$ we must conclude that $0-|s_5-s_6|\in2\mathbb{N}$, namely $s_5=s_6$. At this stage, we invoke the invariance under rotation around the $Z$ axis, which prescribes
			\begin{equation*} s_1-s_1+s_2-s_2+s_3+s_4-s_4+s_5-s_5+s_6=s_3+s_6=0\;,
			\end{equation*}
			and summing up this implies $s_3=-s_4$. Ultimately, concerning the inclination, we must ensure that $l-|s_2-s_3|\in2\mathbb{N}$, with $l$ even as well again being $i_1$ not involved, thus $s_2=s_3\pm2n$, $n\le l/2$ natural number. Putting all together we arrive at
			\begin{equation*}
			s_1f+s_2g+s_3h+s_4E_1\implies s_1f+(s_1\pm 1)g+(s_1\mp 2n\pm 1)h+(\pm 2n\mp 1-s_1)E_1\;,
			\end{equation*}
			which always depends on at least one among $f,E_1$ since the coefficients $s_1$, $\pm 2n\mp 1-s_1$ never vanish simultaneously.\\
			By means of an identical reasoning and given the preservation of D'Alembert rules under $\exp\left(\mathcal{L}_{\chi_1^{(1)}}\right)$, we achieve the same outcome for the remaining part of \eqref{eqn:R122} after replacing $s_1\mapsto1\pm s_1$, indeed we find
			\begin{equation*}
			(1\pm s_1)+(1\pm s_1\pm 1)g+(1\pm s_1\mp 2n\pm 1)h+(\pm 2n\mp 1-1\mp s_1)E_1\;,
			\end{equation*}
			and no solutions to $1\pm s_1=0$, $\pm 2n\mp 1-1\mp s_1=0$.
		\end{proof}
		Given that the order $2$ normal form is sourced from the part of $\mathscr{R}_{2,2}^{(1)}$ explicitly independent on fast angles, it turns out that it is free of $e$. Finally, $\mathscr{R}^{(2)}_{2,2}$ is free of $e$ too, being generated by terms in $\{\mathscr{R}_{2,2}^{(1)},\chi_{2}^{(2)}\}$ and $\{\ldots\{\{\mathscr{Z}_0+\mathscr{R}_{2,2}^{(1)},\chi_2^{(2)}\},\chi_2^{(2)}\},\ldots,\chi_2^{(2)}\}$ subjected to computation \ref{item:ederiv} of Proposition \ref{prop:hotstep1} (Remark \ref{rem:corollary}). Again by construction, the same applies to $\chi_2^{(\text{II})}$.
		\item $\{\mathscr{R}_2^{(2)},\chi_{2}^{(\text{II})}\}=\mathcal{O}(\sigma^4)$ by Remark \ref{rem:corollary}.
		\item $\displaystyle\frac{1}{n!}\{\ldots\{\{\mathscr{H}^{(2)},\underbrace{\chi_{2}^{(\text{II})}\},\chi_{2}^{(\text{II})}\},\ldots,\chi_{2}^{(\text{II})}}_{n\ge 2}\}=\mathcal{O}(\sigma^4)$ consequently.
		\end{itemize}
		In order to conclude, we just need to check that the next step gives rise to an $\mathcal{O}(\sigma^4)$ perturbation and the cycle of normalizations can restart for $j\ge4$ in light of the bounds on $\sigma$ from \eqref{eqn:Hj} at the end of the proof of Proposition \ref{prop:normj}. Upon repeating the usual argument, it is easy to see that the only bracket worth investigating is $\{\mathscr{Z}_2^{(\text{II})},\chi_3^{(3)}\}$, that is, nevertheless, $\mathcal{O}(\sigma^4)$ because $\mathscr{Z}_2^{(\text{II})}$ is made out of $\mathscr{R}_{2,2}^{(2)}$ independent on $e$.\\
		\end{proof}

\begin{remark}
	By the above argument it is immediate to realize that even $p_2\equiv0$ in \eqref{eqn:chi1} and \eqref{eqn:fE1terms} for $l=\nu$, so $q^{\prime}_{\nu,p}=0$ for all $p\neq(0,0)$.\\
\end{remark}

Serving as an example, a detailed demonstration of the normalization procedure exposed in the present section for a simple model, containing just few terms of the disturbing function, is presented in Appendix \ref{sec:appendixex}.\\

% Section 3: Applications
%%%%%%%%%%%%%%%%%%%%%%%%%%%%%%%%%%%%%%%%%%%%%%%%%%%%%%%%%%
\section{Numerical tests}
\label{sec:app}
%%%%%%%%%%%%%%%%%%%%%%%%%%%%%%%%%%%%%%%%%%%%%%%%%%%%%%%%%%

%============================================================================
\subsection{Computer-algebraic implementation of the normalization algorithm}
\label{subsec:implem}
%============================================================================
Implementing the above normalization procedure, e.g. by use of a Computer Algebra System (CAS), requires working with a finite truncation of the initial Hamiltonian model \eqref{eqn:hammump}. To this end, the disturbing function \eqref{eqn:H1Cart} multiplied by $\mu$ can be re-arranged as 
\begin{equation}
\label{eqn:H1Cartnew}
\mu\mathcal{H}_1=-\frac{\mathcal{G}m_0\mu}{\norm{R}}\sum_{\kappa_1=0}^{\infty}
\sum_{\substack{\kappa_2=0\\
		\kappa_2\neq 1}}^{\infty}
\sum_{\kappa_3=0}^{\infty}
\tilde{h}_{\kappa_1,\kappa_2,\kappa_3}
\mu^{\kappa_1}
\left(\frac{2 r_1\cdot R}{\norm{R}^2}\right)^{\kappa_2}
\left(\frac{\norm{r_1}}{\norm{R}}\right)^{2\kappa_3}\;,
\end{equation}
where $\tilde{h}_{\kappa_1,\kappa_2,\kappa_3}$ are real coefficients derived from the coefficients of \eqref{eqn:H1Cart}. A convenient truncation of \eqref{eqn:H1Cartnew} stems from defining two separate truncation orders in powers of $\mu$ (truncation order $k_\mu$), and in powers of $\norm{r_1}/\norm{R}$ (multipole truncation order $k_\text{mp}$), through the formula
\begin{equation}
\label{eqn:H1Carttrunc}
\mathcal{H}_1^{\leq k_\mu,k_\text{mp}}=
-\frac{\mathcal{G}m_0\mu}{\norm{R}}
\sum_{\kappa_1=0}^{k_\mu-1}
\sum_{\kappa_2=0,\kappa\neq 1}^{k_\text{mp}}
\sum_{\kappa_3=0}^{\lfloor k_\text{mp}/2\rfloor}
\tilde{h}_{\kappa_1,\kappa_2,\kappa_3}
\mu^{\kappa_1}
\left(\frac{2 r_1\cdot R}{\norm{R}^2}\right)^{\kappa_2}
\left(\frac{\norm{r_1}}{\norm{R}}\right)^{2\kappa_3}\;,
\end{equation}
where $\lfloor\cdot\rfloor$ is the integer part function.
Working with the truncated Hamiltonian $\mathcal{H}^{\leq k_\mu,k_\text{mp}}=\mathcal{H}_0+\mathcal{H}_1^{\leq k_\mu,k_\text{mp}}$, we then obtain a sequence of secular models $\mathscr{Z}^{(j)}$, $j=1,2,\ldots$, where $j$ denotes the normalization step, computed via the formula
\begin{equation}\label{eqn:zj}
\mathscr{Z}^{(j)}=\mathscr{Z}_0+\sum_{l=1}^{j}\mathscr{Z}^{(l)}_{\nu+l-1}\;.
\end{equation}
In particular, we implement the following steps of the CAS algorithm:
\begin{enumerate}[label=(\roman*)]
	\item 
	for a fixed value of $\mu$, choose values for $k_{\mu},k_{\text{mp}}$, perform the corresponding expansions of the Hamiltonian as in \eqref{eqn:H1Cartnew} and compute the truncated model $\mathcal{H}^{\leq k_\mu,k_\text{mp}}$;
    \item
    choose the reference values of $a_*$ and $e_*$; 
	\item 
	pass to variables $(f,g,h,E_1,\delta L,e,\eta,\iota_c,\iota_s,J_1)$ and parameters $L_*,e_1,a_1,\eta_1$ on the basis of the selected $a_*$;
	\item 
	compute $\nu$ and $\nu_1$ (Eq.(\ref{eqn:nun1}));
	\item 
	set the appropriate book-keeping weights following the rules in Subsection \ref{subsubsec:bkrules} and expand correspondingly the Hamiltonian in $\delta L$ up to $\sigma^{\nu k_{\mu}}$;
	\item 
	drop constants, perform the identity operation \eqref{eqn:trick}, discard book-keeping powers larger than $\nu k_{\mu}$ and introduce $n_*$;
	\item 
	if $\nu>1$, compute the generating function \eqref{eqn:chi1} as well as the first-normalized Hamiltonian $\mathscr{H}^{(1)}$ by the Lie series operation \eqref{eqn:Lieseries} truncated at the maximum book-keeping order $N_\text{bk}=\nu k_\mu$; if $\nu=1$, compute $\mathscr{H}^{(1)}$ (always truncated to the book-keeping order $N_\text{bk}$) via the procedure of Subsection \ref{subsubsec:nueq1}; 
	\item 
	compute the successive normalizations $\mathscr{H}^{(j)}$, truncated at book-keeping order $N_\text{bk}$ via the procedure of Subsection \ref{subsubsec:loop}, up to a maximum normalization order $\nu+j_{max}-1<N_\text{bk}$, $j_{max}\le\nu(k_{\mu}-1)$; this allows to obtain truncated Hamiltonian models containing a finite number of normal form terms as well as a finite number of terms provided by the truncated remainder.
\end{enumerate}

In the CAS implementation of the above algorithm we work with numerical coefficients, substituting all constants with their corresponding numerical values. Several types of numerical tests of the precision and overall performance of the method can be carried out as exemplified in the sequel.  

%========================================================================================
\subsection{Numerical examples in the Sun-Jupiter ER3BP: semi-analytic orbit propagation}
\label{subsec:SunJupsystem}

%========================================================================================
For all numerical tests below we refer to the Sun-Jupiter one ($\mu=9.5364\cdot10^{-4}$). We employ Earth-orbit based units, such that $\mathcal{G}m_0=4\pi^2$AU$^3$/y$^2$, $a_1=5.2044$AU, so that Jupiter's period is $T_1=11.86$ y. Jupiter's mean motion is $n_1=2\pi/T_1$, and eccentricity $e_1=0.0489$, used throughout all  computations in the framework of the ER3BP model. 

In all tests below, a particle's orbit is defined by providing the initial conditions $a(0),e(0),i(0)$,  complemented by $f(0)=g(0)=h(0)=0$. 

Our basic probe of the efficiency of the normalization method in the framework of the ER3BP is given by comparing the short-period oscillations of the orbital elements $a(t),e(t),i(t),g(t),h(t)$, as found by two different methods. \\
\\
\noindent
\textit{Direct Cartesian propagation}: the initial conditions $z(0)\coloneqq(a(0),e(0),i(0),f(0),g(0),h(0))$ are mapped into initial conditions for the Cartesian canonical positions and conjugate momenta $(X(0),Y(0),Z(0),P_X(0),P_Y(0),P_Z(0))$. Using Hamilton's equations with the full Hamiltonian \eqref{eqn:ham} (setting also $J_1(0)=0$, $M_1(0)=0$), we obtain the numerical evolution $(X(t),Y(t),Z(t),$ $P_X(t),P_Y(t),P_Z(t))$, which  can be transformed to element evolution 
$$
z(t)=(a(t),e(t),i(t),f(t),g(t),h(t))\;.
$$
\\
\noindent
\textit{Semi-analytical propagation}: following the implementation of the normalization algorithm as described in the previous subsection, the initial osculating element state vector $z(0)$ is transformed into an initial condition for the corresponding `mean element' state vector $\xi^{(j)}(z(0))$, i.e., the element vector corresponding to the new canonical variables conjugated to the original ones after $j$ near-identity normalizing transformations. This is computed by the Lie series composition formula truncated at book-keeping order $N_\text{bk}$:
\begin{equation}\label{eqn:osctomean}
\xi^{(j)}(z)=\left(\exp\left(\mathcal{L}_{-\chi^{(1)}_{\nu}}\right)\circ\exp\left(\mathcal{L}_{-\chi^{(2)}_{\nu+1}}\right)\circ\ldots
\circ\exp\left(\mathcal{L}_{-\chi^{(j)}_{\nu+j-1}}\right) z \right)^{\leq N_\text{bk}}\;,
\end{equation}
using Eq.\eqref{eqn:Lieseriesinv} for the inverse series.
We then obtain the evolution of the mean element vector $\xi^{(j)}(t)$ through numerical integration of the \textit{secular} equations of motion
\begin{equation}
\label{eqn:eqmosec}
\dot{\xi}^{(j)}=\mathbb{J}\nabla\mathscr{Z}^{(j)}(\xi^{(j)})
\end{equation}
($\mathbb{J}$ standard symplectic unit).
This can be back-transformed to yield the evolution of the osculating element vector $z(t)$ using the truncated Lie series composition formula
\begin{equation}\label{eqn:meantoosc}
z(\xi^{(j)})=\left(\exp\left(\mathcal{L}_{\chi^{(j)}_{\nu+j-1}}\right)\circ\exp\left(\mathcal{L}_{\chi^{(j-1)}_{\nu+j-2}}\right)\circ\ldots
\circ\exp\left(\mathcal{L}_{\chi^{(1)}_{\nu}}\right) \xi^{(j)} \right)^{\leq N_\text{bk}}\;.
\end{equation}
Note that both the direct and inverse transformations (Eqs.\eqref{eqn:osctomean} and \eqref{eqn:meantoosc}), as well as Hamilton's secular equations \eqref{eqn:eqmosec}, can be computed in closed form, using the Poisson algebra rules of Subsection \ref{subsec:Poiss}. We then call semi-analytic the evolution of the element vector $z(t)$ obtained via the formula
\begin{equation}
\label{eqn:semiana}
z(t)= z(\xi^{(j)}(t))\;.
\end{equation}

%----------------------------------------------------------------------------------------------------
\begin{figure}
	\centering
	\includegraphics[scale=0.41]{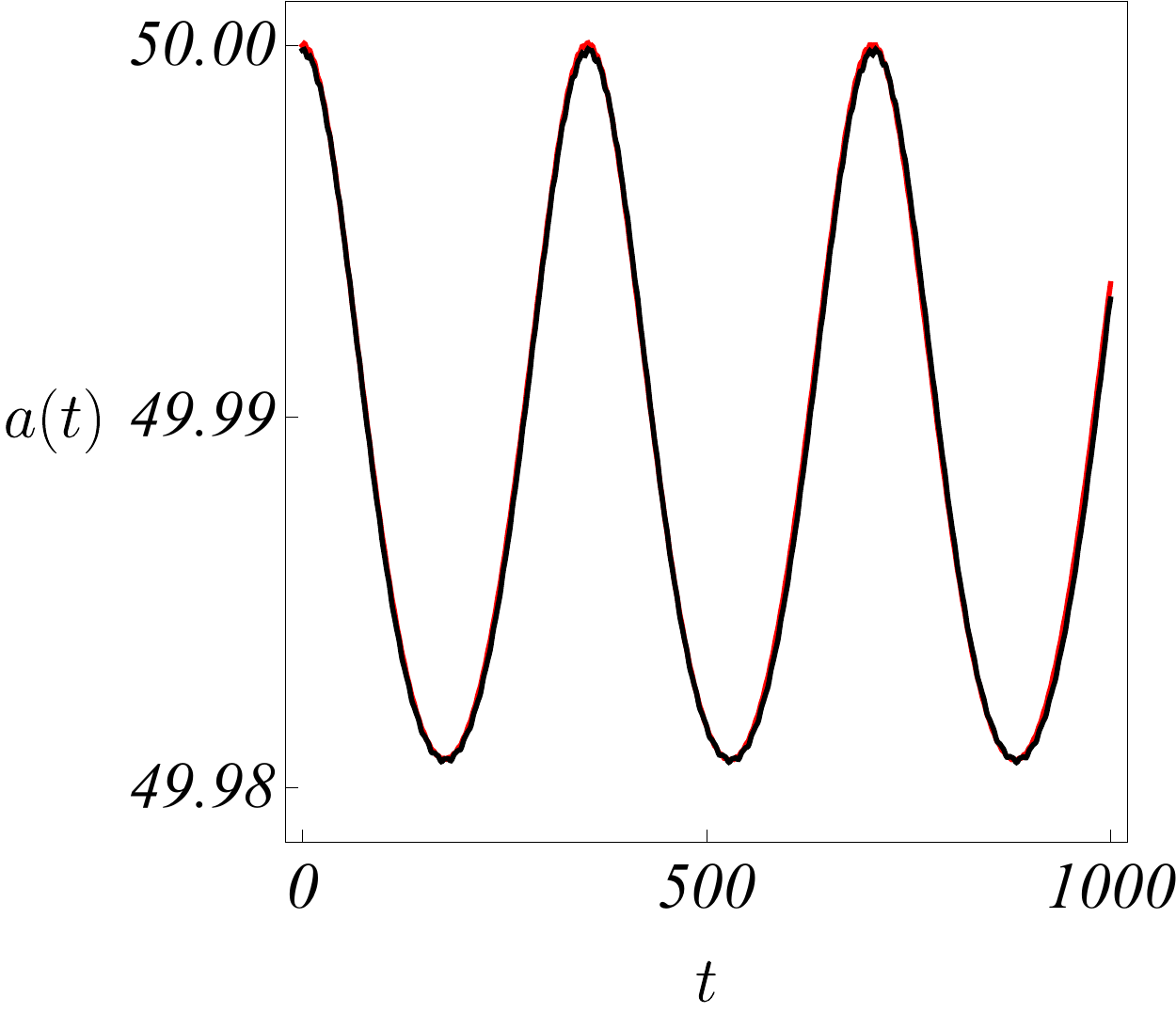}
	\includegraphics[scale=0.41]{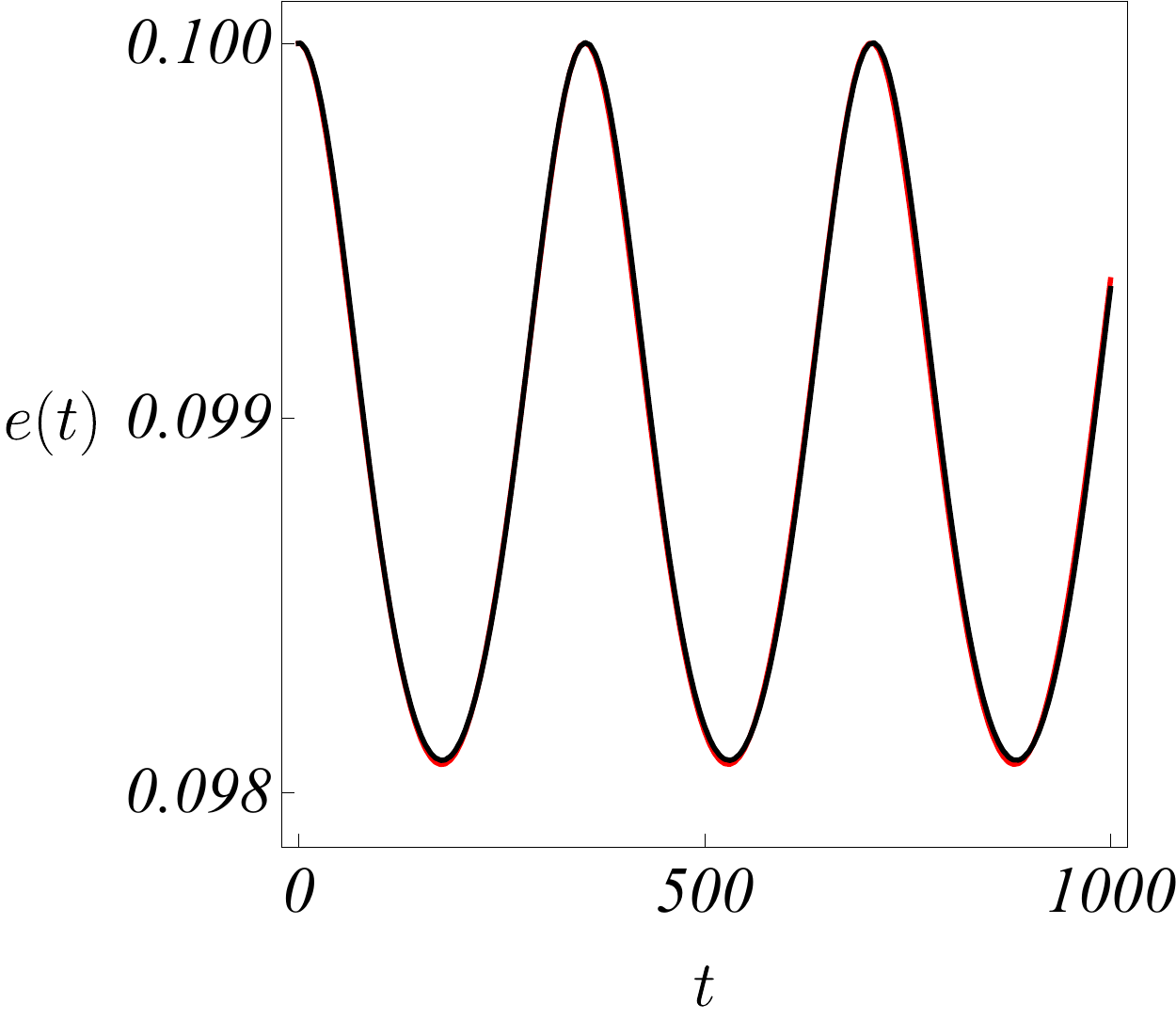}
	\includegraphics[scale=0.41]{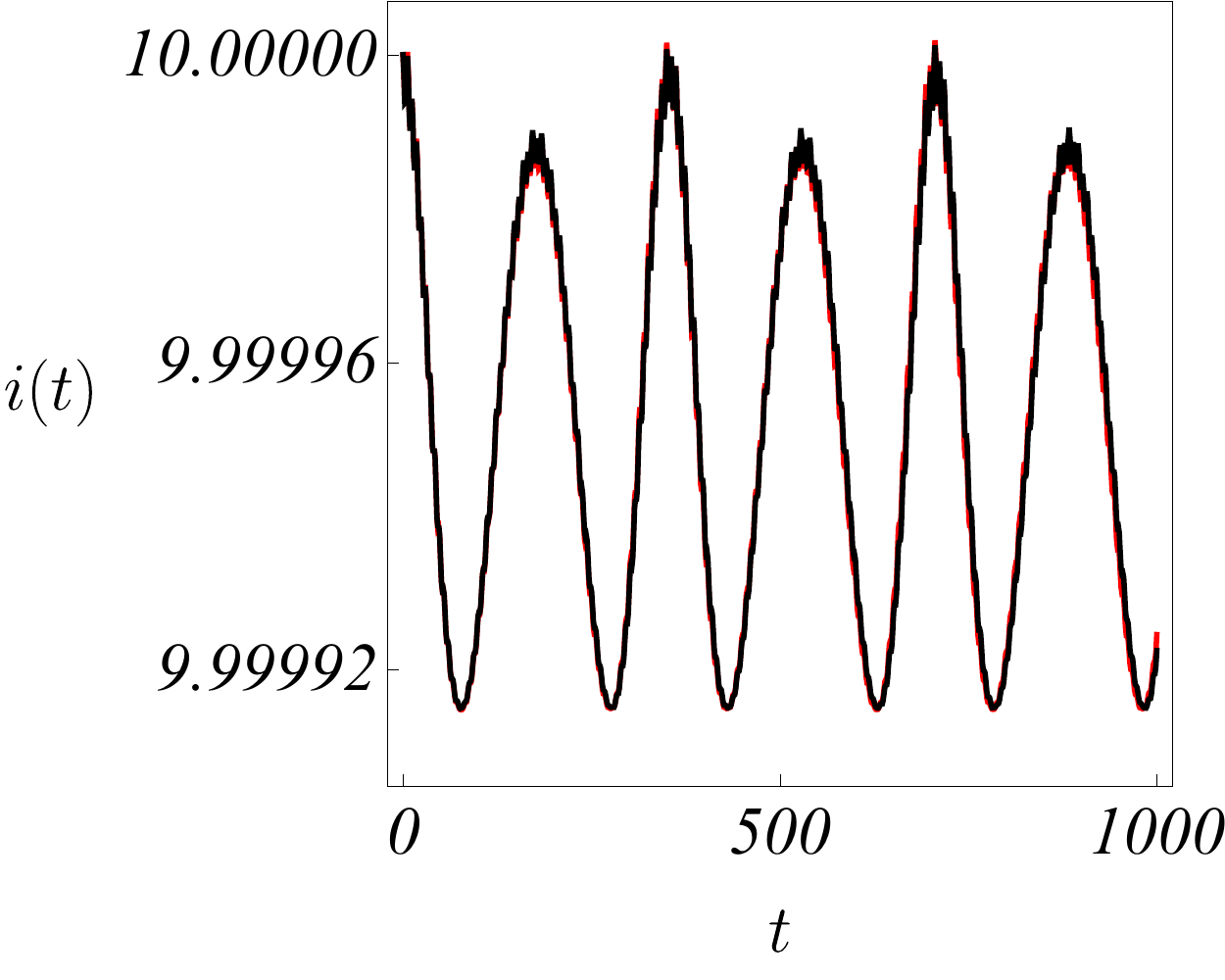}\\
	\includegraphics[scale=0.41]{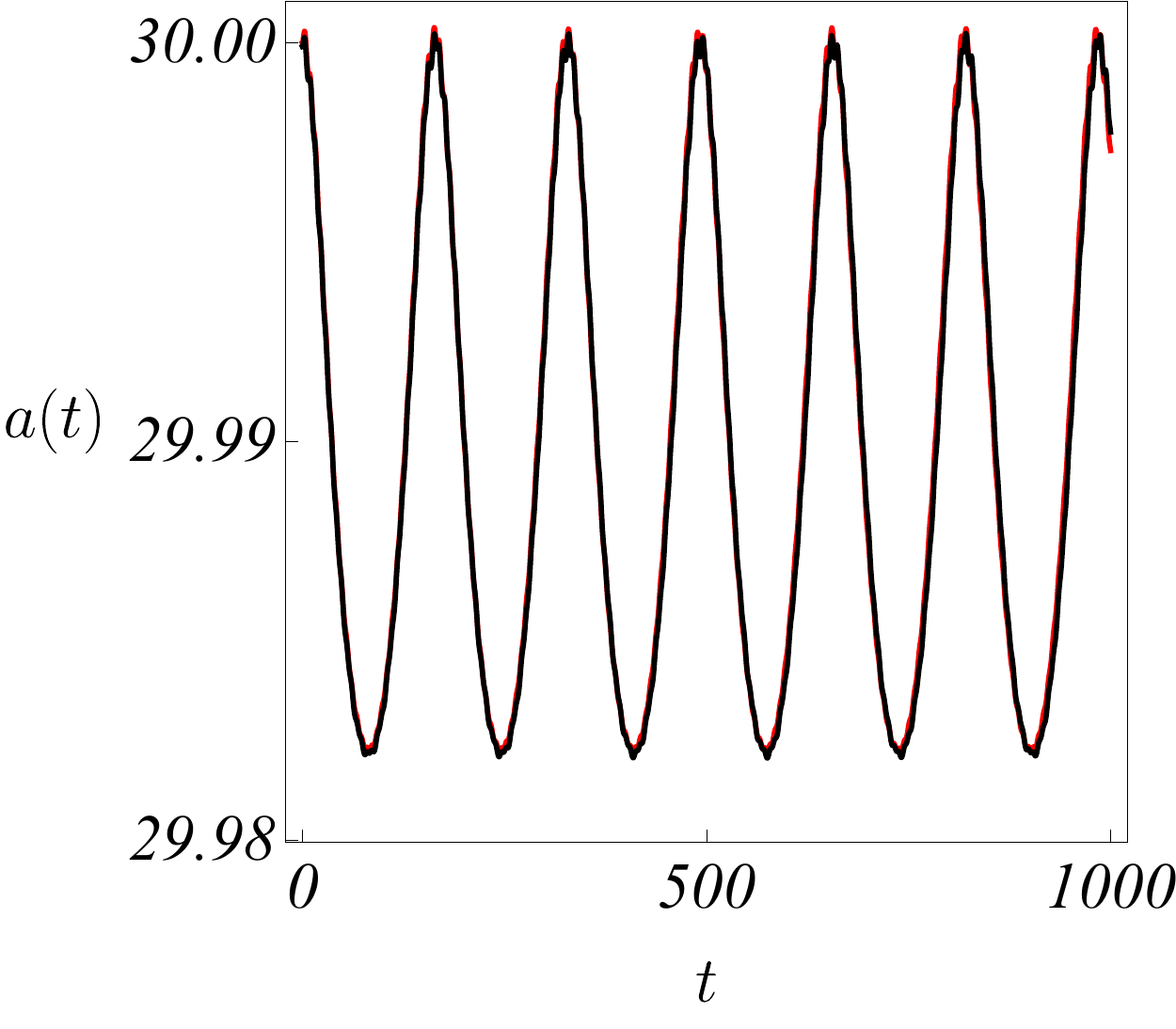}
	\includegraphics[scale=0.41]{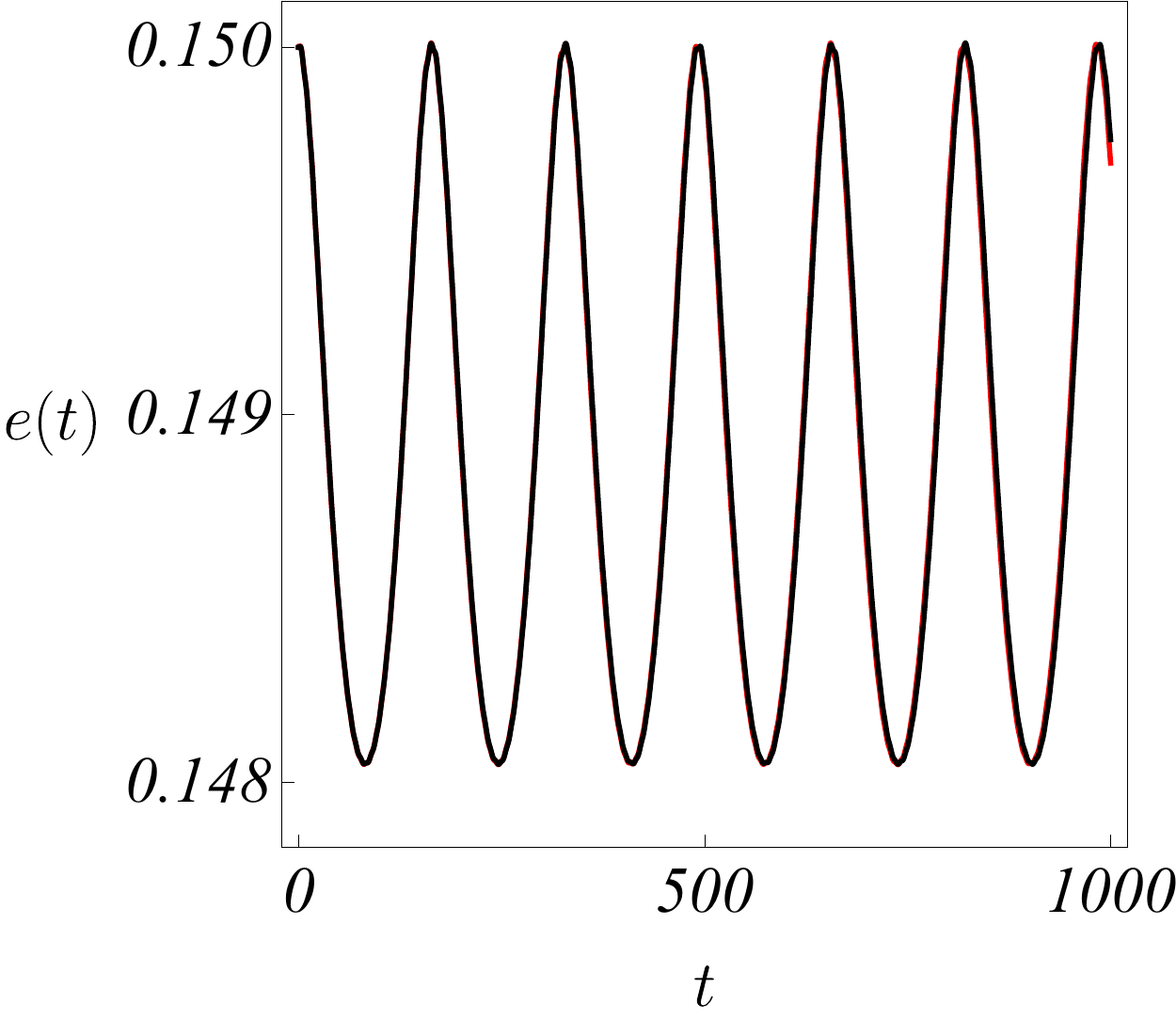}
	\includegraphics[scale=0.41]{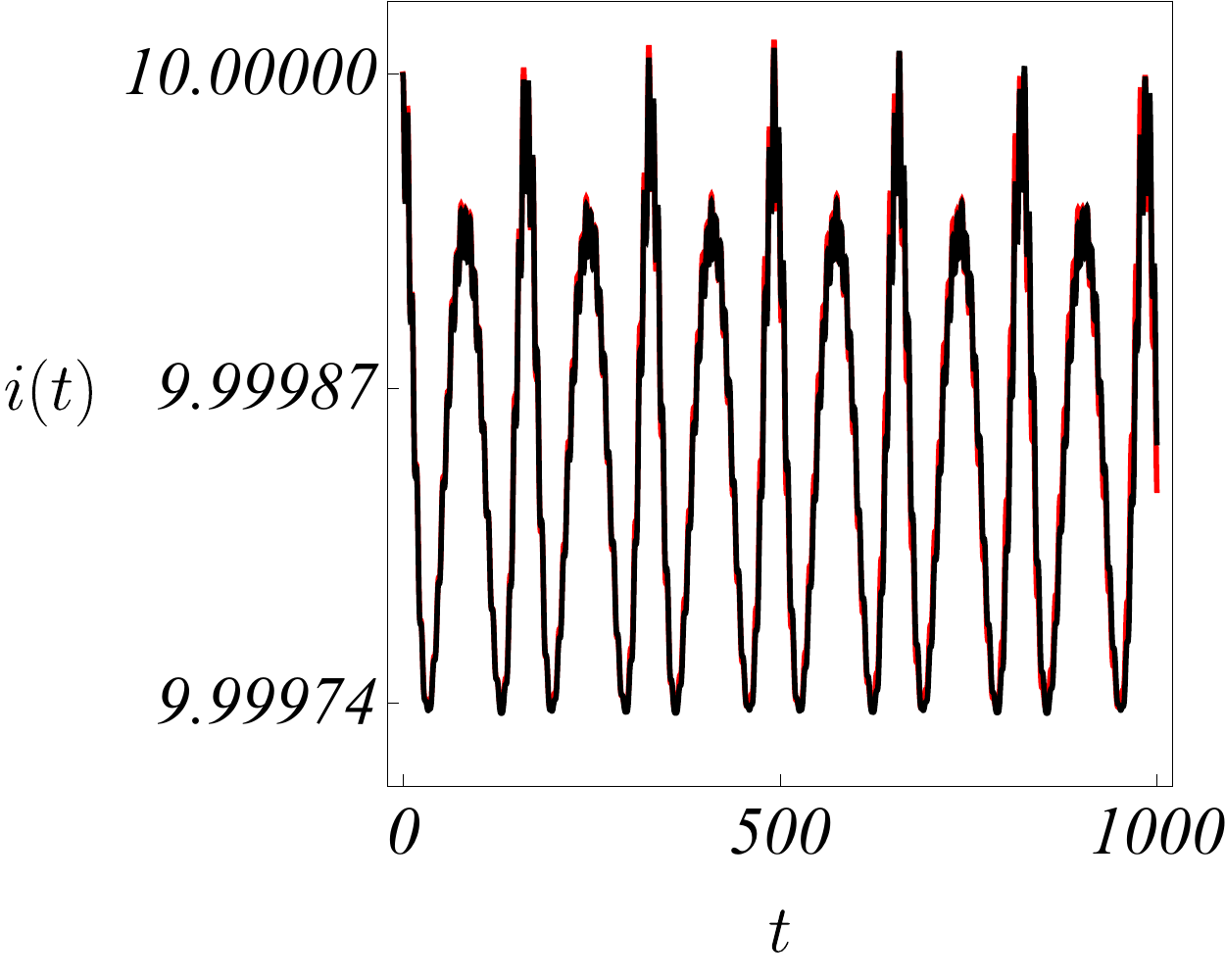}
	\caption{First and second example (ER3BP). Data: $a_*=50$AU, $e_*=0.1$ ($\nu=3$), $i(0)=10^{\circ}$, $k_{\mu}=k_{\text{mp}}=2$ (top panels); $a_*=30$AU, $e_*=0.15$ ($\nu=4$), $i(0)=10^{\circ}$, $k_{\mu}=k_{\text{mp}}=2$ (bottom panels). Black curves represent semi-analytic time variations (our method), while red curves stand for Cartesian series.}
	\label{fig:ex1}
\end{figure}
%----------------------------------------------------------------------------------------------------
Fig. \ref{fig:ex1} shows the comparison between the Cartesian and the semi-analytical propagation of the elements in `easy' cases, where the particle departs from initial conditions $a(0)=50$AU (top left panel) or $a(0)=30$AU (bottom left panel), with a relatively low value of the eccentricity $e(0)=0.1$ or $e(0)=0.15$ respectively (middle panels) and inclination $i(0)=10^\circ$ (right panels). In these cases, the distance ratio $\norm{r_1}/\norm{R}$ is small (about $0.1$-$0.2$), a fact implying that the quadrupolar expansion ($k_\text{mp}=2$) suffices to have obtained a relative error of about $0.1\%$ in the representation of the Hamiltonian perturbation $\mathcal{H}_1$. Going to higher multipoles is straightforward, albeit with a significant computational cost as the number of terms in the Hamiltonian grows significantly. On the other hand, even with low-order truncations of the Hamiltonian we achieve to have an accurate semi-analytical representation of the $\mathcal{O}(\mu)$ short-period oscillations in all three `action-like' elements (semi-major axis, eccentricity, inclination). Most notably, keeping $a(0)=50$AU but changing the eccentricity to $e(0)=0.7$, i.e., beyond the Laplace value, yields an orbit whose pericenter is at $\norm{R_p}=15$AU, implying a distance ratio $\norm{r_1}/\norm{R}\approx 0.3$ (Fig. \ref{fig:ex2}). This time, an octupole truncation ($k_\text{mp}=3$) is required to produce an approximation of the Hamiltonian model at the level of a relative error of $0.1\%$. Still, however, as shown in Fig. \ref{fig:ex2} the semi-analytical propagation of the orbit is able to track the fully numerical one with an error which does not exceed $0.2\%$ even close to the orbit's pericentric passages. 
%----------------------------------------------------------------------------------------------------
\begin{figure}
	\centering
	\includegraphics[scale=0.61]{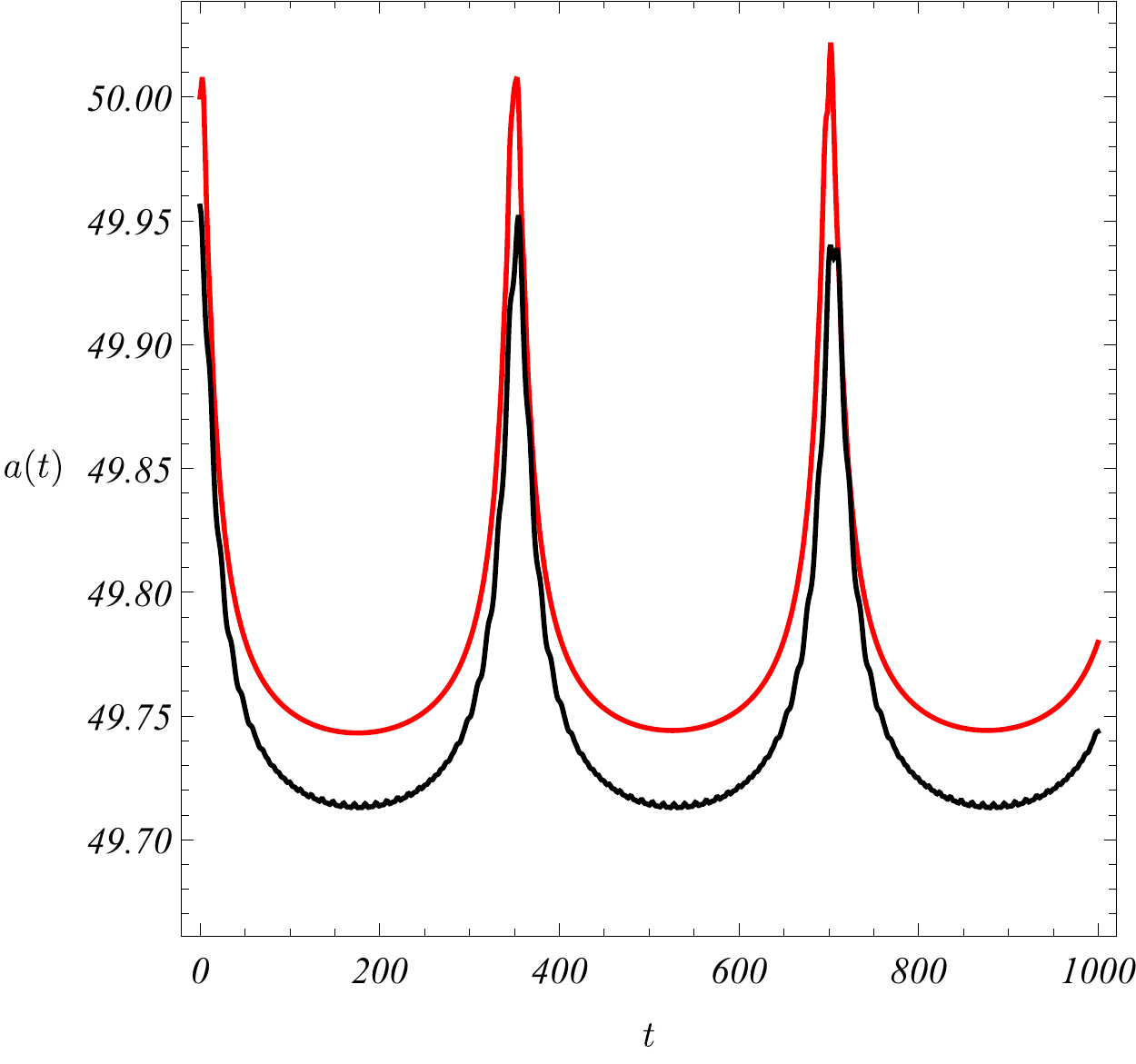}\hspace{3mm}
	\includegraphics[scale=0.61]{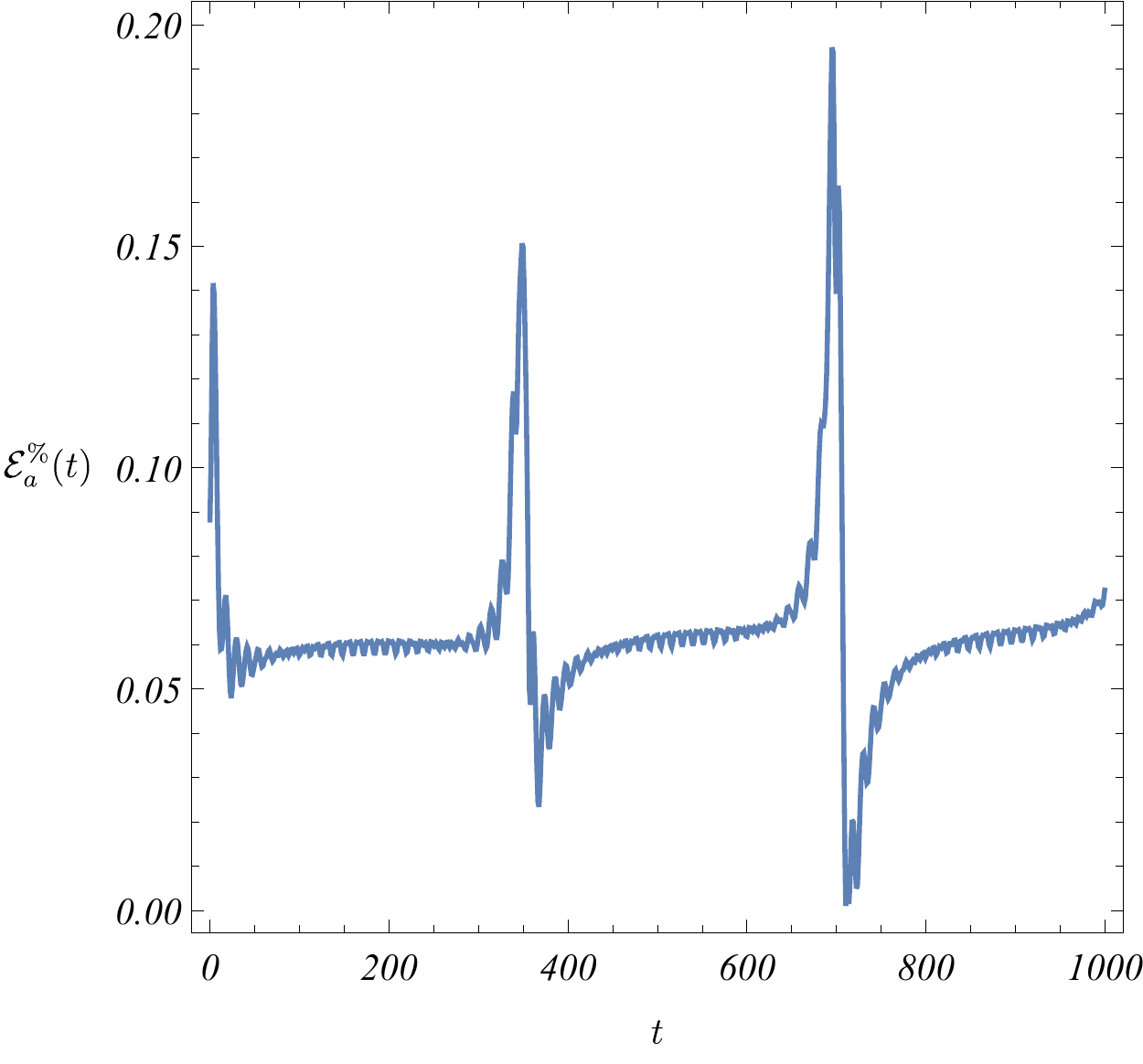}\\
	\caption{Third example (ER3BP). Data: $a_*=50$AU, $e_*=0.7$ ($\nu=20$), $i(0)=20^{\circ}$, $k_{\mu}=2$, $k_{\text{mp}}=3$. On the left, the black curve represents the semi-analytic time variation of the semi-major axis (our method) versus the one found by propagation of the Cartesian equations of motion (red). The right panel shows the evolution of the corresponding percent relative error $\mathcal{E}^{\%}_a$.}
	\label{fig:ex2}
\end{figure}
%----------------------------------------------------------------------------------------------------

In the above examples, the maximum number of normalization steps at which the secular Hamiltonian is computed was set equal to $j_{max}=3$, $j_{max}=4$ and $j_{max}=4$ respectively, which corresponds to the best match in all cases. As discussed in the next subsection, an estimate of the \textit{minimum possible} error in the semi-analytic propagation of the trajectories requires computing first the so-called \textit{optimal} number of normalizations $j_{opt}$ (or equivalently \textit{optimal} normalization order $\nu+j_{opt}-1$) as a function of the reference values $(a_*,e_*)$ within a model given by a preset fixed multipole truncation order. Owing to the fact that the same divisors appear in the ER3BP and in the CR3BP, we verify with numerical examples that the error analysis yields essentially identical results in either case. However, the computation of the optimal normalization is easier to perform in the CR3BP, owing to the considerably smaller number of terms produced in the CAS computation of the normal form. Hence, we now turn our attention to this latter computation. 

%=======================================================================================================
\subsection{Numerical examples in the Sun-Jupiter planar CR3BP: order and size of the optimal remainder}
\label{subsec:SunJupcircular}
%=======================================================================================================
%-------------------------------------------------------------
\subsubsection{Trajectory propagation: optimal remainder}
\label{subsubsec:SunJuptrajectories}
%-------------------------------------------------------------
A considerable reduction of the computational cost occurs in the case of the planar and circular R3BP. This is due, in particular, to the following:
\begin{itemize}
	\item 
	the dependence on $M_1$ becomes explicit ($M_1=E_1$ in \eqref{eqn:r1}), while $a_1=\norm{r_1}$. As a consequence, $\phi_1=0$. 
	\item 
	no terms involving $(h,H)$ appear in the disturbing function, thus $\iota_c,\iota_s$ are discarded;
	\item 
	no terms requiring a book-keeping in terms of the exponent $\nu_1$ appear, hence, only $\nu$ is defined, as in \eqref{eqn:nun1};
	\item 
	$d^{\prime(j)}_{l,\lambda,p}=0$ for every $j,l,\lambda,p$ in \eqref{eqn:Rj}, \eqref{eqn:chij}, and consequently $p_1=p_2\equiv 0$ in \eqref{eqn:Znulm1j}. This is due to the fact that the expression \eqref{eqn:Rdotr1} reduces to
	\begin{equation}
	\label{eqn:Rdotr1pCR3BP}
	r_1\cdot R=\norm{r_1}\norm{R}\cos(f+g-M_1)\;,
	\end{equation}
	which always depends on the difference $g-M_1$ by D'Alembert rules. This implies that, unlike the ER3BP, the action $G$ (and the corresponding eccentricity $e$) are integrals of the secular Hamiltonian;  
	\item 
	as a consequence no lower or equal book-keeping order terms appear in any Poisson bracket of the first normalization step in the case $\nu=1$. Hence Proposition \ref{prop:adjust} is redundant.
\end{itemize}

Owing to the above, in the planar CR3BP we are able to make normal form computations in a grid of points in the plane $(a_*,e_*)$ up to a sufficiently high normalization order so that the asymptotic character of the series computed by the algorithm of Section \ref{sec:theory} can show up. To this end, we introduce an estimate of the size of the series' remainder after $j$ normalization steps via the upper norm bound 
\begin{equation}
\label{eqn:Rbound}
	\mathscr{E}^{(j)}=\sum_{l=\nu+j}^{\nu k_{\mu}}\sum_{s\in\mathbb{Z}^3}|d^{(j)}_{l,s}|\ge\norm{\mathscr{R}^{(j)}_{\nu+j}}_{\infty}\;,\quad j=1,\ldots,\nu(k_{\mu}-1)\;,
\end{equation}
where $\norm{\cdot}_{\infty}$ denotes the sup norm. Plotting $\mathscr{E}^{(j)}$ against the number of normalization steps $j$ allows then to estimate the error committed at any step (size of the remainder). Figure \ref{fig:ex3} yields an example of such computation. The relevant fact is that there is an optimal number of normalization steps ($j=j_{opt}=6$) where the estimate $\mathscr{E}^{(j)}$ of the remainder size yields a global minimum.  
%----------------------------------------------------------------------------------------------------
\begin{figure}
	\centering
	\includegraphics[scale=0.56]{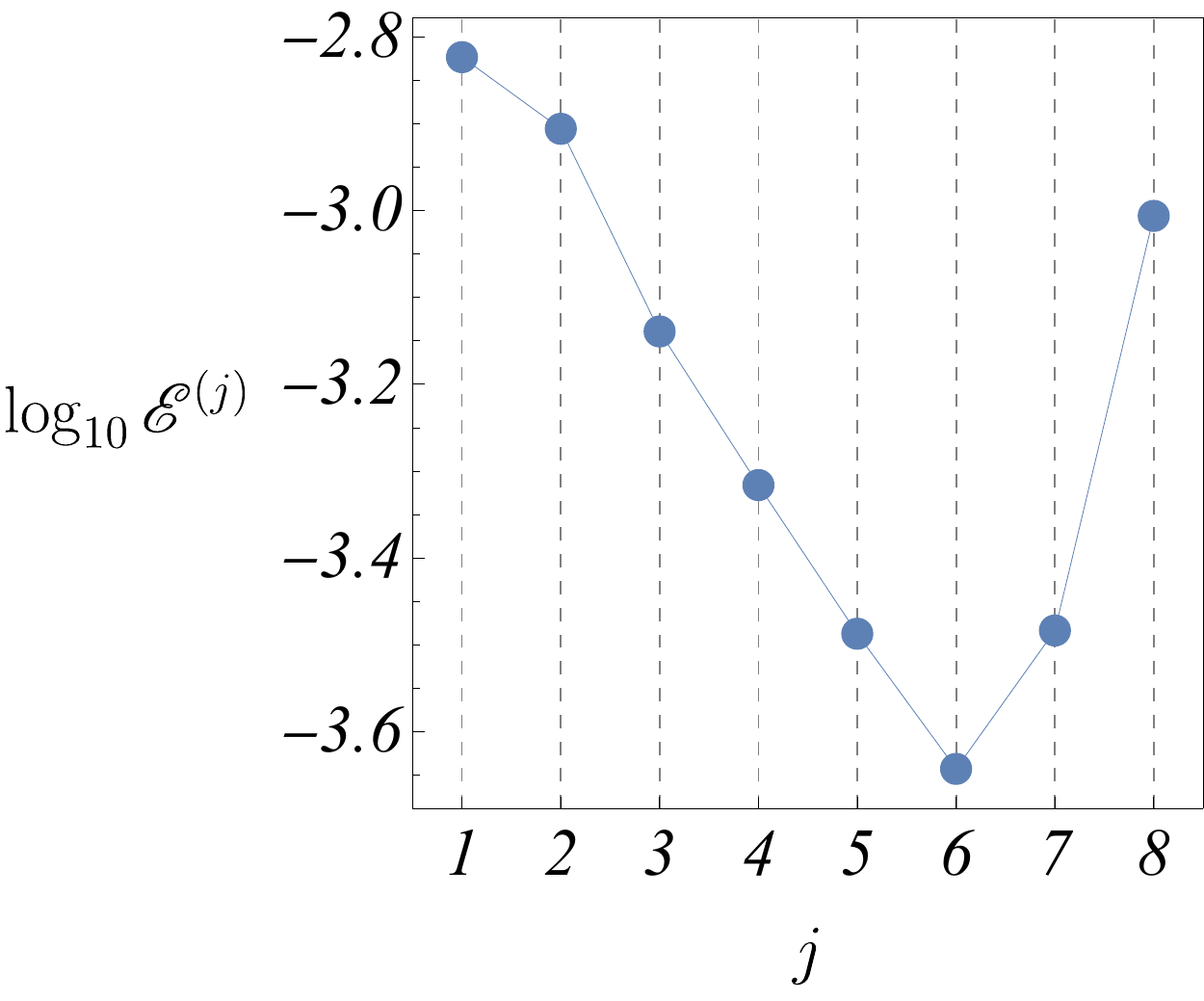}\hspace{3mm}
	\includegraphics[scale=0.52]{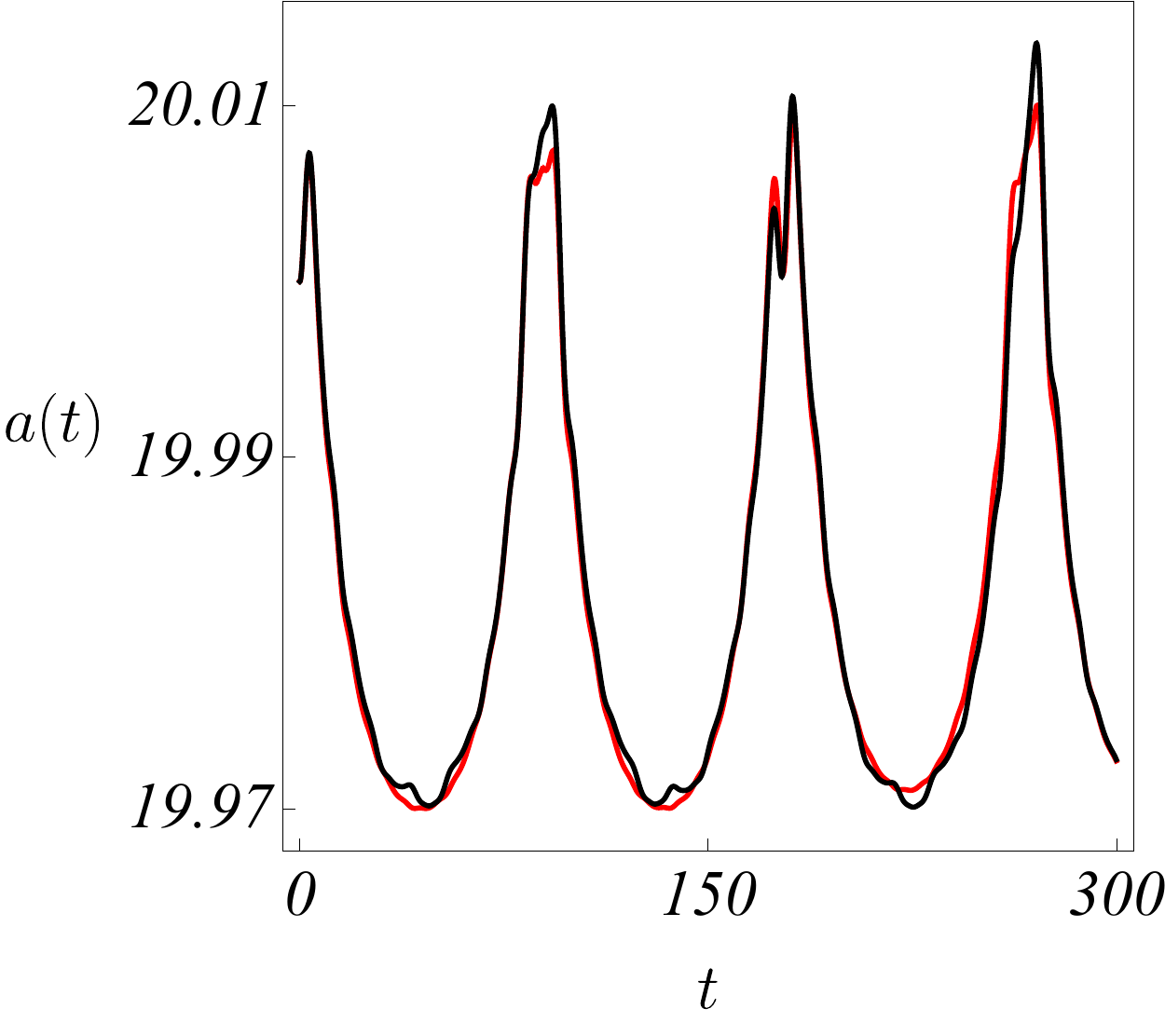}
	\\\hspace{5mm}
	\includegraphics[scale=0.52]{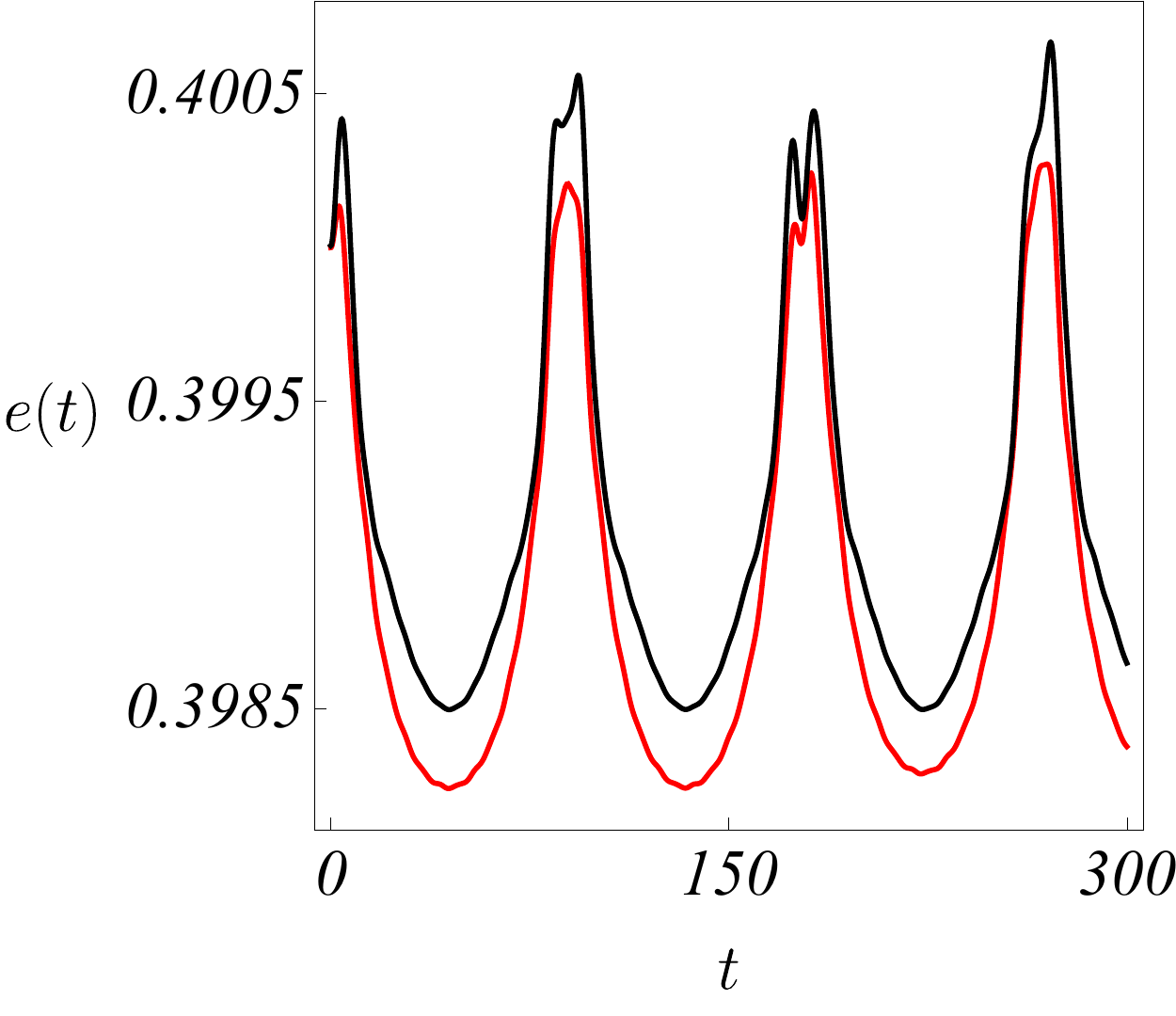}\hspace{1.5mm}
	\includegraphics[scale=0.52]{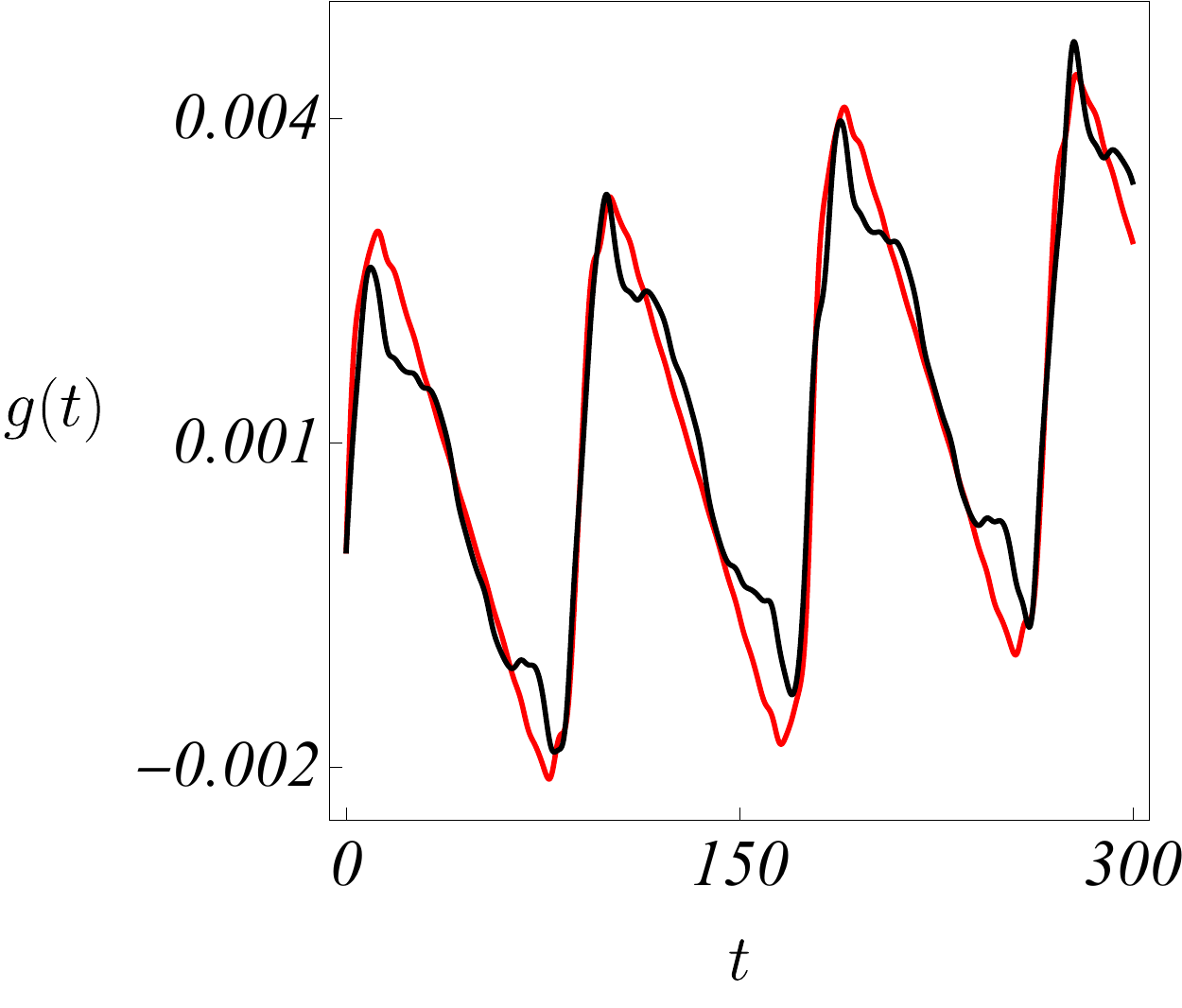}
	\caption{Fourth example (planar CR3BP). Data: $a_*=20$AU, $e_*=0.4$ ($\nu=8$), $k_{\mu}=2$, $k_{\text{mp}}=3$. The estimate $\mathscr{E}^{(j)}$ is depicted in semi-logarithmic scale on top left panel. The direct comparison of the semi-analytic (black) evolution vs. the fully numerical (red) one for the osculating elements $a(t)$, $e(t)$, $g(t)$ are shown in the top right and bottom panels respectively. The semi-analytic curves are obtained for $j=j_{opt}=6$, where $\mathscr{E}^{(j)}$ is minimum.}
	\label{fig:ex3}
\end{figure}
%----------------------------------------------------------------------------------------------------

Although a systematic investigation of the dependence of the optimal number of normalization steps $j_{opt}$ on the parameters $(a_*,e_*)$ is beyond our present scope, Figs. \ref{fig:ex4} and \ref{fig:ex5} allow to gain some insight into the question. The most relevant remark concerns the dependence of the behavior of the curve $\mathscr{E}^{(j)}$ (versus $j$) on how close to the `hierarchical' regime the trajectory with reference values $(a_*,e_*)$ is. As a measure of the hierarchical character of an orbit we adopt either the ratio of the semi-major axes $a_1/a_*$, or of the pericentric distances $\norm{r_1}/\norm{R_p}=a_1(1-e_1)/(a_*(1-e_*))=a_1/(a_*(1-e_*))$. Fig. \ref{fig:ex4} ($a_*=30$AU, $e_*=0.5$) implies a pericentric distance ratio $\norm{r_1}/\norm{R_p}\approx 0.3$ smaller than the one of the example of Fig. \ref{fig:ex3} ($\norm{r_1}/\norm{R_p}\approx 0.4$). We observe that the optimal number of normalization steps in the former case satisfies  $j_{opt}=10$, i.e., it is larger than in the latter case. Fig. \ref{fig:ex5} shows, instead, an example of orbit far from the hierarchical limit, satisfying the estimate $\norm{r_1}/\norm{R_p}\approx 0.7$. In this case a higher order multipole expansion ($k_\text{mp}=5$) is required to obtain a precise truncated Hamiltonian model for this orbit. We note, however, that the normalization procedure performs well, producing a decreasing remainder as a function of $j$ up to the point where it is arrested, i.e. $j=6=\nu(k_{\mu}-1)$. We find numerically that this performance is deteriorated as we gradually approach the condition $\norm{r_1}/\norm{R}=1$, beyond which the multipole expansion of the Hamiltonian is no longer convergent. 
%----------------------------------------------------------------------------------------------------
\begin{figure}
	\centering
	\includegraphics[scale=0.56]{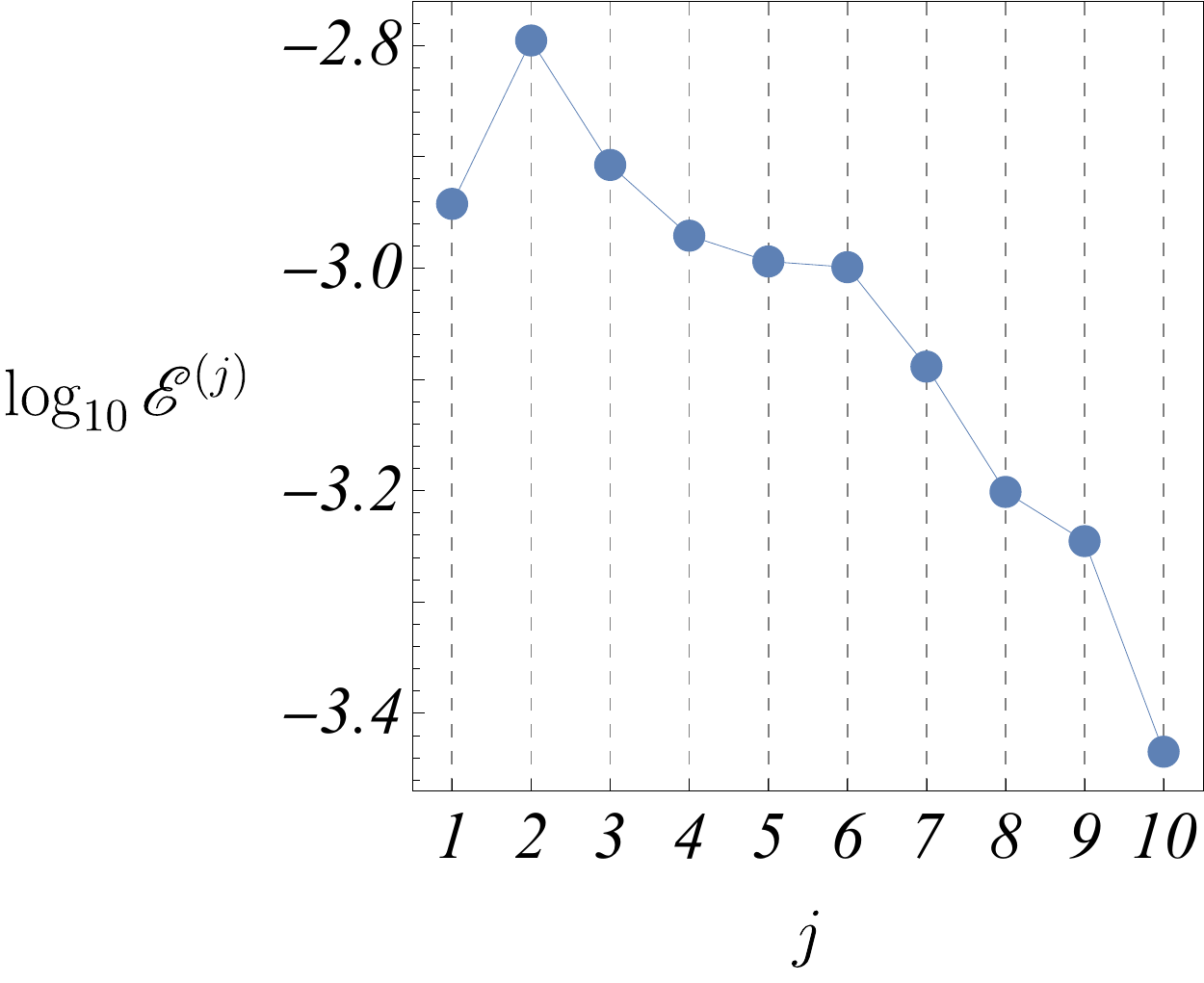}\hspace{3mm}
	\includegraphics[scale=0.52]{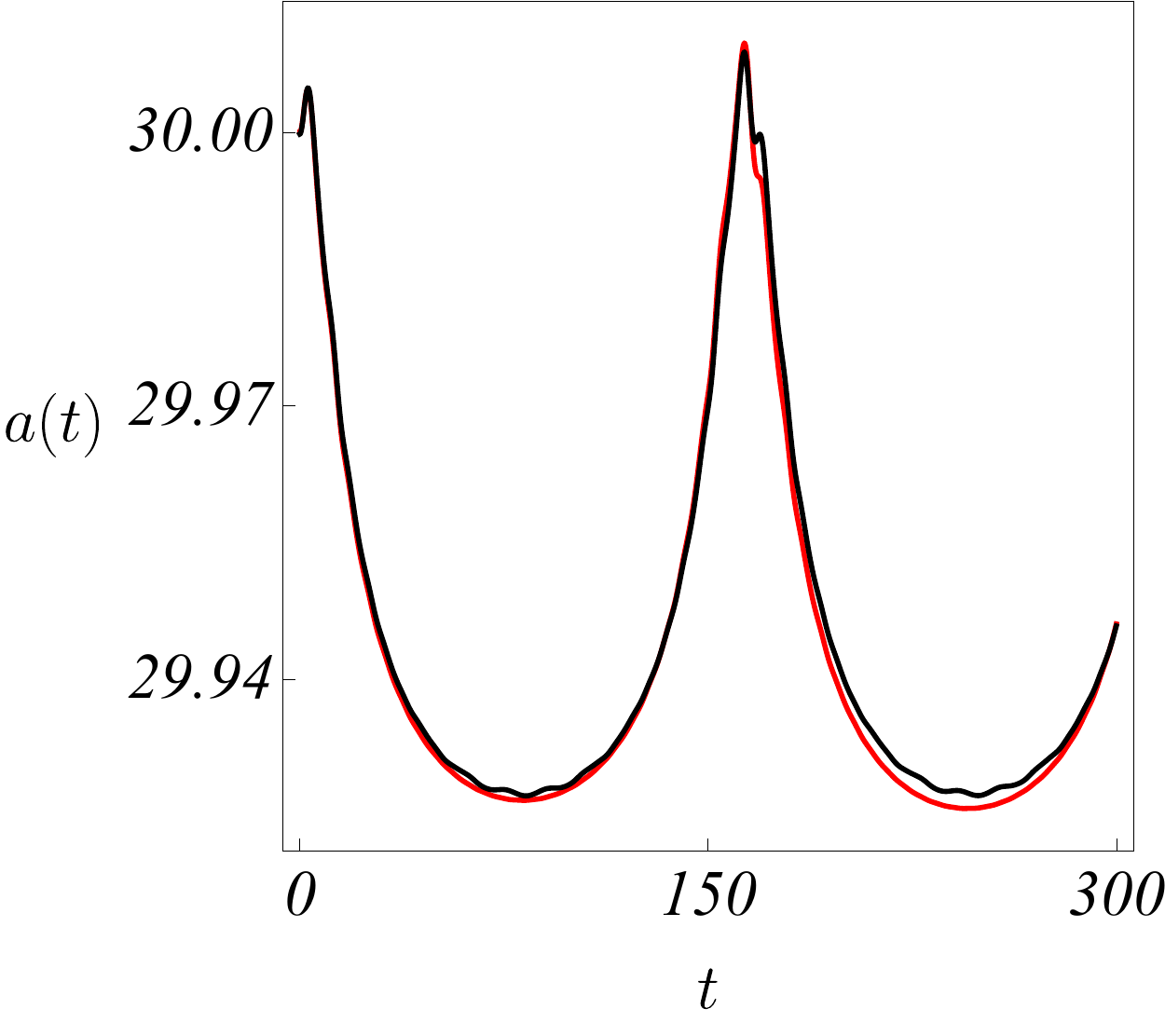}\\
	\hspace{6mm}
	\includegraphics[scale=0.52]{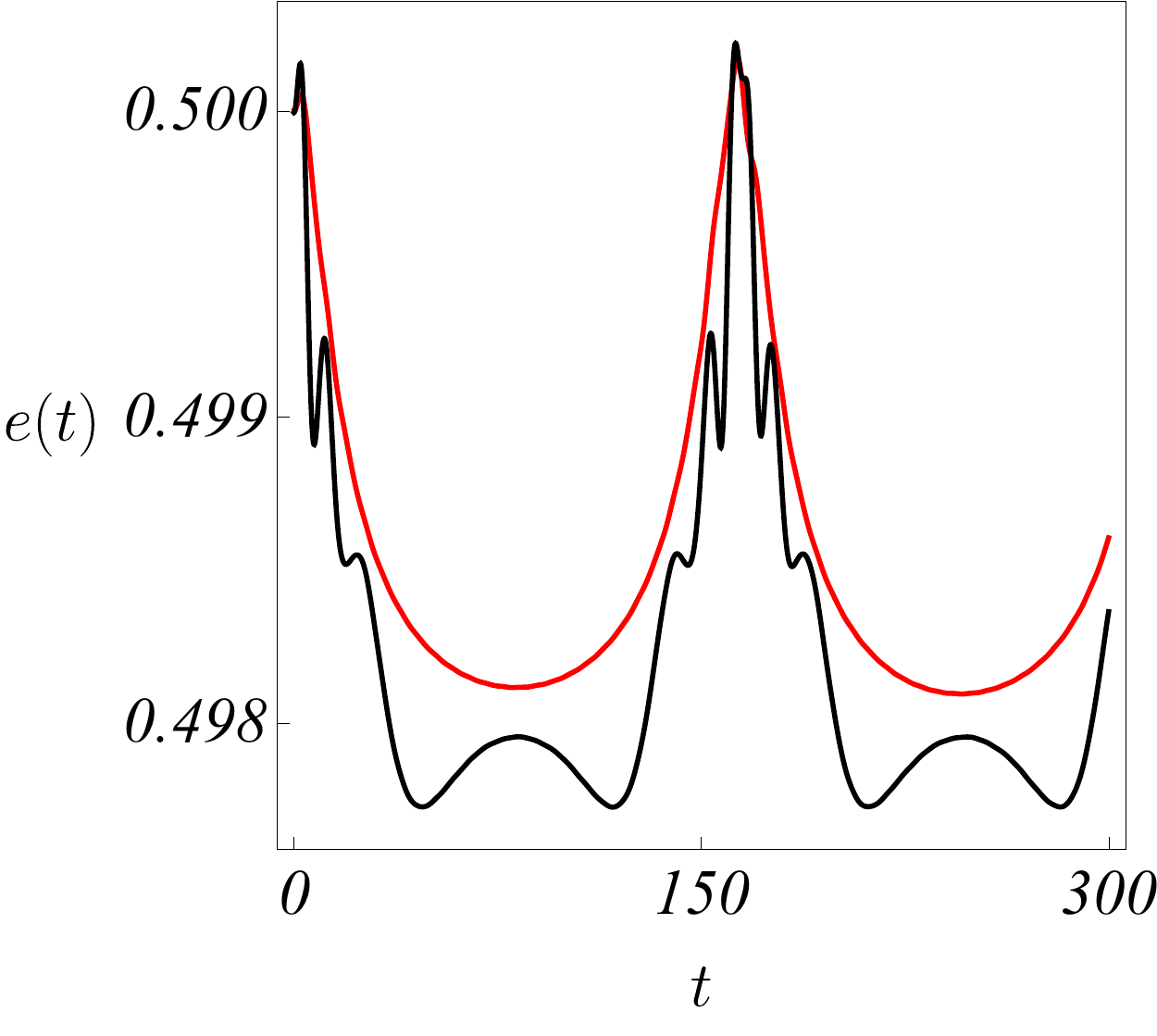}\hspace{1.5mm}
	\includegraphics[scale=0.52]{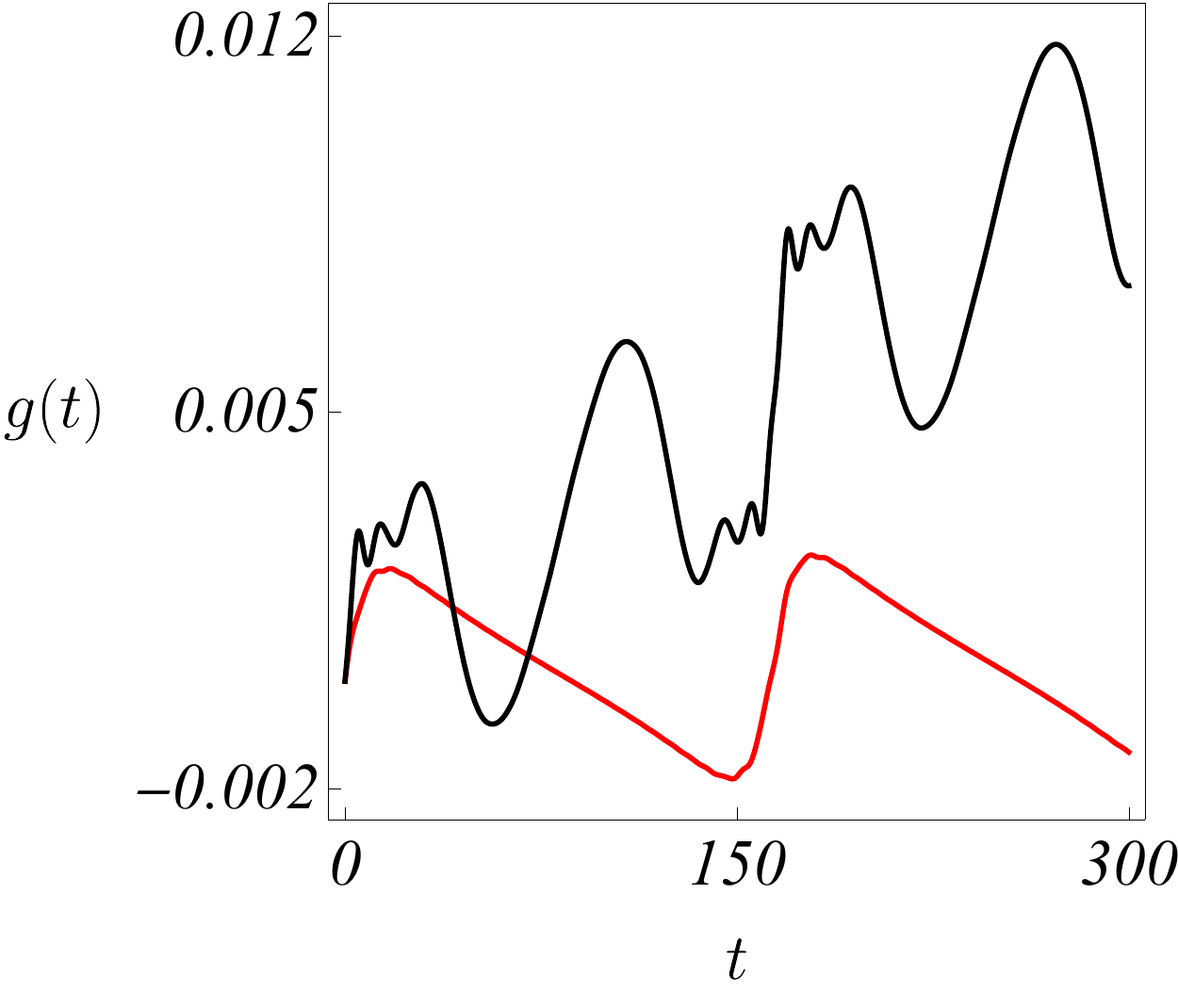}
	\caption{Fifth example (planar CR3BP): $a_*=30$AU, $e_*=0.5$ ($\nu=10$), $k_{\mu}=2$, $k_{\text{mp}}=3$. Plot types and color conventions are the same as in Fig. \ref{fig:ex3}. The semi-analytic curves are obtained for $j=j_{opt}=10$.}
	\label{fig:ex4}
\end{figure}
%----------------------------------------------------------------------------------------------------
%----------------------------------------------------------------------------------------------------
\begin{figure}[h]
	\centering
	\includegraphics[scale=0.56]{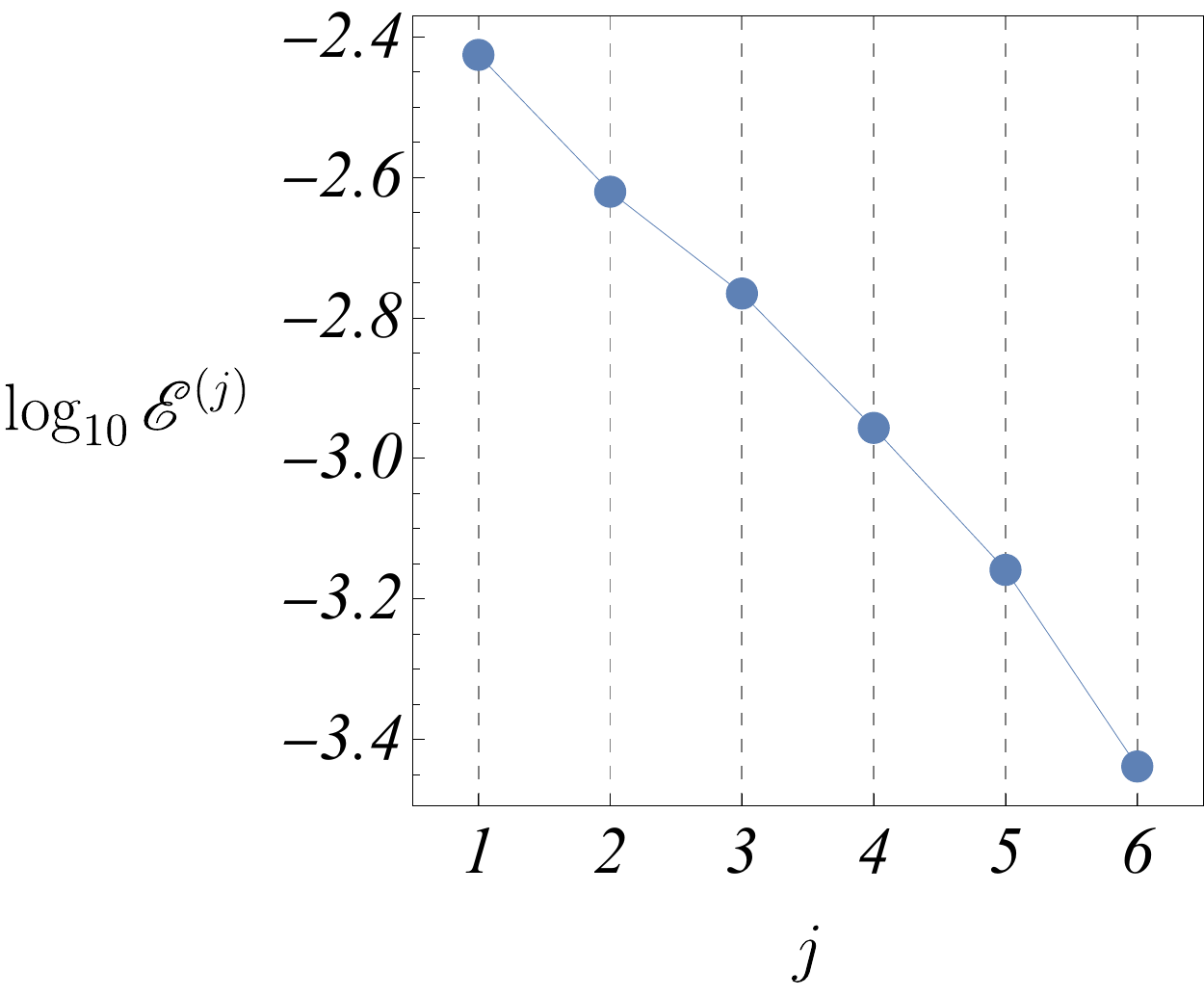}\hspace{3mm}
	\includegraphics[scale=0.52]{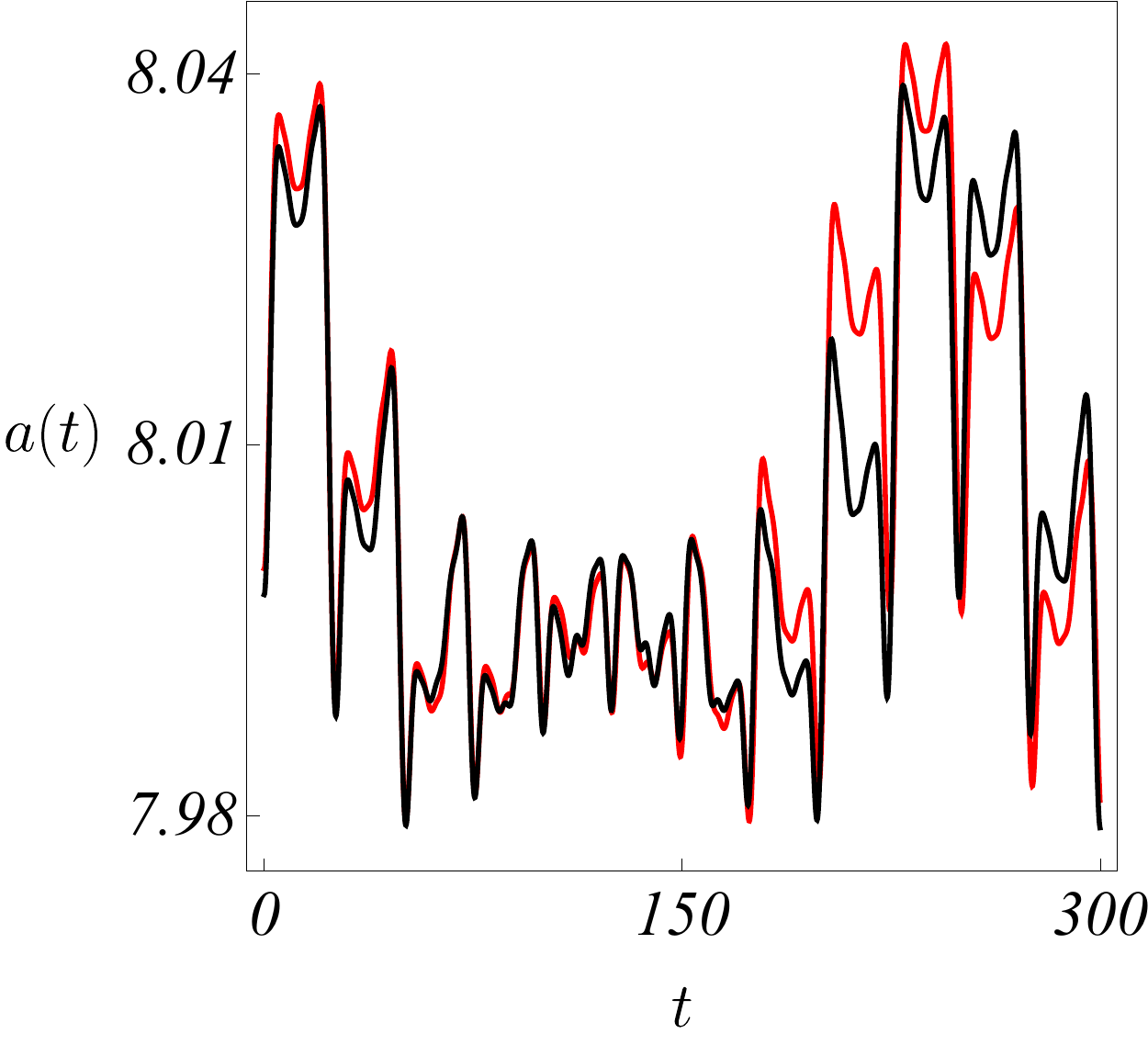}\\
	\hspace{6mm}
	\includegraphics[scale=0.52]{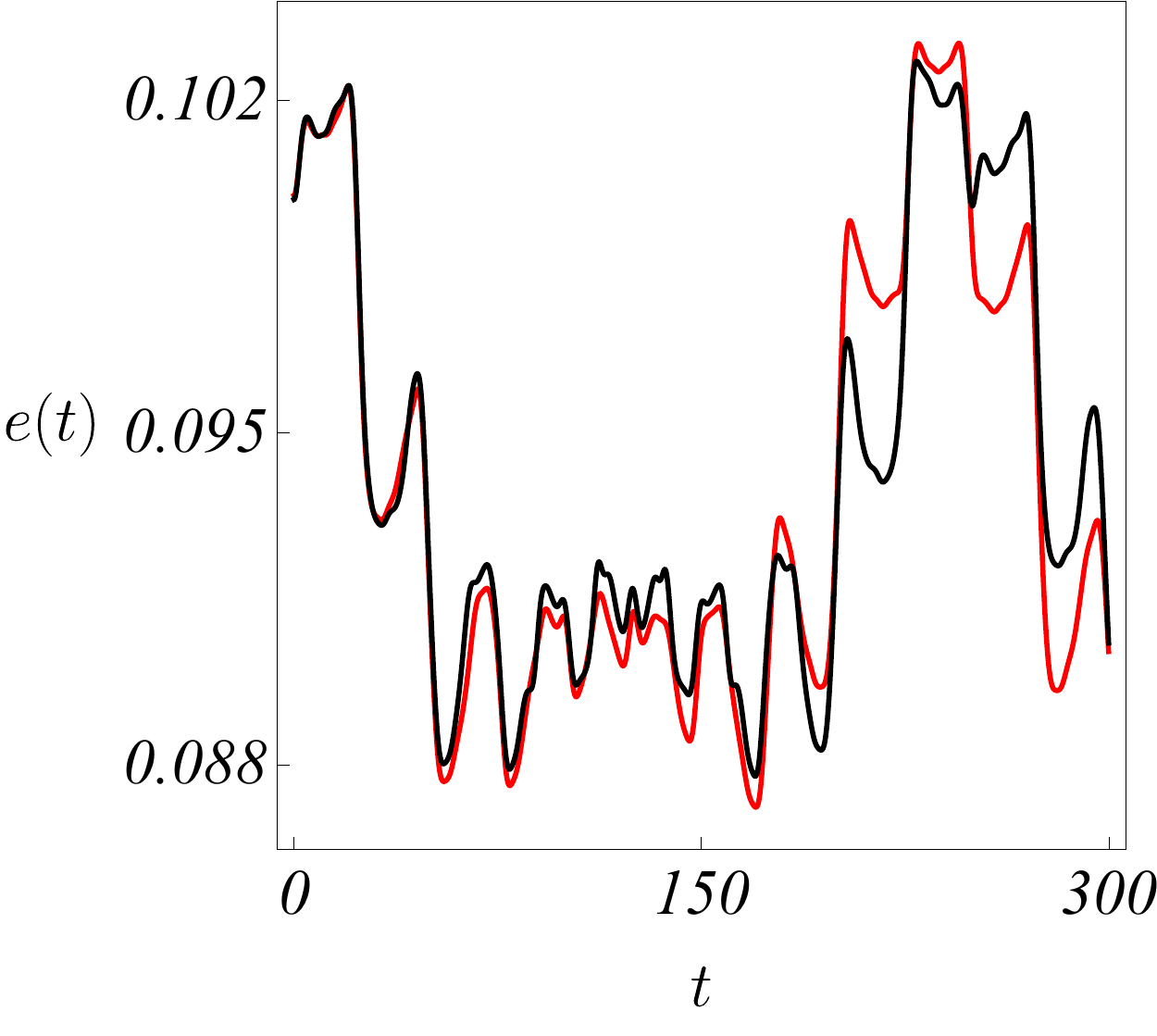}\hspace{1.5mm}
	\includegraphics[scale=0.52]{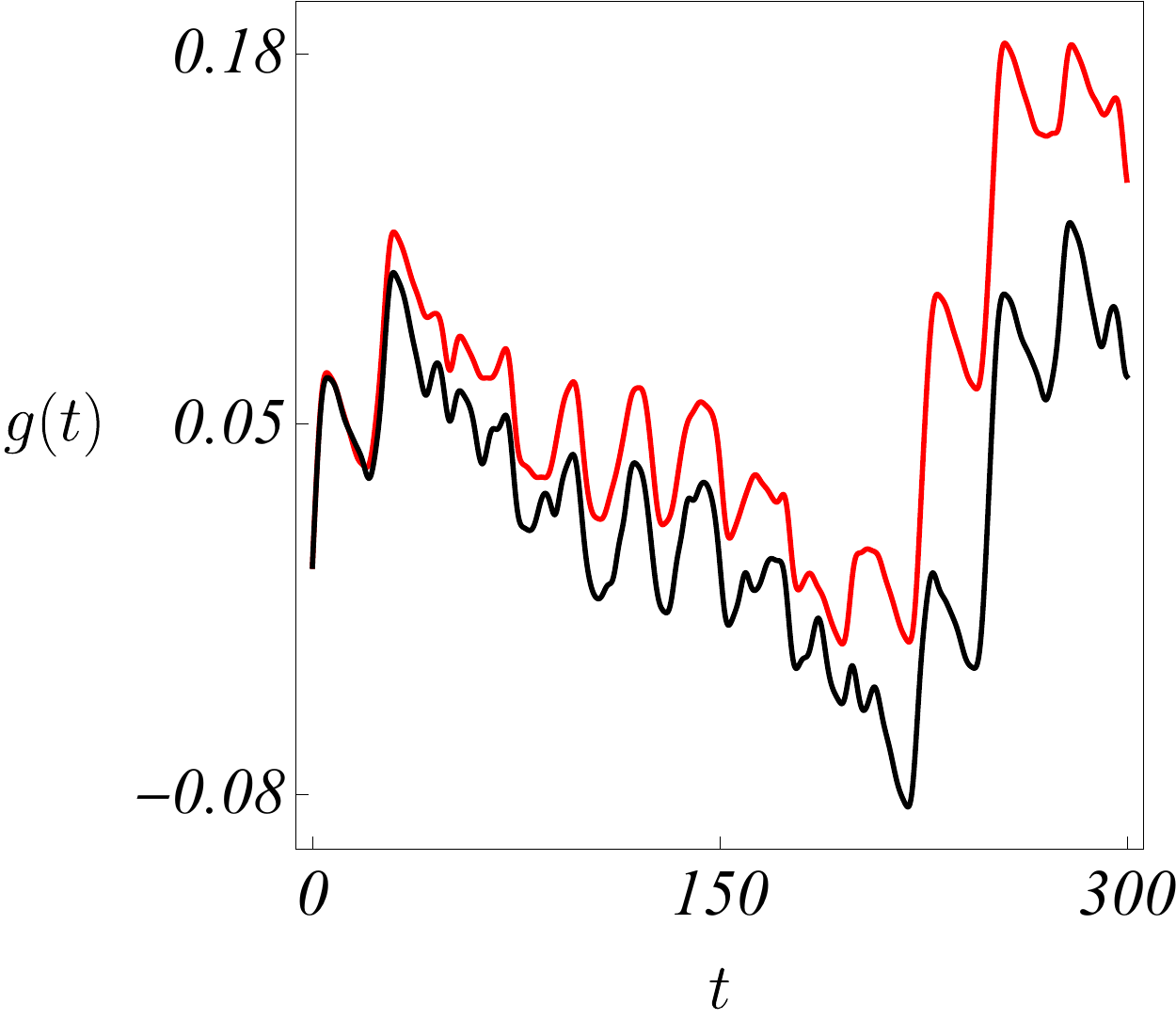}
	\caption{Sixth example (planar CR3BP): $a_*=8$, $e_*=0.1$ ($\nu=3$), $k_{\mu}=3$, $k_{\text{mp}}=5$. Plot types and color conventions are the same as in Fig. \ref{fig:ex3}. The semi-analytic curves are obtained for $j=j_{opt}=6$.}
	\label{fig:ex5}
\end{figure}
%----------------------------------------------------------------------------------------------------

%-----------------------------------------------------------------------------
\subsubsection{Semi-analytical determination of the domain of secular motions}
\label{subsubsec:SunJupsecdomain}
%-----------------------------------------------------------------------------
The results shown in the two previous subsections refer to isolated examples of orbits treated within various multipole truncation orders as well as different choices of the number of normalization steps, searching each time to arrive at the best approximating secular model given computational restrictions. In the present subsection, we aim to investigate the behavior of the remainder in a closed-form normalization with uniform choice of all truncation orders of the problem, but performed, instead, in a fine grid ($100\times 20$) of reference values in the plane $(a_*,e_*)$. To this end, we set $k_\mu=2$ (second order in the mass parameter), and fix $k_\text{mp}=3$ (octupole approximation). The latter choice, imposed by computational restrictions, yields an initial model whose error with respect to the full Hamiltonian becomes of the order of $1\%$ only for $a_*>2a_1$. However, for reasons explained below, a computation within the framework of the octupole approximation becomes relevant to the problem addressed in the sequel also in the range $1.5a_1<a_*<2a_1$, while higher multipoles are required to address still smaller values of $a_*$. 

The result of the above computation is summarized in Fig. \ref{fig:err2d}: the left panel shows in logarithmic color scale the size of the remainder, estimated by the value of $\mathscr{E}^{(n)}(a_*,e_*)$ computed as in \eqref{eqn:Rbound}, corresponding to each point in the plane $(a_*,e_*)$, where the number of normalization steps is set as $n=\min\{\nu(k_{\mu}-1),7\}=\min\{\nu,7\}$. The maximum value $n=7$ is, again, imposed by computational restrictions, and it implies that $n$ varies with $e_*$ up to about $e_*=0.37$. 

The relevant information in Fig. \ref{fig:err2d} is provided by the black curve, which corresponds to the isocontour 
$\mathscr{E}^{(n)}(a_*,e_*)=10^{-2}$. Since in the original Hamiltonian we have the estimate $\mathscr{E}^{(0)}(a,e)\coloneqq\mathcal{H}_1^{\le k_{\mu},k_{\text{mp}}}=\mathcal{O}(10^{-2})$, the black curve provides a rough estimate of the limiting border dividing the plane $(a_*,e_*)$ in two domains: in the one below the black curve the progressive elimination of the fast angles by the iterative normalization steps leads to a secular model whose remainder decreases with the number of normalization steps $j$ at least up to $j=n$.  

A physical interpretation of the border approximated through the isocontour $\mathscr{E}^{(n)}(a_*,e_*)=10^{-2}$ can be given through a comparison with a numerical stability map obtained, e.g., as in the right panel of Fig. \ref{fig:err2d}. For each trajectory in a $300\times900$ grid in $(a,e)$, the plot shows in color scale the value of the Fast Lyapunov Indicator (FLI, see \cite{lega2016theory} for a review) obtained after integrating the variational equations of motion together with the equations of motion of the full Hamiltonian model for a time equal to $50$ periods of Jupiter. Thus, deep blue colors indicate the most regular, and light yellow the most chaotic orbits as identified by the value of the FLI. Superposed to the FLI cartography are three curves: 
\begin{enumerate}[label=(\roman*)]
\item\label{item:aps} the `perihelion crossing curve' (red) yields the locus of values satisfying the condition $a(1-e)=\norm{r_J}=a_J$ (in the circular case), that is the points where the pericenter of the test particle's orbit comes at distance equal to the radius of Jupiter's orbit;
\item\label{item:Hill} the Hill limit \cite{ramos2015resonance} (brown) is based on the relationship $C_{\text{Jac}}(a,e)=C_{\text{Jac}}(\mathscr{L}_1)$, where $C_{\text{Jac}}$ is the particle's Jacobi constant as function of the orbital elements and $C_{\text{Jac}}(\mathscr{L}_1)$ its value at the Lagrangian point $\mathscr{L}_1$;
\item\label{item:1perc} the isocontour $\mathscr{E}^{(n)}(a,e)=10^{-2}$ (black, same as in the left panel of Fig. \ref{fig:err2d}). 
\end{enumerate}
%----------------------------------------------------------------------------------------------------
\begin{figure}
	\centering
	\includegraphics[scale=0.7]{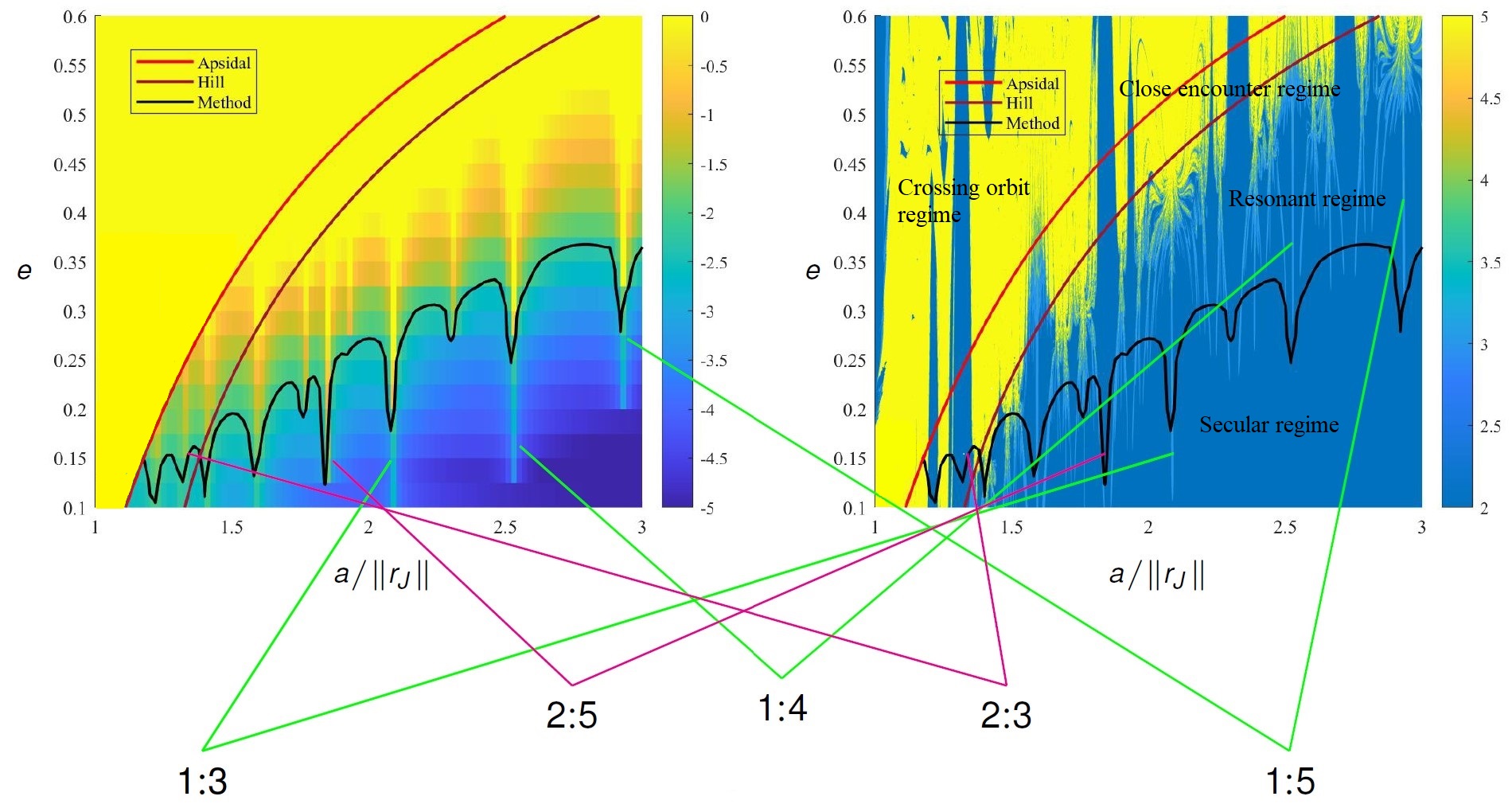}
	\caption{Left panel: computation of $\log_{10}(\mathscr{E}^{(n)})$, $n=\min\{\nu,7\}$, $k_{\text{mp}}=3$, over a $100\times20$ $(a,e)$ grid. For every $e=e_*$, $n$ different normalizations are executed and then evaluated for each $a=a_*$. Right panel: short-period FLI map over a $300\times900$ $(a,e)$ grid of initial data integrated for $50T_1$. As indicated, the three curves represent, respectively, the line of constant pericenter of the particle's trajectory equal to the radius of Jupiter's orbit $\norm{r_1}=\norm{r_J}$ (red), Hill's stability criterion (brown) and the isolevel $\mathscr{E}^{(n)}=1\%$ (black). Each region enclosed by two consecutive above curves is labeled with the corresponding regime of motion. The main mean-motion resonances are reported below the pictures.}
	\label{fig:err2d}
\end{figure}
%----------------------------------------------------------------------------------------------------

Of the above three curves, the perihelion crossing curve is analogous, in the R3BP, of the so-called Angular Momentum Deficit criterion (AMD, \cite{laskar2017amd}) used to separate systems protected from perihelia crossings in the case of the full planetary three-body problem. As indicated by the FLI cartography data, Hill's curve gives an overall better approximation separating the domain of strong chaos (yellow) from the domain of regular or weakly-chaotic orbits (all blue nuances). This is expected, since the Hill's curve separates orbits for which Jupiter's gravitational effect becomes (at least temporarily) dominant from those for which it does not. Nevertheless, through the FLI cartography we note the presence of a large domain between the curves \ref{item:Hill} and \ref{item:1perc}, where the trajectories, while protected from close encounters, are subject to the long term effects on dynamics produced by resonant multiplets associated with the mean-motion resonances of the problem (the most important of which are marked in the figure). Note that in the octupole approximation, the Hamiltonian contains harmonics including all combinations of the fast angles of the form $\cos(s_1f + s_2(g - M_1))$, with 
$$
(s_1,s_2)=(1,3),(2,3),(3,3),(4,3),(5,3),(6,3),(7,3),   
$$
$$
(1,2),(2,2),(3,2),(4,2),(5,2),   (1,1),(2,1),(3,1),(4,1),(5,1),   
$$
$$
(1,-1),(2,-1),(3,-1),   (1,-2),(1,-3),
$$
thus including all harmonics associated with the mean-motion resonances detected in the FLI cartography of Fig. \ref{fig:err2d} for $a>1.5a_J$.  Through the closed-form normalization (Eqs.(\ref{eqn:chi1}) and (\ref{eqn:chij})) we then obtain small divisors in the series at every value of the semi-major axis $a_*$ for which one of the resonant combinations $s_1n_*-s_2n_J$, $n_J=n_1$, takes a value near zero. All these incidences lead to Arnold tongue-like spikes pointing downwards in the curve \ref{item:1perc}, marking the failure of the approximation of the orbits based on a \textit{non-resonant} normal form construction. On the other hand, we observe that, for any value of $a_*$ there is a threshold value of the eccentricity $e_{*,s}$, such that, for $e_*<e_{*,s}$ no visible effects of the harmonics associated with mean-motion resonances are visible in the FLI cartography. This implies that the secular models constructed by eliminating all harmonics involving the fast angles of the problem describe with good precision the dynamics in this domain, called, for this reason, the \textit{domain of secular motions}. In physical terms, the domain of secular motions corresponds to initial conditions for which the gravitational perturbation of Jupiter is only felt in the `Laplacian' meaning, i.e., as a mass distributed along a ring coinciding with Jupiter's orbit. The curve \ref{item:1perc} then yields the limit of this domain, which, as found by the FLI cartography, is well distinct from the limit of the Hill domain.\\
The overall situation can therefore be summarized with the identification of four regimes of motion (specified in the FLI chart):
\begin{itemize}
	\item the `crossing orbit regime' (above curve (i));
	\item the `close encounter regime' (between curves (i) and (ii));
	\item the `resonant regime' (between curves (ii) and (iii));
	\item the `secular regime' (below curve (iii)).
\end{itemize}

% Section 4: Conclusion
%%%%%%%%%%%%%%%%%%%%%%%%%%%%%%%%%%%%%%%%%%%%%%%%%%%%%%%%%%%%%%%
\section{Conclusions}
\label{sec:conc}
%%%%%%%%%%%%%%%%%%%%%%%%%%%%%%%%%%%%%%%%%%%%%%%%%%%%%%%%%%%%%%%
In summary, in the present paper we have proposed a closed-form method for the derivation of secular Hamiltonian models (normal forms) with a small (albeit finite minimum) remainder applicable to the R3BP in the case when the particle's trajectory is exterior to the trajectory of the primary perturber. Also, using this method we were led to the definition of a new heuristic limit separating the motions whose character is `secular', i.e., not affected by short-period effects, from the rest of motions in the R3BP. In particular:

\begin{enumerate}
\item 
Section \ref{sec:theory} develops the formal aspects of the method, which heavily relies on the use of a book-keeping parameter to simultaneously account for all small quantities of the problem as they appear not only in the Hamiltonian and Lie generating functions, but also in the closed-form version of all formulas involved in the Poisson algebra between the Delaunay canonical variables of the problem. A rigorous demonstration of the \textit{consistency} of the method is then given through Propositions \ref{prop:hotstep1}, \ref{prop:normj} and \ref{prop:adjust}, which also estabilish the explicit formulas for the implementation of one iterative step of the closed-form normalization algorithm. 

\item 
Section 3 gives numerical examples of the implementation and precision of the algorithm in the spatial elliptic, as well as in the planar circular R3BP, examining, also numerically, the method's convergence properties. The effect of choosing different truncation orders (in powers of the mass parameter $\mu$ or in the multipole expansion) is discussed, along with several simplifications to the normalization procedure which hold in the circular case. The essentially asymptotic character of the series is established through numerical examples, showing the existence of an optimal number of normalization steps, after which the size of the remainder becomes the minimum possible. 

\item 
A key aspect of the above presented method lies in the possibility to exploit the behavior of the size of the remainder as a function of the number of normalizing steps in order to obtain a clear separation of two well-distinct domains, as also identified by purely numerical (FLI cartography) means: one, called the \textit{domain of secular motions} corresponds to the domain where the harmonics in the Hamiltonian associated with resonant combinations of the fast angles (anomalies) of the problem produce no dynamical effect on the orbits visible at the level of the FLI cartography. From the semi-analytical point of view, this turns to be the domain where a non-resonant construction as the one proposed in section \ref{sec:theory} produces no (nearly-)resonant divisors up to the optimal normalization step. As a consequence, only the angles associated with the motions of the perihelion and of the line of nodes survive in the final normal form. We show numerically how to use the information on the size of the normal form remainder in order to determine semi-analytically the border of the domain of secular motions in the case of the Sun-Jupiter system. We finally give evidence that this border is well distinct from the border of the domains defined either by the Hill stability or by the perihelion crossing criterion. 

\end{enumerate}

% Section 5: Appendices
\appendix
\vspace{1cm}
\noindent{\Large\bf Appendix}

\section{Computation of Poisson bracket's intermediate derivatives}
\label{sec:appendixder}
Derivatives \eqref{eqn:dfdl}--\eqref{eqn:diotasdH} are computed combining adequately definitions \eqref{eqn:Del}, the polar relationship \eqref{eqn:normR}, including its alternative expression involving the eccentric anomaly $E$
\begin{equation}
\label{eqn:normRE}
\norm{R}=a(1-e\cos E)\;,
\end{equation}
$\norm{r_1}$ via \eqref{eqn:r1} (analogous to \eqref{eqn:normRE}), Kepler's equations 
\begin{equation}
\label{eqn:keplers}
\ell=E-e\sin E\;,\quad\quad M_1=E_1-e_1\sin E_1\;,
\end{equation}
and the trigonometric equalities
\begin{equation}
\label{eqn:trigon}
\cos f=\frac{\cos E-e}{1-e\cos E}\;,\quad\quad\sin f=\frac{\eta\sin E}{1-e\cos E}\;.
\end{equation}
Eq.\eqref{eqn:dfdl} comes from \eqref{eqn:normRE} and \eqref{eqn:normR} by total differentiation with respect to $\ell$:
\begin{equation*}
\frac{\text{d}}{\text{d}\ell}\norm{R}\stackrel{\text{\eqref{eqn:normRE}}}{=}
\frac{\partial\norm{R}}{\partial E}\frac{\partial E}{\partial \ell}=\frac{ae\sin E}{1-e\cos E}\stackrel{\text{\eqref{eqn:normR}}}{=}\frac{\partial\norm{R}}{\partial f}\frac{\partial f}{\partial\ell}=\frac{a\eta^2e\sin f}{(1+e\cos f)^2}\frac{\partial f}{\partial\ell}\;,
\end{equation*}
since $a$, $e$ do not depend on $\ell$, where $\partial E/\partial\ell$ is deduced from the first of \eqref{eqn:keplers} making use of the derivative of inverse functions ($\partial\ell/\partial E\neq0$ is ensured). Thus the result by \eqref{eqn:trigon}.\\
Eqs.\eqref{eqn:dr1dE1}, \eqref{eqn:dE1dM1} are straightforwardly yielded taking respectively ordinary differentiation and the inverse derivative once again of $\text{d} M_1/\text{d} E_1\neq0$ from the second of \eqref{eqn:keplers}:
\begin{equation*}
\frac{\text{d} E_1}{\text{d}M_1}=\frac{1}{1-e_1\cos E_1}=\frac{a_1}{\norm{r_1}}\;.
\end{equation*} 
Now solving for $e$ in \eqref{eqn:Del} and partially differentiating, we immediately have Eqs.\eqref{eqn:deddeltaL} and \eqref{eqn:dedG}, from which Eqs.\eqref{eqn:detaddeltaL}, \eqref{eqn:detadG} as
\begin{equation*}
\frac{\partial\eta}{\partial\delta L}=-\frac{e}{\eta}\frac{\partial e}{\partial\delta L}=-\frac{\eta}{L}\;,\quad\quad\frac{\partial\eta}{\partial G}=-\frac{e}{\eta}\frac{\partial e}{\partial G}=\frac1L\;.
\end{equation*}
The true anomaly derivatives with respect to the actions are slightly more elaborated. Employing \eqref{eqn:trigon},
\begin{equation*}
-\sin f\frac{\partial f}{\partial\delta L}=\frac{\partial}{\partial\delta L}\cos f=\frac{\partial}{\partial e}\left(\frac{\cos E-e}{1-e\cos E}\right)\frac{\partial e}{\partial\delta L}+\frac{\partial}{\partial E}\left(\frac{\cos E-e}{1-e\cos E}\right)\frac{\partial E}{\partial\delta L}\;,
\end{equation*}
that leads upon simplifications to
\begin{equation*}
\frac{\partial f}{\partial\delta L}=\frac{\sin f}{eL}+\frac{1+e\cos f}{\eta}\frac{\partial E}{\partial\delta L}\;;
\end{equation*}
finally we explicit $\partial E/\partial\delta L$ exploiting 
the corresponding Kepler equation \eqref{eqn:keplers} and the inter-independence $\ell,\delta L$ by conjugacy:
\begin{equation*}
0=\frac{\text{d}}{\text{d}\delta L}(E-e\sin E)=\frac{\partial E}{\partial\delta L}-\frac{\partial e}{\partial\delta L}\sin E-e\cos E\frac{\partial E}{\partial\delta L}\;\implies\;\frac{\partial E}{\partial\delta L}=\frac{\eta\sin f}{eL}\;,
\end{equation*} 
thereby Eq.\eqref{eqn:dfddeltaL}.\\
The relation for $\partial f/\partial G$ is achieved precisely in the same manner, so one finds out
\begin{equation*}
\frac{\partial f}{\partial G}=-\frac{\sin f}{\eta eL}+\frac{1+e\cos f}{\eta}\frac{\partial E}{\partial G}\;,\quad\quad\frac{\partial E}{\partial G}=-\frac{\sin f}{eL}\;,
\end{equation*}
that is Eq.\eqref{eqn:dfdG}.\\
Finally, derivatives \eqref{eqn:diotacdG}, \eqref{eqn:diotacdH} involving $\iota_c=\cos i$ easily follow again by partial differentiation in \eqref{eqn:Del} with respect to $G$ and $H$ respectively; while for those containing $\iota_s=\sin i$ we can rely, for example, to the identity $\sin^2 i+\cos^2 i=1$:
\begin{equation*}
0=2\sin i\frac{\partial\iota_s}{\partial G}+2\cos i\frac{\partial\iota_c}{\partial G}\;
\end{equation*}
and consequently Eq.\eqref{eqn:diotasdG} provided $\sin i\neq 0$, as well as Eq.\eqref{eqn:diotasdH} repeating the same argument with the variable $H$.

\section{Example of normalization for a $\boldsymbol{\mu^2}$ quadrupolar expansion}
\label{sec:appendixex}
Consider the following toy model Hamiltonian with $k_{\mu}=k_{\text{mp}}=\nu=2$, $\nu_1=1$, according to conventions introduced in \S\ref{subsubsec:prelim}:
\begin{equation*}
\mathscr{H}^{(0)}=\mathscr{Z}_0+\mathscr{R}_{2,2}^{(0)}+\mathscr{R}_{2,3}^{(0)}+\mathscr{R}_{2,4}^{(0)}\;,
\end{equation*}
where
\begin{multline*}
\mathscr{R}^{(0)}_{2,2}=\sigma ^2 \Bigg(-\frac{3 a_1^3 \mathcal{G}^4 \mu  m_0^4 \iota _c^2 \cos \left(2 \left(E_1-f-g-h\right)\right)}{16 L_*^6 \norm{r_1}}\\
-\frac{3 a_1^3 \mathcal{G}^4 \mu  m_0^4 \iota _c^2 \cos \left(2
	\left(E_1+f+g-h\right)\right)}{16 L_*^6 \norm{r_1}}-\frac{3 a_1^3 \mathcal{G}^4 \mu  m_0^4 \iota _c \cos \left(2 \left(E_1-f-g-h\right)\right)}{8 L_*^6 \norm{r_1}}\\
+\frac{3 a_1^3 \mathcal{G}^4 \mu  m_0^4 \iota _c
	\cos \left(2 \left(E_1+f+g-h\right)\right)}{8 L_*^6 \norm{r_1}}+\frac{3 a_1^3 \mathcal{G}^4 \mu  m_0^4 \iota _c^2 \cos \left(2 \left(E_1-h\right)\right)}{8 L_*^6 \norm{r_1}}\\
+\frac{3 a_1^3 \mathcal{G}^4 \mu  m_0^4
	\iota _c^2 \cos (2 (f+g))}{8 L_*^6 \norm{r_1}}-\frac{3 a_1^3 \mathcal{G}^4 \mu  m_0^4 \iota _c^2}{8 L_*^6 \norm{r_1}}
-\frac{3 a_1^3 \mathcal{G}^4 \mu  m_0^4 \cos \left(2 \left(E_1-f-g-h\right)\right)}{16 L_*^6 \norm{r_1}}\\
-\frac{3
	a_1^3 \mathcal{G}^4 \mu  m_0^4 \cos \left(2 \left(E_1+f+g-h\right)\right)}{16 L_*^6 \norm{r_1}}-\frac{3 a_1^3 \mathcal{G}^4 \mu  m_0^4 \cos \left(2 \left(E_1-h\right)\right)}{8 L_*^6 \norm{r_1}}\\
-\frac{3 a_1^3 \mathcal{G}^4
	\mu  m_0^4 \cos (2 (f+g))}{8 L_*^6 \norm{r_1}}+\frac{a_1^3 \mathcal{G}^4 \mu  m_0^4}{8 L_*^6 \norm{r_1}}
-\frac{3 a_1 \delta L^2 \mathcal{G}^2 m_0^2}{2 L_*^4 \norm{r_1}}
-\frac{a_1 \mathcal{G}^2 \mu  m_0^2}{L_*^2 \norm{r_1}}\Bigg)\;.
\end{multline*}
The first step $j=1$ of the method aims precisely at normalizing $\mathscr{R}_{2,2}^{(0)}$ via \eqref{eqn:firsthom} solved by
\begin{multline*}
\chi_{2}^{(1)}=\sigma ^3\Bigg(\frac{3 \mathcal{G}^4 \mu  a_1^2 \iota _c^2 \phi _1 n_*^2 m_0^4}{8 n_1 L_*^6 \left(n_1^2-n_*^2\right)}-\frac{\mathcal{G}^4 \mu  a_1^2 \phi _1 n_*^2 m_0^4}{8 n_1 L_*^6 \left(n_1^2-n_*^2\right)}\\
-\frac{3 \mathcal{G}^4 \mu  a_1^2 n_1 \iota _c^2 \phi _1 m_0^4}{8 L_*^6 \left(n_1^2-n_*^2\right)}+\frac{\mathcal{G}^4 \mu  a_1^2 n_1 \phi _1 m_0^4}{8 L_*^6 \left(n_1^2-n_*^2\right)}+\frac{\mathcal{G}^2 \mu  \phi _1 n_*^2 m_0^2}{n_1 L_*^2 \left(n_1^2-n_*^2\right)}\\
+\frac{3 \mathcal{G}^2 \delta L^2 \phi _1 n_*^2 m_0^2}{2 n_1 L_*^4 \left(n_1^2-n_*^2\right)}-\frac{\mathcal{G}^2 \mu  n_1 \phi _1 m_0^2}{L_*^2 \left(n_1^2-n_*^2\right)}-\frac{3 \mathcal{G}^2 \delta L^2 n_1 \phi _1 m_0^2}{2 L_*^4 \left(n_1^2-n_*^2\right)}\Bigg)\\
+ \sigma ^2\Bigg(-\frac{3 \mathcal{G}^4 \mu  \sin \left(2 \left(E_1-h\right)\right) a_1^2 \iota _c^2 n_*^2 m_0^4}{16 n_1 L_*^6 \left(n_1^2-n_*^2\right)}+\frac{3 \mathcal{G}^4 \mu  \sin \left(2 \left(E_1-h\right)\right) a_1^2 n_*^2 m_0^4}{16 n_1 L_*^6 \left(n_1^2-n_*^2\right)}\\
+\frac{3 \mathcal{G}^4 \mu  \sin (2 (f+g)) a_1^2 n_* m_0^4}{16 L_*^6 \left(n_1^2-n_*^2\right)}-\frac{3 \mathcal{G}^4 \mu  \sin \left(2 \left(-f-g-h+E_1\right)\right) a_1^2 n_* m_0^4}{32 L_*^6 \left(n_1^2-n_*^2\right)}\\
+\frac{3 \mathcal{G}^4 \mu  \sin \left(2 \left(f+g-h+E_1\right)\right) a_1^2 n_* m_0^4}{32 L_*^6 \left(n_1^2-n_*^2\right)}-\frac{3 \mathcal{G}^4 \mu  \sin (2 (f+g)) a_1^2 \iota _c^2 n_* m_0^4}{16 L_*^6 \left(n_1^2-n_*^2\right)}\\
-\frac{3 \mathcal{G}^4 \mu  \sin \left(2 \left(-f-g-h+E_1\right)\right) a_1^2 \iota _c^2 n_* m_0^4}{32 L_*^6 \left(n_1^2-n_*^2\right)}+\frac{3 \mathcal{G}^4 \mu  \sin \left(2 \left(f+g-h+E_1\right)\right) a_1^2 \iota _c^2 n_* m_0^4}{32 L_*^6 \left(n_1^2-n_*^2\right)}\\
-\frac{3 \mathcal{G}^4 \mu  \sin \left(2 \left(-f-g-h+E_1\right)\right) a_1^2 \iota _c n_* m_0^4}{16 L_*^6 \left(n_1^2-n_*^2\right)}-\frac{3 \mathcal{G}^4 \mu  \sin \left(2 \left(f+g-h+E_1\right)\right) a_1^2 \iota _c n_* m_0^4}{16 L_*^6 \left(n_1^2-n_*^2\right)}\\
+\frac{3 \mathcal{G}^4 \mu  \sin \left(2 \left(E_1-h\right)\right) a_1^2 n_1 \iota _c^2 m_0^4}{16 L_*^6 \left(n_1^2-n_*^2\right)}-\frac{3 \mathcal{G}^4 \mu  \sin \left(2 \left(-f-g-h+E_1\right)\right) a_1^2 n_1 \iota _c^2 m_0^4}{32 L_*^6 \left(n_1^2-n_*^2\right)}\\
-\frac{3 \mathcal{G}^4 \mu  \sin \left(2 \left(f+g-h+E_1\right)\right) a_1^2 n_1 \iota _c^2 m_0^4}{32 L_*^6 \left(n_1^2-n_*^2\right)}-\frac{3 \mathcal{G}^4 \mu  \sin \left(2 \left(E_1-h\right)\right) a_1^2 n_1 m_0^4}{16 L_*^6 \left(n_1^2-n_*^2\right)}\\
-\frac{3 \mathcal{G}^4 \mu  \sin \left(2 \left(-f-g-h+E_1\right)\right) a_1^2 n_1 m_0^4}{32 L_*^6 \left(n_1^2-n_*^2\right)}-\frac{3 \mathcal{G}^4 \mu  \sin \left(2 \left(f+g-h+E_1\right)\right) a_1^2 n_1 m_0^4}{32 L_*^6 \left(n_1^2-n_*^2\right)}\\
-\frac{3 \mathcal{G}^4 \mu  \sin \left(2 \left(-f-g-h+E_1\right)\right) a_1^2 n_1 \iota _c m_0^4}{16 L_*^6 \left(n_1^2-n_*^2\right)}+\frac{3 \mathcal{G}^4 \mu  \sin \left(2 \left(f+g-h+E_1\right)\right) a_1^2 n_1 \iota _c m_0^4}{16 L_*^6 \left(n_1^2-n_*^2\right)}\\
-\frac{3 \mathcal{G}^4 \mu  \sin (2 (f+g)) a_1^2 n_1^2 m_0^4}{16 L_*^6 n_*\left(n_1^2-n_*^2\right)}+\frac{3 \mathcal{G}^4 \mu  \sin (2 (f+g)) a_1^2 n_1^2 \iota _c^2 m_0^4}{16 L_*^6 n_*\left(n_1^2-n_*^2\right)}\Bigg)\;,
\end{multline*}
so that the new truncated Hamiltonian becomes
\begin{equation*}
\mathscr{H}^{(1)}=\mathscr{Z}_0+\mathscr{Z}^{(1)}_2+\mathscr{R}^{(1)}_{3,3}+\mathscr{R}^{(1)}_{3,4}\;,
\end{equation*}
with
\begin{equation*}
\mathscr{Z}_{2}^{(1)}=\sigma ^2 \left(-\frac{3 a_1^2 \mathcal{G}^4 \mu  m_0^4 \iota _c^2}{8 L_*^6}+\frac{a_1^2 \mathcal{G}^4 \mu  m_0^4}{8 L_*^6}-\frac{3 \delta L^2 \mathcal{G}^2 m_0^2}{2 L_*^4}-\frac{\mathcal{G}^2 \mu  m_0^2}{L_*^2}\right)
\end{equation*}
and
\begin{multline*}
\mathscr{R}^{(1)}_{3,3}=\sigma ^3 \Bigg(-\frac{3 e \mathcal{G}^4 \mu  \cos \left(f+2 g+2 h-2 E_1\right) a_1^3 \iota _c^2 n_* m_0^4}{8 \eta ^3 \norm{r_1} L_*^6 \left(2 n_1-2 n_*\right)}\\
-\frac{3 e \mathcal{G}^4 \mu  \cos \left(3 f+2 g+2 h-2 E_1\right)
	a_1^3 \iota _c^2 n_* m_0^4}{8 \eta ^3 \norm{r_1} L_*^6 \left(2 n_1-2 n_*\right)}+\frac{3 \mathcal{G}^4 \mu  \cos \left(2 f+2 g+2 h-3 E_1\right) a_1^3 e_1 \iota _c^2 n_* m_0^4}{16 \norm{r_1} L_*^6 \left(2 n_1-2
	n_*\right)}\\
+\frac{3 \mathcal{G}^4 \mu  \cos \left(2 f+2 g+2 h-E_1\right) a_1^3 e_1 \iota _c^2 n_* m_0^4}{16 \norm{r_1} L_*^6 \left(2 n_1-2 n_*\right)}-\frac{3 e \mathcal{G}^4 \mu  \cos \left(f+2 g+2 h-2 E_1\right) a_1^3
	\iota _c n_* m_0^4}{4 \eta ^3 \norm{r_1} L_*^6 \left(2 n_1-2 n_*\right)}\\
-\frac{3 e \mathcal{G}^4 \mu  \cos \left(3 f+2 g+2 h-2 E_1\right) a_1^3 \iota _c n_* m_0^4}{4 \eta ^3 \norm{r_1} L_*^6 \left(2 n_1-2
	n_*\right)}+\frac{3 \mathcal{G}^4 \mu  \cos \left(2 f+2 g+2 h-3 E_1\right) a_1^3 e_1 \iota _c n_* m_0^4}{8 \norm{r_1} L_*^6 \left(2 n_1-2 n_*\right)}\\
+\frac{3 \mathcal{G}^4 \mu  \cos \left(2 f+2 g+2 h-E_1\right) a_1^3 e_1
	\iota _c n_* m_0^4}{8 \norm{r_1} L_*^6 \left(2 n_1-2 n_*\right)}-\frac{3 e \mathcal{G}^4 \mu  \cos \left(f+2 g+2 h-2 E_1\right) a_1^3 n_* m_0^4}{8 \eta ^3 \norm{r_1} L_*^6 \left(2 n_1-2 n_*\right)}\\
-\frac{3 e \mathcal{G}^4 \mu  \cos
	\left(3 f+2 g+2 h-2 E_1\right) a_1^3 n_* m_0^4}{8 \eta ^3 \norm{r_1} L_*^6 \left(2 n_1-2 n_*\right)}+\frac{3 \mathcal{G}^4 \mu  \cos \left(2 f+2 g+2 h-3 E_1\right) a_1^3 e_1 n_* m_0^4}{16 \norm{r_1} L_*^6 \left(2 n_1-2
	n_*\right)}\\
+\frac{3 \mathcal{G}^4 \mu  \cos \left(2 f+2 g+2 h-E_1\right) a_1^3 e_1 n_* m_0^4}{16 \norm{r_1} L_*^6 \left(2 n_1-2 n_*\right)}+\frac{3 e \mathcal{G}^4 \mu  \cos \left(f+2 g-2 h+2 E_1\right) a_1^3 \iota _c^2 n_*
	m_0^4}{8 \eta ^3 \norm{r_1} L_*^6 \left(2 n_1+2 n_*\right)}\\
+\frac{3 e \mathcal{G}^4 \mu  \cos \left(3 f+2 g-2 h+2 E_1\right) a_1^3 \iota _c^2 n_* m_0^4}{8 \eta ^3 \norm{r_1} L_*^6 \left(2 n_1+2 n_*\right)}-\frac{3 \mathcal{G}^4 \mu
	\cos \left(2 f+2 g-2 h+E_1\right) a_1^3 e_1 \iota _c^2 n_* m_0^4}{16 \norm{r_1} L_*^6 \left(2 n_1+2 n_*\right)}\\
-\frac{3 \mathcal{G}^4 \mu  \cos \left(2 f+2 g-2 h+3 E_1\right) a_1^3 e_1 \iota _c^2 n_* m_0^4}{16 \norm{r_1}
	L_*^6 \left(2 n_1+2 n_*\right)}-\frac{3 e \mathcal{G}^4 \mu  \cos \left(f+2 g-2 h+2 E_1\right) a_1^3 \iota _c n_* m_0^4}{4 \eta ^3 \norm{r_1} L_*^6 \left(2 n_1+2 n_*\right)}\\
-\frac{3 e \mathcal{G}^4 \mu  \cos \left(3 f+2 g-2
	h+2 E_1\right) a_1^3 \iota _c n_* m_0^4}{4 \eta ^3 \norm{r_1} L_*^6 \left(2 n_1+2 n_*\right)}+\frac{3 \mathcal{G}^4 \mu  \cos \left(2 f+2 g-2 h+E_1\right) a_1^3 e_1 \iota _c n_* m_0^4}{8 \norm{r_1} L_*^6 \left(2 n_1+2
	n_*\right)}\\
+\frac{3 \mathcal{G}^4 \mu  \cos \left(2 f+2 g-2 h+3 E_1\right) a_1^3 e_1 \iota _c n_* m_0^4}{8 \norm{r_1} L_*^6 \left(2 n_1+2 n_*\right)}+\frac{3 e \mathcal{G}^4 \mu  \cos \left(f+2 g-2 h+2 E_1\right) a_1^3 n_*
	m_0^4}{8 \eta ^3 \norm{r_1} L_*^6 \left(2 n_1+2 n_*\right)}\\
+\frac{3 e \mathcal{G}^4 \mu  \cos \left(3 f+2 g-2 h+2 E_1\right) a_1^3 n_* m_0^4}{8 \eta ^3 \norm{r_1} L_*^6 \left(2 n_1+2 n_*\right)}-\frac{3 \mathcal{G}^4 \mu  \cos
	\left(2 f+2 g-2 h+E_1\right) a_1^3 e_1 n_* m_0^4}{16 \norm{r_1} L_*^6 \left(2 n_1+2 n_*\right)}\\
-\frac{3 \mathcal{G}^4 \mu  \cos \left(2 f+2 g-2 h+3 E_1\right) a_1^3 e_1 n_* m_0^4}{16 \norm{r_1} L_*^6 \left(2 n_1+2
	n_*\right)}-\frac{9 e \mathcal{G}^4 \mu  \cos (f) a_1^3 \iota _c^2 m_0^4}{8 \norm{r_1} L_*^6}\\
+\frac{9 e \mathcal{G}^4 \mu  \cos (f+2 g) a_1^3 \iota _c^2 m_0^4}{16 \norm{r_1} L_*^6}-\frac{3 e \mathcal{G}^4 \mu  \cos (f+2 g) a_1^3 \iota _c^2
	m_0^4}{8 \eta ^3 \norm{r_1} L_*^6}\\
+\frac{9 e \mathcal{G}^4 \mu  \cos (3 f+2 g) a_1^3 \iota _c^2 m_0^4}{16 \norm{r_1} L_*^6}-\frac{3 e \mathcal{G}^4 \mu  \cos (3 f+2 g) a_1^3 \iota _c^2 m_0^4}{8 \eta ^3 \norm{r_1} L_*^6}\\
+\frac{9 e \mathcal{G}^4 \mu 
	\cos \left(f+2 h-2 E_1\right) a_1^3 \iota _c^2 m_0^4}{16 \norm{r_1} L_*^6}-\frac{9 e \mathcal{G}^4 \mu  \cos \left(f+2 g+2 h-2 E_1\right) a_1^3 \iota _c^2 m_0^4}{32 \norm{r_1} L_*^6}\\
-\frac{9 e \mathcal{G}^4 \mu  \cos \left(3 f+2
	g+2 h-2 E_1\right) a_1^3 \iota _c^2 m_0^4}{32 \norm{r_1} L_*^6}+\frac{9 e \mathcal{G}^4 \mu  \cos \left(f-2 h+2 E_1\right) a_1^3 \iota _c^2 m_0^4}{16 \norm{r_1} L_*^6}\\
-\frac{9 e \mathcal{G}^4 \mu  \cos \left(f+2 g-2 h+2 E_1\right)
	a_1^3 \iota _c^2 m_0^4}{32 \norm{r_1} L_*^6}-\frac{9 e \mathcal{G}^4 \mu  \cos \left(3 f+2 g-2 h+2 E_1\right) a_1^3 \iota _c^2 m_0^4}{32 \norm{r_1} L_*^6}\\
\allowdisplaybreaks
-\frac{3 \mathcal{G}^4 \mu  \cos \left(2 h-3 E_1\right) a_1^3 e_1 \iota _c^2
	m_0^4}{16 \norm{r_1} L_*^6}+\frac{3 \mathcal{G}^4 \mu  \cos \left(2 f+2 g+2 h-3 E_1\right) a_1^3 e_1 \iota _c^2 m_0^4}{32 \norm{r_1} L_*^6}\\
-\frac{3 \mathcal{G}^4 \mu  \cos \left(2 f+2 g-E_1\right) a_1^3 e_1 \iota _c^2 m_0^4}{8 \norm{r_1}
	L_*^6}-\frac{15 \mathcal{G}^4 \mu  \cos \left(2 h-E_1\right) a_1^3 e_1 \iota _c^2 m_0^4}{16 \norm{r_1} L_*^6}\\
+\frac{15 \mathcal{G}^4 \mu  \cos \left(2 f+2 g+2 h-E_1\right) a_1^3 e_1 \iota _c^2 m_0^4}{32 \norm{r_1} L_*^6}+\frac{9
	\mathcal{G}^4 \mu  \cos \left(E_1\right) a_1^3 e_1 \iota _c^2 m_0^4}{8 \norm{r_1} L_*^6}\\
-\frac{3 \mathcal{G}^4 \mu  \cos \left(2 f+2 g+E_1\right) a_1^3 e_1 \iota _c^2 m_0^4}{8 \norm{r_1} L_*^6}+\frac{15 \mathcal{G}^4 \mu  \cos \left(2 f+2
	g-2 h+E_1\right) a_1^3 e_1 \iota _c^2 m_0^4}{32 \norm{r_1} L_*^6}\\
+\frac{3 \mathcal{G}^4 \mu  \cos \left(2 f+2 g-2 h+3 E_1\right) a_1^3 e_1 \iota _c^2 m_0^4}{32 \norm{r_1} L_*^6}-\frac{9 e \mathcal{G}^4 \mu  \cos \left(f+2 g+2 h-2
	E_1\right) a_1^3 \iota _c m_0^4}{16 \norm{r_1} L_*^6}\\
-\frac{9 e \mathcal{G}^4 \mu  \cos \left(3 f+2 g+2 h-2 E_1\right) a_1^3 \iota _c m_0^4}{16 \norm{r_1} L_*^6}+\frac{9 e \mathcal{G}^4 \mu  \cos \left(f+2 g-2 h+2 E_1\right) a_1^3
	\iota _c m_0^4}{16 \norm{r_1} L_*^6}\\
+\frac{9 e \mathcal{G}^4 \mu  \cos \left(3 f+2 g-2 h+2 E_1\right) a_1^3 \iota _c m_0^4}{16 \norm{r_1} L_*^6}+\frac{3 \mathcal{G}^4 \mu  \cos \left(2 f+2 g+2 h-3 E_1\right) a_1^3 e_1 \iota _c
	m_0^4}{16 \norm{r_1} L_*^6}\\
+\frac{15 \mathcal{G}^4 \mu  \cos \left(2 f+2 g+2 h-E_1\right) a_1^3 e_1 \iota _c m_0^4}{16 \norm{r_1} L_*^6}-\frac{15 \mathcal{G}^4 \mu  \cos \left(2 f+2 g-2 h+E_1\right) a_1^3 e_1 \iota _c m_0^4}{16 \norm{r_1}
	L_*^6}\\
-\frac{3 \mathcal{G}^4 \mu  \cos \left(2 f+2 g-2 h+3 E_1\right) a_1^3 e_1 \iota _c m_0^4}{16 \norm{r_1} L_*^6}+\frac{3 e \mathcal{G}^4 \mu  \cos (f) a_1^3 m_0^4}{8 \norm{r_1} L_*^6}\\
-\frac{9 e \mathcal{G}^4 \mu  \cos (f+2 g) a_1^3
	m_0^4}{16 \norm{r_1} L_*^6}+\frac{3 e \mathcal{G}^4 \mu  \cos (f+2 g) a_1^3 m_0^4}{8 \eta ^3 \norm{r_1} L_*^6}\\
-\frac{9 e \mathcal{G}^4 \mu  \cos (3 f+2 g) a_1^3 m_0^4}{16 \norm{r_1} L_*^6}+\frac{3 e \mathcal{G}^4 \mu  \cos (3 f+2 g) a_1^3 m_0^4}{8
	\eta ^3 \norm{r_1} L_*^6}\\
-\frac{9 e \mathcal{G}^4 \mu  \cos \left(f+2 h-2 E_1\right) a_1^3 m_0^4}{16 \norm{r_1} L_*^6}-\frac{9 e \mathcal{G}^4 \mu  \cos \left(f+2 g+2 h-2 E_1\right) a_1^3 m_0^4}{32 \norm{r_1} L_*^6}\\
-\frac{9 e \mathcal{G}^4 \mu 
	\cos \left(3 f+2 g+2 h-2 E_1\right) a_1^3 m_0^4}{32 \norm{r_1} L_*^6}-\frac{9 e \mathcal{G}^4 \mu  \cos \left(f-2 h+2 E_1\right) a_1^3 m_0^4}{16 \norm{r_1} L_*^6}\\
-\frac{9 e \mathcal{G}^4 \mu  \cos \left(f+2 g-2 h+2 E_1\right) a_1^3
	m_0^4}{32 \norm{r_1} L_*^6}-\frac{9 e \mathcal{G}^4 \mu  \cos \left(3 f+2 g-2 h+2 E_1\right) a_1^3 m_0^4}{32 \norm{r_1} L_*^6}\\
+\frac{3 \mathcal{G}^4 \mu  \cos \left(2 h-3 E_1\right) a_1^3 e_1 m_0^4}{16 \norm{r_1} L_*^6}+\frac{3 \mathcal{G}^4 \mu 
	\cos \left(2 f+2 g+2 h-3 E_1\right) a_1^3 e_1 m_0^4}{32 \norm{r_1} L_*^6}\\
+\frac{3 \mathcal{G}^4 \mu  \cos \left(2 f+2 g-E_1\right) a_1^3 e_1 m_0^4}{8 \norm{r_1} L_*^6}+\frac{15 \mathcal{G}^4 \mu  \cos \left(2 h-E_1\right) a_1^3 e_1
	m_0^4}{16 \norm{r_1} L_*^6}\\
+\frac{15 \mathcal{G}^4 \mu  \cos \left(2 f+2 g+2 h-E_1\right) a_1^3 e_1 m_0^4}{32 \norm{r_1} L_*^6}-\frac{3 \mathcal{G}^4 \mu  \cos \left(E_1\right) a_1^3 e_1 m_0^4}{8 \norm{r_1} L_*^6}\\
+\frac{3 \mathcal{G}^4 \mu  \cos
	\left(2 f+2 g+E_1\right) a_1^3 e_1 m_0^4}{8 \norm{r_1} L_*^6}+\frac{15 \mathcal{G}^4 \mu  \cos \left(2 f+2 g-2 h+E_1\right) a_1^3 e_1 m_0^4}{32 \norm{r_1} L_*^6}\\
+\frac{3 \mathcal{G}^4 \mu  \cos \left(2 f+2 g-2 h+3 E_1\right) a_1^3
	e_1 m_0^4}{32 \norm{r_1} L_*^6}-\frac{e \mathcal{G}^2 \mu  \cos (f) a_1 m_0^2}{\norm{r_1} L_*^2}\\
+\frac{\mathcal{G}^2 \mu  \cos \left(E_1\right) a_1 e_1 m_0^2}{\norm{r_1} L_*^2}+\frac{3 \mathcal{G}^2 \delta L^2 \cos \left(E_1\right) a_1 e_1
	m_0^2}{2 \norm{r_1} L_*^4}\Bigg)\;.
\end{multline*}
Next, we move on with the second and last iteration $j=2$ targeted to $\mathscr{R}^{(1)}_{3,3}$:
\begin{equation*}
\mathscr{H}^{(2)}=\mathscr{Z}_0+\mathscr{Z}^{(1)}_{2}+\mathscr{Z}^{(2)}_{3}+\mathscr{R}_{4,4}^{(2)}\;,
\end{equation*}
in which $\chi_{3}^{(2)}$ is omitted for brevity and
\begin{equation*}
\mathscr{Z}^{(2)}_{3}=0
\end{equation*}
as expected, being $\mathscr{R}^{(1)}_{3,3}$ solely made up of harmonics containing fast angles.\\

\noindent\textbf{Acknowledgements.} C.E. was partially supported by the MIUR-PRIN 20178CJA2B New Frontiers of Celestial Mechanics: Theory and Applications.\\

%%%%%%%%%%%%%%%%%%%%%%%%%%%%%%%%%%%%%%%%%%%%%%%%%%%%%%%%%%%%%%%%%%%%%%%%%%
%\bibliographystyle{abbrv}
%\nocite{*}
%\bibliography{Article_outer_closed_form_R3BP-bib}
\printbibliography[heading=bibintoc]
%%%%%%%%%%%%%%%%%%%%%%%%%%%%%%%%%%%%%%%%%%%%%%%%%%%%%%%%%%%%%%%%%%%%%%%%%%

\end{document}